\theoremstyle{definition}
\newtheorem{theorem}{Theorem}
\newtheorem{proposition}[theorem]{Proposition}
\newtheorem{lemma}[theorem]{Lemma}
\newtheorem{definition}[theorem]{Definition}
\newtheorem{corollary}[theorem]{Corollary}
\newtheorem{remark}[theorem]{Remark}
\newcommand*{\NN}{\mathbb{N}}
\newcommand*{\cB}{\mathcal{B}}
\newcommand*{\cE}{\mathcal{E}}
\newcommand*{\cF}{\mathcal{F}}
\newcommand*{\cH}{\mathcal{H}}
\newcommand*{\cK}{\mathcal{K}}
\newcommand*{\cM}{\mathcal{M}}
\newcommand*{\cN}{\mathcal{N}}
\newcommand*{\cQ}{\mathcal{Q}}
\newcommand*{\cO}{\mathcal{O}}
\newcommand*{\cP}{\mathcal{P}}
\newcommand*{\cS}{\mathcal{S}}
\newcommand*{\cT}{\mathcal{T}}
\newcommand*{\idty}{\openone}
\newcommand*{\id}{\mathrm{id}}
\newcommand*{\half}{\frac{1}{2}}
\newcommand*{\tr}{\mathrm{tr}}
\newcommand*{\ket}[1]{| #1 \rangle}
\newcommand*{\bra}[1]{\langle #1 |}
\newcommand*{\braket}[2]{\langle #1 | #2 \rangle}
\newcommand*{\proj}[1]{\ket{#1}\bra{#1}}
\newcommand\Scp[2]{\ensuremath{\, \langle #1 \,\vert #2 \,\rangle}}
\def\dmu{{{\rm d}\mu}}
\def\bra #1{\langle #1\vert}
\def\ket #1{\vert #1\rangle}
\def\braket #1#2{\langle #1 \vert #2\rangle}
\def\ketbra #1#2{\vert #1\rangle \langle #2\vert}
\def\norm #1{\Vert #1\Vert} 
\newcommand\Hmax[2]{\ensuremath{H_{\max} \left(#1 \vert #2 \right)}}
\newcommand\Hmin[2]{\ensuremath{H_{\min} \left(#1 \vert #2 \right)}}
\newcommand\dmax[2]{\ensuremath{h_{\max} \left(#1 \vert #2 \right)}}
\newcommand\dmin[2]{\ensuremath{h_{\min} \left(#1 \vert #2 \right)}}
	\definecolor{myred}{rgb}{1,0,0}
\definecolor{myblue}{rgb}{0,0,0.8}
\definecolor{myyellow}{rgb}{0.9,0.8,0}
\definecolor{mygreen}{rgb}{0,0.6,0}
\definecolor{myorange}{rgb}{0.6,0.6,0}
\definecolor{mycerul}{rgb}{0,0.6,1}
\begin{document}

\title{Position-Momentum Uncertainty Relations in the Presence of Quantum Memory}

\author{Fabian Furrer}
\affiliation{Department of Physics, Graduate School of Science,
University of Tokyo, 7-3-1 Hongo, Bunkyo-ku, Tokyo, Japan, 113-0033.}
\email[]{furrer@eve.phys.s.u-tokyo.ac.jp}

\author{Mario Berta}
\affiliation{Institute for Quantum Information and Matter, Caltech, Pasadena, CA 91125}
\affiliation{Institute for Theoretical Physics, ETH Zurich, 8093 Z\"urich.}

\author{Marco Tomamichel}
\affiliation{School of Physics, The University of Sydney, Sydney 2006, Australia}
\affiliation{Centre for Quantum Technologies, National University of Singapore, Singapore 117543.}

\author{Volkher B. Scholz}
\affiliation{Institute for Theoretical Physics, ETH Zurich, Wolfgang-Pauli-Str. 27, 8093 Z\"urich.}

\author{Matthias Christandl}
\affiliation{University of Copenhagen, Department of Mathematical Sciences, Universitetsparken 5, 2100 Copenhagen, Denmark}
\affiliation{Institute for Theoretical Physics, ETH Zurich, 8093 Z\"urich.}

\begin{abstract}
A prominent formulation of the uncertainty principle identifies the fundamental quantum feature that no particle may be prepared with certain outcomes for both position and momentum measurements. Often the statistical uncertainties are thereby measured in terms of entropies providing a clear operational interpretation in information theory and cryptography.
Recently, entropic uncertainty relations have been used to show that the uncertainty can be reduced in the presence of entanglement and to prove security of quantum cryptographic tasks.
However, much of this recent progress has been focused on observables with only a finite number of outcomes not including Heisenberg's original setting of position and momentum observables.
Here we show entropic uncertainty relations for general observables with discrete but infinite or continuous spectrum that take into account the power of an entangled observer. As an illustration, we evaluate the uncertainty relations for position and momentum measurements, which is operationally significant in that it implies security of a quantum key distribution scheme based on homodyne detection of squeezed Gaussian states.
\end{abstract}

%%%%%%%%%%%%%%%%%%%%%%%%%%%%%%%%%%%%%%%%%%%%%%%%%%%%%%%%%%%%%%%%%%%%%%%%%%%%%%%%%

\maketitle

\section{Introduction} 
Heisenberg's original writing~\cite{heisenberg27} allows different and sometimes conflicting
interpretations of what formalization of the uncertainty relation he had in mind; in this work we will adopt an adversarial perspective common in quantum information theory that has been fruitful in the context of quantum cryptography.
This is sometimes referred to as preparation uncertainty and goes back to Kennard~\cite{kennard27} and Robertson~\cite{robertson29}.

For the following, assume that either a position or a momentum measurement is to be performed on a quantum system prepared in an arbitrary state. An uncertainty relation then bounds the uncertainty of the measurement outcome from the perspective of an external observer (without access to the measurement device), when either the position or momentum of the particle is measured. For a particle prepared in a well-defined location, clearly the position uncertainty is low but the momentum uncertainty might be high. For a wave-like quantum system, the opposite is true. More generally, Heisenberg's uncertainty relation proclaims that, independent of the prepared state, it is impossible that both uncertainties are small.

Kennard made this statement precise by showing that the variance of the position and momentum distribution $\mathrm{Var}_{\omega}[Q]$ and $\mathrm{Var}_{\omega}[P]$ satisfy the famous inequality 
\begin{equation}
 \mathrm{Var}_{\omega}[Q] \cdot \mathrm{Var}_{\omega}[P] \geq \frac{1}{4} \, , 
\end{equation}
where throughout the paper units are in $\hbar=1$. 
Subsequently, several formulations of the uncertainty relation for different measures of uncertainty have been proposed. We are particularly interested in entropic formulations of the uncertainty principle~\cite{hirschman57,Birula75,Beckner75}, which we will discuss in more detail in Section~\ref{sec:Results}.

Nonetheless, it has been recently pointed out~\cite{Berta09} that the above picture is incomplete in the presence of quantum entanglement. In fact, it is evident that the preparation uncertainty from the perspective of an external observer can be reduced significantly if the observer is allowed to interact with the environment of the system.
For example, imagine that an approximate Einstein-Podolsky-Rosen (EPR) entangled state~\cite{epr35} is prepared\,---\,the continuous analog of a maximally entangled state\,---\,exhibiting strong correlations in both position and momentum measurements. Then the uncertainties of the outcome of position and momentum measurements are reduced if the observer is given access to a quantum memory (resp. the environment) storing one part of the approximate EPR state.
And in fact, the uncertainty simply vanishes in the limit of an ideal EPR state with perfect correlations.

Formally, this uncertainty reduction due to entanglement can best be quantified for a three party situation which reveals an interesting interplay between the uncertainty principle and monogamy of entanglement~(see, e.g., Ref.~\onlinecite{terhal04}). Suppose that two non-collaborating external observers are present, one interested in the position and the other one in the momentum measurement. Then, even though they have access to disjoint parts of the environment, the total uncertainty cannot (significantly) be reduced. To exemplify this, assume that one observer's quantum memory is in an approximate EPR state with the measured system reducing his uncertainty. Because the approximate EPR state is pure the other observer is uncorrelated. Moreover, the latter observer has high uncertainty since the measured system is in a thermal state (the partial trace of an approximate EPR state) with larger spread as the correlations of the approximate EPR state are enhanced.

In this article, we quantify this trade-off between the uncertainties of two external observers for both discretized as well as continuous position and momentum measurements. We show these relations if the uncertainties are quantified by the quantum conditional entropy, the conditional version of the von Neumann entropy, and by conditional min- and max-entropies. We also encounter the interesting phenomena for the quantum conditional entropy that the uncertainty bound is strictly lower than in the case of no quantum memory. This is in contrast to the finite-dimensional case~\cite{Berta09}.

Similar uncertainty relations have recently also been shown by Frank and Lieb~\cite{Lieb12} and by one of the present authors~\cite{FurrerPhD}. In the former work, the authors employ a more restrictive definition of the quantum conditional entropy and do not consider min- and max-entropies which are significant for quantum key distribution. 
Here, we build on the latter work and use a definition which is more suitable for systems with an infinite number of degrees of freedom. 
In particular, we introduce a general definition of the differential quantum conditional entropy in which the classical system is modeled by a $\sigma$-finite measure space and the quantum system by an arbitrary von Neumann algebra. Moreover, we show that under weak assumptions this definition is retrieved as the limit of the corresponding regularized discrete quantities along finer and finer discretization. This provides an operational approach to the differential quantum conditional entropy. We also introduce differential versions of the quantum conditional min- and max-entropies and show that they similarly emerge under weak assumptions as the regularized limits along finer and finer discretization from their discrete counterparts. 

The paper is organized as follows. First, we give a non technical discussion of the results. Then, in Section~\ref{sec:prel} we review the algebraic formalism to describe quantum and classical systems. In Section~\ref{sec:entropy} we introduce and discuss the differential quantum conditional entropies. The entropic uncertainty relations in the presence of quantum memory are proven in Section~\ref{sec:relations}. We discuss the special case of position and momentum measurements (Section~\ref{sec:posmomrelations}), and eventually conclude our results in Section~\ref{sec:outlook}).

%%%%%%%%%%%%%%%%%%%%%%%%%%%%%%%%%%%%%%%%%%%%%%%%%%%%%%%%%%%%%%%%%%%%%%%%%%%%%%%%%
%%%%%%%%%%%%%%%%%%%%%%%%%%%%%%%%%%%%%%%%%%%%%%%%%%%%%%%%%%%%%%%%%%%%%%%%%%%%%%%%%

\section{Discussion of the Results} \label{sec:Results}

In the following, we start with the differential Shannon entropy~\cite{shannon48} to measure the uncertainty associated to the outcomes of a position (or momentum) measurement. For the sake of the motivation, we first follow an operational approach and think of the differential Shannon entropy as the limit of the ordinary discrete Shannon entropy for finer and finer discretization. Consider a detector that measures if the outcome $q$ of a position measurement falls in an interval $\mathcal{I}_{k;\alpha} := (k\alpha, (k+1)\alpha]$ where $k$ is an integer and $\alpha = 2^{-n}$ the interval size for some $n \in \mathbb{N}$.
The Shannon entropy of this measurement is then 
\begin{equation}
H(Q_{\alpha})_{\omega} := - \sum_{k=-\infty}^{\infty} \omega(\mathcal{I}_{k;\alpha}) \log \omega(\mathcal{I}_{k;\alpha})\, ,
\end{equation}
 where $\omega(\mathcal{I}_{k;\alpha})$ denotes the probability of $q$ being observed in the interval $\mathcal{I}_{k;\alpha}$ when the state is $\omega$ and $Q_{\alpha}$ is the random variable indicating in which interval $q$ falls. 

A straightforward way to define the differential Shannon entropy would be to use the limit $\lim_{\alpha \to 0} [ H(Q_{\alpha})_{\omega} + \log \alpha  ]$.
The additional term $\log\alpha$ elucidates the necessity of renormalization as the
probability $\omega(\mathcal{I}_{k;\alpha})$ scales with the length of the interval, $\alpha$. However, due to ambiguities that can arise by the discretization and that one has to show the existence of the limit, it is more convenient to use the closed formula 
\begin{equation}
h(Q)_{\omega} :=  - \int \mathrm{d}q\ P_\omega(q) \log P_\omega(q)
\end{equation}
in order to define the Shannon entropy. Here $P_\omega (q)$ is the induced probability density when measuring the position of the state $\omega$. It is clear that the two definitions coincide if for instance $P_\omega$ is continuous. In the following we denote the similarly defined differential entropy for the momentum distribution by $h(P)_{\omega}$. 

Evaluating the differential entropy for Gaussian wave packets with variance $\sigma^2$ yields $h(Q)_{\omega_g} = \frac12 \log (2 \pi e \sigma^2)$ and $h(P)_{\omega_g} = \frac12 \log \frac{e \pi}{2 \sigma^2}$, respectively.  Independently, Bia{\l}ynicki-Birula and Mycielski~\cite{Birula75} and Beckner~\cite{Beckner75}showed the entropic uncertainty relation
\begin{equation}\label{eq:biaynicki}
h(Q)_{\omega} + h(P)_{\omega} \geq \log (e \pi) \, .
\end{equation}
This bound was originally derived for pure states, but also holds for mixed states (see, e.g., Ref.~\onlinecite{Birula06}). And as for Kennard's relation, Gaussian wave packets achieve equality.

These concepts can be extended to include the effects of a quantum memory. For this purpose, we first need to define a measure that characterizes the uncertainty for an observer holding quantum side information\,---\,the differential quantum conditional von Neumann entropy. Let us consider two separate physical systems, $A$ and $B$, in a joint state $\omega_{AB}$. We use the convention that $A$ is the system to be measured and $B$ is a quantum system held by an observer. A natural definition of the differential quantum conditional von Neumann entropy would again be 
 $\lim_{\alpha\to 0} \Big[ H(Q_{\alpha}|B)_{\omega} + \log \alpha \Big]\,$ ,
where $Q_{\alpha}$ has the same meaning as above and $H(Q_{\alpha}|B)_{\omega}$ is the discrete (quantum) conditional von Neumann entropy. For finite-dimensional systems as well as in the work by Frank and Lieb~\cite{Lieb12}, the conditional von Neumann entropy of a classical variable $X$ conditioned on quantum side-information $B$ is defined via the chain rule $H(X|B) = H(XB) - H(B)$.
However, this is inconvenient for our considerations here as we want to consider a quantum system $B$ that is fully general in which case both $H(XB)$ and $H(B)$ might be infinite whereas the conditional uncertainty, expressed through the quantum conditional von Neumann entropy $H(X|B)$, is still well-defined and finite. We therefore define the quantum conditional von Neumann entropy as 
\begin{equation}\label{eq:discQMCondEnt}
H(Q_{\alpha}|B)_{\omega} := - \sum_{k=-\infty}^{\infty} D \big( \omega_B^{k;\alpha} \,\big\|\, \omega_B^{\phantom{;}} \big)\ ,
\end{equation}
where $D(\cdot\|\cdot)$ is Umegaki's quantum relative entropy~\cite{umegaki62}, $\omega_B$ is the marginal state on $B$, and $\omega_B^{k;\alpha}$ is the sub-normalized marginal state on $B$ when $q$ is measured in $\mathcal{I}_{k;\alpha}$. Note the simple relation $\omega_B= \sum_k \omega_B^{k;\alpha}$. We emphasize here that in this definition, the system $B$ can be any quantum or classical system including continuous classical systems or quantized fields.
Moreover, in the finite-dimensional case the quantum relative entropy for two density matrices $\rho$ and $\sigma$ is given by $D(\rho||\sigma)=\tr\rho\log\rho -\tr\rho\log\sigma$ such that one retrieves the common definition $H(X|B) = H(XB) - H(B)$.

Similar to the case without side information, it is more convenient to define the differential conditional von Neumann entropy as a closed expression. Hence, as a natural generalization of formula~\eqref{eq:discQMCondEnt}, we define the differential quantum conditional von Neumann entropy as (see Definition~\ref{def:condneumann}) 
\begin{equation}\label{eq:closed_vN}
h(Q|B)_{\omega} = - \int_{-\infty}^{\infty} \mathrm{d} q \, D \big( \omega_B^{q} \,\big\|\, \omega_B^{\phantom{;}} \big) \, ,
\end{equation} 
where $\omega_B^q$ is the conditional state density on the system $B$ in the sense that $\int_{\mathcal{I}_{k;\alpha}} \mathrm{d} q \, \omega_B^q = \omega_B^{k;\alpha}$. In Proposition~\ref{prop:vNapprox}, we then show that if $-\infty < h(Q|B)_{\omega}$ and $H(Q_{\alpha}|B)_{\omega}<\infty$ is satisfied for an arbitrary $\alpha>0$, it follows that
\begin{equation}\label{eq:vNapprox}
h(Q|B)_{\omega} = \lim_{\alpha\to 0} \Big[ H(Q_{\alpha}|B)_{\omega} + \log \alpha \Big]\, . 
\end{equation}

Next, let us state the obtained uncertainty relations in terms of the quantum conditional von Neumann entropies. First, we consider measurements with finite spacing $\delta_q$ for $q$ and $\delta_p$ for $p$ on the A system of a tripartite state $\omega_{ABC}$.
We then find that, for all states,
\begin{equation}\label{eq:ourvn-disc}
H(Q_{\delta_q}|B)_{\omega} + H(P_{\delta_p}|C)_{\omega} \geq -\log c(\delta_q, \delta_p) \, ,
\end{equation}
with 
\begin{equation} 
c(\delta_q,\delta_p) = \max_{k,l} \Vert \sqrt{Q_k[\delta_q]}\sqrt{P_l[\delta_p]}\Vert \, 
\end{equation}
where $Q_k[\delta]$ ($P_k[\delta]$) denotes the projector onto the position (momentum) interval $\mathcal{I}_{k;\delta}$ and $\Vert\cdot\Vert$ is the operator norm. We derive the above relation in Proposition~\ref{prop:URvNdisc} for general positive operator valued measures (POVMs) for which the relation is exactly the same. Note that for projections as, e.g.,~ $Q_k[\delta _q]$ and $P_l[\delta _p]$ the square root in the complementary constant is superfluous. The complementary constant can be expressed in terms of the the $0$th radial prolate spheroidal wave function of the first kind $S_0^{(1)}$ as~\cite{Slepian1964} (see Figure~\ref{fig})
\begin{equation}
c(\delta_q,\delta_p) =  (\delta_q \delta_p)/2 \cdot S_0^{(1)}(1, ({\delta_q \delta_p})/{4})^2. 
\end{equation}

Moreover, by taking the limits $\delta_q \to 0$ and $\delta_p \to 0$, we find that
\begin{equation}\label{eq:ourvn}
h(Q|B)_{\omega} + h(P|C)_{\omega} \geq \log (2\pi) \, .
\end{equation}
Here, we have to impose the same assumptions for $P$ and $Q$ as required for the approximation in~\eqref{eq:vNapprox}. But we show that under a finite energy constraint with respect to the harmonic oscillator potential $Q^2+P^2$, it is only required that $h(Q|B)_{\omega}$ $h(P|C)_{\omega}$ are not $-\infty$. 
The relation~\eqref{eq:ourvn} is sharp in the sense that there exists a state which saturates it. In fact, we show in Section~\ref{sec:posmomrelations} that the approximate EPR state on $AB$ (or likewise $AC$) closes the gap between the left and the right hand side of~\eqref{eq:ourvn} in the limit of perfect correlations (see Figure~\ref{fig2}).

We note that the lower bound here is given by $\log (2\pi)$ and not by $\log (e\pi)$ as in the case without quantum memory~\eqref{eq:biaynicki} (see also Ref.~\onlinecite{Lieb12}). We note further that the relation without memory~\eqref{eq:biaynicki} generalizes straightforwardly to a classical memory system by the concavity of the Shannon entropy, that is, $h(Q|M) + h(P|M) \geq \log (e\pi)$ holds for every classical memory $M$. Since~\eqref{eq:ourvn} is sharp, a quantum memory can reduce the state-independent uncertainty limit. Such a phenomena has not yet been encountered in discrete variable systems and seems to be a special feature of continuous variable systems. It is related to the phenomena that for the (approximate) EPR state there exists a gap between the accessible classical correlation and the classical-quantum correlation, that is, $h(Q|Q_B) - h(Q|B)\approx \log(e/2)$. In contrast, for the maximally entangled state in the finite case---a minimal uncertainty state for mutually unbiased measurements---there exists a measurement such that the classical correlation is equal to the classical-quantum correlation.

A main motivation to use entropies to quantify the uncertainty principle stems from their operational significance. Here we consider measures important for quantum key distribution, for which the 
(differential) Shannon entropy or the quantum conditional von Neumann entropy is not suitable as it only attains operational significance when an asymptotic limit of many independent identical repetitions of a task are considered.
Instead, we extend the work by Renner~\cite{renner05} and co-workers (see Ref.~\onlinecite{tomamichel:thesis} for a recent review and Ref.~\onlinecite{Furrer10,Berta11} for infinite-dimensional generalizations) to the position-momentum setting.
In particular, we are interested in finding bounds on the optimal guessing probability of an eavesdropper. For a discretization determined by the intervals $\mathcal{I}_{k;\alpha}$ as introduced above, the guessing probability (for position) is the probability that an observer with access to the quantum system $B$ correctly predicts which interval $q$ falls into. A guessing strategy is characterized by a POVM on $B$, that is a map $k \mapsto E_B^k$ where $E_B^k$ are positive semi-definite operators in the observable algebra of $B$ that sum up to identity $\sum_k E_B^k = \idty$. Formally, we define the optimal guessing probability by
\begin{equation}\label{eq:GuessProb}
	P_{\textrm{guess}}(Q_{\alpha}|B)_{\omega} := \sup \Big\{ \sum_{k=-\infty}^{\infty} \omega_B^{k;\alpha}(E_B^k)\ \Big|\  E_B^k \textrm{ is a POVM on }B \Big\}\ .
\end{equation}
Clearly, the guessing probability is positive and at most $1$. This allows one to introduce the quantum conditional min-entropy~\cite{renner05} via the guessing probability as~\cite{koenig-2008,Furrer10} 
\begin{equation}
H_{\min}(Q_{\alpha}|B)_{\omega} := -\log P_{\textrm{guess}}(Q_{\alpha}|B)_{\omega}. 
\end{equation}

As we will see, the guessing probability is related to the decoupling fidelity defined as
\begin{equation}\label{eq:decFid}
F_{\textrm{dec}}(Q_{\alpha}|B)_{\omega} :=\sup \Big\{ \Big(\, \sum_{k=-\infty}^{\infty} \sqrt{F(\omega_B^{k;\alpha}, \sigma_B)} \Big)^{\!\!2} \ \Big|\ \sigma_B \textrm{ is a state on }B \Big\}\ ,
\end{equation}
where $F(\cdot,\cdot)$ is Uhlmann's fidelity~\cite{Uhlmann76} and we recall that $\omega_B^{k;\alpha}$ is not normalized. The decoupling fidelity, $F_{\textrm{dec}}(Q_{\alpha}|B)_{\omega}$, is a measure of how much information the marginal state on $B$ contains about $Q_\alpha$. If the state is independent, that is if $B$ does not contain any information about $Q_\alpha$, the decoupling fidelity takes its maximum value as $F_{\textrm{dec}}(Q_{\alpha}) = \big( \sum_{k=-\infty}^{\infty} \sqrt{\omega_A(\mathcal{I}_{k;\alpha})} \big)^2$. The latter expression grows with the support of the distribution, and is positive but not necessarily finite. Similarly, as in the case of the guessing probability, we can associate an entropic quantity to the decoupling fidelity, which is known as the quantum conditional max-entropy~\cite{koenig-2008,Furrer10} 
\begin{equation}
H_{\max}(Q_{\alpha}|B)_{\omega} = \log F_{\textrm{dec}}(Q_{\alpha}|B)_{\omega}. 
\end{equation}

In this work, we introduce the differential guessing probability and the differential decoupling fidelity, and accordingly, the differential quantum conditional min- and max-entropies. We define them by straightforwardly generalizing the infinite sums in~\eqref{eq:GuessProb} and~\eqref{eq:decFid} to integrals
\begin{align*}
 p_{\textrm{guess}}(Q|B)_{\omega}  := \sup \Big\{ \int_{-\infty}^{\infty} \mathrm{d}q\ \omega_B^q(E_B^q) \ \Big| \ q \mapsto E_B^q \textrm{ is a POVM on } B \Big\}\ , \\
 f_{\textrm{dec}}(Q|B)_{\omega}   :=\sup \Big\{ \Big( \int_{-\infty}^{\infty} \mathrm{d}q \sqrt{F(\omega_B^q, \sigma_B)} \Big)^2 \ \Big|\ \sigma_B \textrm{ is a state on } E  \Big\}\ . 
\end{align*}
In analogy to before, the differential quantum conditional min- and max-entropy are then defined as $h_{\min}(Q|B)_{\omega}=-\log p_{\textrm{guess}}(Q|B)_{\omega}$ and $h_{\max}(Q|B)_{\omega}=\log f_{\textrm{dec}}(Q|B)_{\omega} $ (see Definition~\ref{def:cont_entropy}). Similar as for the quantum conditional von Neumann entropy, we show that the differential quantum conditional min- and max-entropies can be retrieved in the limit of finer and finer discretization (see Proposition~\ref{thm:MinMaxApprox})
\begin{align}
h_{\min}(Q|B)_{\omega} &=\lim_{\alpha\rightarrow 0}\Big(H_{\min}(Q_{\alpha}|B)_\omega+\log\alpha\Big)\ , \label{eq:MinApprox} \\
h_{\max}(Q|B)_{\omega} &=\lim_{\alpha\rightarrow 0}\Big(H_{\max}(Q_{\alpha}|B)_\omega+\log\alpha\Big)\ , \label{eq:MaxApprox}
\end{align}
where the notation is as in~\eqref{eq:vNapprox} and~\eqref{eq:MaxApprox} holds if $H_{\max}(Q_{\alpha}|B)_\omega <\infty$ for an arbitrary $\alpha>0$.

The above quantities allow us to formulate a different entropic uncertainty relation, which generalizes the relation in~\cite{Tomamichel11} to a countable or continuous set of outcomes and general side information. For an arbitrary tripartite system in state $\omega_{ABC}$ and finite spacing $\delta_p, \delta_q$, we find that $P_{\textrm{guess}}(Q_{\delta_q}|B)_{\omega} \leq c(\delta_q,\delta_p) \cdot F_{\textrm{dec}}(P_{\delta_p}|C)_{\omega}$, where $c(\delta_q, \delta_p)$ is same as in~\eqref{eq:ourvn-disc}. Expressed in terms of entropies, this is equivalent to 
\begin{equation}\label{eq:ourmin-disc}
H_{\min}(Q_{\delta_q}|B)_{\omega} + H_{\max}(P_{\delta_p}|C)_{\omega} \leq c(\delta_q,\delta_p). 
\end{equation}
We prove the above relation for arbitrary POVM measurements in Proposition~\ref{prop:minmaxdisc}.  

The analogous relation in the continuous limit is obtained via~\eqref{eq:MinApprox} and~\eqref{eq:MaxApprox} and reads
\begin{equation}\label{eq:ourmin}
h_{\min}(Q|B)_{\omega} + h_{\max}(P|C)_{\omega} \geq \log (2\pi) \, .
\end{equation}
Note that we have to impose that $H_{\max}(P_{\alpha}|B)_\omega <\infty$ for an arbitrary $\alpha>0$. But as shown in Lemma~\ref{lem:MaxCond}, this is true if the second moments of the momentum distribution are finite. 
Both of these relations can be made sharp even in the absence of any correlation between $q$ and $B$ and  $p$ and $C$. We show in Proposition~\ref{lem:tight} that for any $\delta_p,\delta_q >0$, the discretized version~\eqref{eq:ourmin-disc} gets tight for a state with no uncertainty in the momentum degree, i.e., a state with a momentum distribution with support only on one of the intervals of length $\delta_p$. However, tightness of~\eqref{eq:ourmin} cannot be inferred from this observation because a hypothetical limit state must have an exactly defined momentum which is unphysical. But as already shown in~\cite{Birula06}, tightness of~\eqref{eq:ourmin} is given for pure Gaussian states.

%%%%%%%%%%%%%%%%%%%%%%%%%%%%%%%%%%%%%%%%%%%%%%%%%%%%%%%%%%%%%%%%%%%%%
%%%%%%%%%%%%%%%%%%%%%%%%%%%%%%%%%%%%%%%%%%%%%%%%%%%%%%%%%%%%%%%%%%%

\section{Preliminaries}\label{sec:prel}

\subsection{Algebraic Description of Physical Systems}\label{sec:notation}

Opposite to the standard description of quantum mechanics where the structure of the system is related to a Hilbert space, the basic objects in the algebraic approach are the observables or respectively, the algebra generated by the possible observables. It is reasonable to close the observable algebra with respect to the topology which corresponds to taking quantum mechanical expectation values, that is, the $\sigma$-weak topology. More precisely, the $\sigma$-weak topology on $\mathcal{B}(\mathcal{H})$ is the locally convex topology induced by the semi-norms $ A \mapsto | \mathrm{tr}(\tau A)| $ for trace-class operators $\tau\in\mathcal{B}(\mathcal{H})$, see Ref.~\onlinecite[Chapter 2.4.1]{Bratteli1}. Such an algebra is called a von Neumann algebra: a von Neumann algebra $\cM$ is a $\sigma$-weakly closed subalgebra of the linear, bounded operators $\cB(\cH)$ on some Hilbert space $\cH$. The algebraic approach has for instance the benefit that one can treat classical and quantum systems on the same footing. We start with specifying general quantum systems.

\paragraph*{Quantum Systems.} We associate to every quantum system a von Neumann algebra $\cM$ acting on a Hilbert space $\cH$. The set of linear, normal (i.e.~$\sigma$-weakly continuous), and positive functionals on $\cM$ is denoted by $\cP(\cM)$. The set of sub-normalized states $\cS_{\leq}(\cM)$ is defined as the elements in $\cP(\cM)$ satisfying $\omega(\idty)\leq 1$, where $\idty$ denotes the identity element in $\cM$. Elements $\omega\in\cS_{\leq}(\cM)$ with $\omega(\idty)=1$ are called (normalized) states, and the corresponding set is denoted by $\cS(\cM)$. If $\cM\cong\cB(\cH)$, then there exists a one to one correspondence between states on $\cM$ and density matrices on $\cH$. We then have for every $\omega\in\cS(\cM)$ a unique positive trace-one operator $\rho$ on $\cH$, such that for all $E\in\cM$, $\omega(E)=\tr[\rho E]$. We denote the set of density operators on $\cH$ by $\cS(\cH)$.

A multipartite system is a composite of different local subsystems $A,B,C$ associated with mutually commuting von Neumann algebras $\cM_{A},\cM_{B},\cM_{C}$ acting on the same Hilbert space $\cH$. The combined system is denoted by $\cM_{ABC}$ and is given by the von Neumann algebra generated by the individual subsystems, that is, $\cM_{ABC}=\cM_{A}\vee\cM_{B}\vee\cM_{C}$ is the $\sigma$-weak closure of the algebra $\{abc\, : \, a\in \cM_A, b\in \cM_B,c\in \cM_C\}$. If it is not clear from context, we label the correspondence of states, operators and algebras to different subsystems by lower indexes.   

By the Gelfand-Naimark-Segal (GNS) construction every $\omega\in\cS_{\leq}(\cM)$ admits a purification that is a triple $(\cK,\pi,\xi_\omega)$, $\cK$ being a Hilbert space, $\pi$ a representation of $\cM$ on $\cK$, and a sub-normalized vector $\xi_\omega \in \cH$ such that $\omega(x)=\Scp{\xi_\omega}{\pi(x)\xi_\omega}$ for all $x\in\cM$  (see, e.g., Ref.~\onlinecite[Chapter 2.3.3]{Bratteli1}). We often speak of the commutant $\pi(\cM)'$ of $\pi$ on $\cK$ as the purifying system.

The space $\cP(\cM)$ can be equipped with two different, albeit equivalent notions of distance~\cite{Bures69,Uhlmann76}. The first one is the usual norm induced by$\cM$ and defined for $\omega\in\cP(\cM)$ as
\begin{align}
\|\omega\|=\sup_{E\in\cM,\|E\|\leq1}|\omega(E)|^{2}\ .
\end{align}
For $\cM=\cB(\cH)$ and density matrices this corresponds to the usual trace-distance. The second one is called the fidelity and was introduced by Uhlmann \cite{Uhlmann76}. The fidelity for $\omega,\sigma\in\cS_{\leq}(\cM)$ is defined as
\begin{align}\label{eq:fidelity}
F(\omega,\sigma) =\sup |\Scp{\xi_\omega}{\xi_\sigma}|^{2}\ ,
\end{align}
where the supremum runs over all purifications of $\omega$ and $\sigma$ being defined with respect to the same Hilbert space. This is a non-empty set since there exists a Hilbert space $\cK$ and a representation $\pi$ of $\cM$ on $\cK$, called standard form, such that every state on $\cM$ has a purification on $\cK$, Ref.~\onlinecite[Chapter 9]{Takesaki2}.\\

%%%%%%%%%%%%%%%%%%%%%%%%%%%%%%%%%%%%%%%%%%%%%%%%%%%%%%%%%%%%%%%

\paragraph*{Classical Systems.}

A classical system is specified by the property that all possible observables commute, and can thus be described by an abelian von Neumann algebra. This perspective allows one to use the same definitions for states on classical systems as defined for quantum systems in the previous paragraph. Since classical systems will play a major role in the sequel, we discuss them in more detail.

For the sake of illustration, we start with countable classical systems denoted by $X$. In the following, we denote quantum systems by indexes $A,B,C$ and classical systems by indexes $X,Y,Z$. In the classical case we use $X,Y,Z$ to specify the subsystem as well as the range of the classical variable. The von Neumann algebra corresponding to a countable classical system $X$ is $\ell^\infty(X)$, that is, the set of functions from $X$ to $\mathbb C$ equipped with the supremum norm. Here, one can think of $e_x=(\delta_{x,k})_k$ as the measurement operator corresponding to the outcome $x\in X$. A classical state is then a normalized positive functional $\omega_X$ on $\ell^{\infty}(X)$, which can be identified with a probability distribution on $X$, that is, $\omega_X\in \ell^1(X)$. It is often convenient to embed the classical system $\ell^\infty(X)$ into the quantum system with Hilbert space dimension $X$ as the algebra of diagonal matrices with respect to a fixed basis $\{\ket x \}_{x\in X}$. A classical state $\omega_X$ can then be represented by a density operator
\begin{align}\label{eq:rhoX}
\rho_{\omega_X}=\sum_x \omega_X(x) \ketbra xx \ ,
\end{align}
such that the probability distribution can be identified with the eigenvalues of the corresponding density operator.

Let us now go a step further and consider classical systems with continuous degrees of freedom. In order to define such systems properly, we start with $(X,\Sigma,\mu)$ a measure space with $\sigma$-algebra $\Sigma$, and measure $\mu$. In the following, we will always assume that the measure space is $\sigma$-finite. The von Neumann algebra of the system is given by the essentially bounded functions denoted by $L^\infty(X)$. A classical state on $X$ is defined as a normalized positive and normal functional on $L^\infty(X)$, and may be identified with an element of the integrable complex functions $L^1(X)$, which is almost everywhere non-negative and satisfies
\begin{align}
\int_X \omega_X(x)  \dmu (x) = 1\ .
\end{align}
Such functions in $L^1(X)$ are also called probability distributions on $X$. The most prominent example of a continuous classical system is $X=\mathbb R$ with the usual Lebesgue measure. This is of course the relevant classical system in the case of position or momentum measurement.

Note that the case of a discrete classical system is obtained if the measure space $X$ is discrete, and equipped with the equally weighted discrete measure $\mu(I)=\sum_{x\in I} 1$ for $I\subset X$. In the discrete case,~\eqref{eq:rhoX} defines a representation of a classical state as a diagonal matrix of trace-one on the Hilbert space with dimension equal to the classical degrees of freedom. However, in the case of continuous variables this representation is not possible if we demand that the image is a valid density operator. This is easily seen from the fact that every density operator is by definition of trace class, and hence, has discrete spectrum. 

%%%%%%%%%%%%%%%%%%%%%%%%%%%%%%%%%%%%%%%%%%%%%%%%%%%%%%%%%%%%%%%

\paragraph*{Classical-Quantum Systems.}\label{sec:CQsyst}

Let us take a closer look at bipartite systems consisting of a classical part $X$ and a quantum part $B$. For a countable classical part $X$, the combined system is described by the von Neumann algebra (see, e.g., Ref.~\onlinecite[Chapter 6.3]{Murphy})
\begin{align}
\cM_{XB}=\ell^\infty(X)\vee\cM_B \cong\ell^\infty(X)\otimes\cM_B
 \cong\ell^\infty(X,\cM_B)=\{f:X\rightarrow\cM_B:\,\sup_x\Vert f(x)\Vert \leq \infty \}\ .
\end{align}
A state on $\cM_{XB}$ is called a classical-quantum state and can be written as $\omega_{XB}=(\omega_B^x)_{x\in X}$ with $\omega_{B}^x\in\cS_{\leq}(\cM_B)$ and $\sum_x \omega_B^x(\idty)=1$. If the quantum system $B$ is given by the set of all bounded linear operators on a Hilbert space $\cH_B$, we can represent $\omega_{XB}$ uniquely by the density operator
\begin{align}
\rho_{\omega_{XB}}=\sum_x \ketbra xx \otimes\rho_{\omega_B^x}\ .
\end{align}
It is now straightforward to generalize the above introduced classical-quantum systems from countable to continuous classical systems. The combined system is then described by the von Neumann algebra (see, e.g., Ref.~\onlinecite[Chapter 6.3]{Murphy})
\begin{align}
\cM_{XB}=L^\infty(X)\vee\cM_B\cong L^\infty(X)\otimes\cM_B
\cong L^\infty(X,\cM_B)\ ,
\end{align}
where $L^\infty(X,\cM_B)$ denotes the space of essentially bounded functions with values in $\cM_B$. The normal, positive functionals on $\cM_{XB}$ are given by elements in $L^1(X,\cP(\cM_B))$, and states can be identified with integrable functions $\omega_{XB}$ on $X$ with values in $\cP(\cM_B)$ satisfying the normalization condition 
\begin{align}
\int_X \omega_{B}^x(\idty) \dmu (x) = 1\ .
\end{align}
In analogy to the discrete case, we write the argument of the map $\omega_{XB}$ as an upper index. The evaluation of $\omega_{XB}$ on an element $E_{XB}\in L^\infty(X,\cM_B)$ is computed by $\omega_{XB}(E_{XB}) = \int_X \omega_{B}^x(E_B(x)) \dmu (x) $. For further details we refer to Ref.~\onlinecite[Chapter 4.6-4.7]{Takesaki1}. \\

%%%%%%%%%%%%%%%%%%%%%%%%%%%%%%%%%%%%%%%%%%%%%%%%%%%%%%%%%%%%%%%

\subsection{Channels, Measurements, and Post-Measurement States}\label{sec:observables}

We call an evolution of a system a channel. As we work with von Neumann algebras it is convenient to define channels as maps on the observable algebra, which is also called the Heisenberg picture. A channel from system $A$ to system $B$ described by von Neumann algebras $\cM_{A}$ and $\cM_{B}$, respectively, is given by a linear, normal, completely positive, and unital map $\cE:\cM_{B}\rightarrow\cM_{A}$. A linear map $\Phi:\cN \to \cM$ between two von Neumann algebras is called completely positive, if the map $(\id_n \otimes \Phi): M_n \otimes \cN \to M_n \otimes \cM$ is positive for all $n \in \mathbb{N}$. The map is called unital, if $\phi(\idty_{N})=\idty_{M}$. Note that $\cM_A$ and $\cM_B$ can be either a classical or a quantum system. If both systems are quantum (classical), we call the channel a quantum (classical) channel.

A measurement with outcome range $X$ is a channel which maps $L^{\infty}(X)$ to a von Neumann algebra $\cM_{A}$. Its predual then maps states of the quantum system $A$ to states of the classical system $X$. We denote the set of all measurements $E:L^{\infty}(X)\rightarrow \cM_A$ by $\rm{Meas}(X,\cM_A)$. If $X$ is countable, we can identify a measurement $E:\ell^\infty(X) \rightarrow \cM_A$ by a collection of positive operators $E_x=E(e_x)$ ($x\in X$) satisfying $\sum_x E_x = \idty$ (we denote by $e_x$ the sequence with $1$ at position $x$ and $0$ elsewhere). More generally, given a $\sigma$-finite measure space $(X,\Sigma,\mu)$ and the associated algebra $L^\infty(X)$, the mapping $\cO \to \chi_\cO \to \cE(\chi_\cO)$, for $\cO \in \Sigma$, $\chi_\cO$ being its indicator function, defines a measure on $X$ with values in the positive operators of $\cM_A$. Note that therefore our definition coincides with usual definition of a measurement as a positive operator valued measure, Ref.~\onlinecite[Chapter 3.1]{Davies:1970ux}. We define the post-measurement state obtained when measuring the state $\omega_A\in \cS(\cM_A)$ with $E_{X}\in\rm{Meas}(X,\cM_A)$ by the concatenation $\omega_X=\omega_A \circ E_{X}$, that is, $\omega_X(f) = \omega_A(E_X(f))$ for $f\in L^{\infty}(X)$. Since $\omega_A$ and $E_{X}$ are normal, so is $\omega_X$, such that $\omega_X$ is an element of the predual of $L^\infty(X)$, which is $L^1(X)$. Hence, the obtained post measurement state is a proper classical state and can be represented by a probability distribution on $X$.

In the following, we are particularly interested in the situation where we start with a bipartite quantum system $\cM_{AB}$, and measure the $A$-system with some $E_{X}\in\rm{Meas}(X,\cM_A)$. The post-measurement state is then given by $\omega_{XB}=\omega_{AB}\circ E_{X}$. Similarly as in the case of a trivial $B$-system, one can show that the state $\omega_{XB}$ is a proper classical-quantum state on $L^\infty(X)\otimes\cM_B$ as introduced in the previous paragraph.

%%%%%%%%%%%%%%%%%%%%%%%%%%%%%%%%%%%%%%%%%%%%%%%%%%%%%%%%%%%%%%%

\subsection{Discretization of Continuous Classical Systems}\label{sec:discretisation}

Let us consider a classical system $L^\infty(X)$ with $(X,\Sigma,\mu)$ a $\sigma$-finite measure space, where $X$ is also equipped with a topology. The aim is to introduce a discretization of $X$ into countable measurable sets along which we later show the approximation of the differential quantum conditional von Neumann entropy and the differential quantum conditional min- and max-entropy.

We call a countable set $\cP=\{I_k\}_{k\in \Lambda}$ ($\Lambda$ any countable index set) of measurable subsets $I_k\in\Sigma$ a partition of $X$ if $X=\bigcup_k I_k$, $\mu(I_k \cap I_l) = \delta_{kl}\cdot\mu(I_k)$, $\mu(I_{k})<\infty$, and the closure $\bar I_k$ is compact for all $k\in\Lambda$. If $\mu(I_k)=\mu(I_l)$ for all $k,l\in\Lambda$, we call $\cP$ a balanced partition, and denote $\mu(\cP)=\mu(I_k)$. Note that the property $\mu(I_k \cap I_l) = \delta_{kl}\cdot\mu(I_k)$ implies that the step functions associated to a fixed partition form a subalgebra of all essentially bounded functions on $X$. If for two partitions $\cP_1$, $\cP_2$ all sets of $\cP_1$ are subsets of elements in $\cP_2$, we say that $\cP_2$ is finer than $\cP_1$ and write $\cP_2 \leq \cP_1$. 
A family of partitions $\{\cP_\alpha\}_{\alpha\in\Delta} $ with $\Delta$ a discrete index set approaching zero such that each $\cP_\alpha$ is balanced, $\cP_\alpha \leq \cP_{\alpha'}$ for $\alpha \leq \alpha'$, $\mu(\cP_\alpha) = \alpha$, and $\bigcup_\alpha \cP_\alpha$ generates $\Sigma$, is called an ordered dense sequence of balanced partitions. For simplicity, we usually omit the index set $\Delta$. 
\begin{definition}
 We call an ordered dense sequence of balanced partitions $\{\cP_\alpha\}$ of a measure space $(X,\Sigma,\mu)$ a coarse graining of $X$.
A quadruple $(X,\Sigma,\mu,\{\cP_\alpha\})$ is called a coarse grained measure space if the measure space is $\sigma$-finite and $\{\cP_\alpha\}$ is a coarse graining of $X$.
\end{definition}
Note that not every $\sigma$-finite measure space admits a coarse graining in the sense of the above definition. As a simple example of a measure space that admits a coarse graining consider a discrete space with the counting measure where each partition consists of sets with measure at least one. 
In the case that $X = \mathbb{R}$, $\Sigma$ the Borel $\sigma$-algebra, and $\mu$ the Lebesgue measure, a coarse graining can be easily constructed. For a positive real number $\alpha$, let us take a partition $\cP_{\alpha}$ of $\mathbb R$ into intervals $I_{k}= [ k\alpha , (k+1)\alpha]$, $k\in\mathbb Z$, with $\mu(\cP_{\alpha})=\alpha$ as introduced in Section~\ref{sec:Results}. Choosing for $\alpha$ the sequence $\frac{1}{2^n}$, $n \in \mathbb{N}$ then gives rise to a coarse graining. 
\begin{remark} 
Every Lebesgue measurable subset $X\subset \mathbb R$ equipped with the Lebesgue measure restricted to $X$ admits a coarse graining. 
\end{remark}

For a classical-quantum system $\cM_{XB}=L^\infty(X)\otimes \cM_B$, and a partition $\cP=\{I_k\}_{k\in\Lambda}$ of $X$, we can define the discretized state $\omega_{X_\cP B}\in\cS(\ell^{\infty}(\Lambda)\otimes\cM_{B})$ of $\omega_{XB}\in\cS(\cM_{XB})$ by
\begin{align}\label{eq:DiscState}
\omega_{X_{\cP}B}\big((b_{k})\big) = \sum_{k\in\Lambda}\int_{I_k} \omega_B^x(b_{k}) \, d\mu(x)=\sum_{k\in\Lambda}\omega_{B}^{\cP,k}(b_{k})\ ,
\end{align}
where $(b_{k})\in\ell^{\infty}(\Lambda)\otimes\cM_{B}$. The new classical system induced by the partition is denoted by $X_\cP$ and it is clear that $X_\cP \cong \Lambda$. In a similar way we define the discretization of a measurement $E\in\rm{Meas}(X,\cM_A)$ with respect to a partition $\cP=\{I_k\}_{k\in\Lambda}$ as the element $E^{\cP}\in\rm{Meas}(X_\cP,\cM_A)$ determined by the collection of positive operators
\begin{align}\label{eq:DiscMeasurement}
E^{\cP}_k = E(\chi_{I_k})\ ,
\end{align}
where $\chi_{I_k}$ denotes the indicator function of $I_k$. Note that the concept of a discretized measurement and a discretized state are compatible in the sense that the post-measurement state obtained from the discretized measurement $E^\cP$ is equal to the one which is obtained where one first measures $E$ and then discretizes the state. Hence, we have that $\omega_{X_{\cP}B}=\omega_{AB}\circ E^\cP$ if $\omega_{XB}=\omega_{AB}\circ E$.

%For the sake of simplicity, we mostly restrict ourselves to the case $X\subseteq\mathbb{R}$, $\Sigma$ the Borel $\sigma$-algebra, and $\mu$ the Lebesgue measure.

%%%%%%%%%%%%%%%%%%%%%%%%%%%%%%%%%%%%%%%%%%%%%%%%%%%%%%%%%%%%%%%%%
%%%%%%%%%%%%%%%%%%%%%%%%%%%%%%%%%%%%%%%%%%%%%%%%%%%%%%%%%%%%%%%%%

\section{Quantum Conditional Entropy Measures}\label{sec:entropy}

%%%%%%%%%%%%%%%%%%%%%%%%%%%%%%%%%%%%%%%%%%%%%%%%%%%%%%%%%%%%%%%

\subsection{Quantum Conditional von Neumann Entropy}\label{sec:vNentropy}

In order to motivate our definition of the differential quantum conditional von Neumann entropy, let us first recall the situation for discrete finite classical systems and finite-dimensional Hilbert spaces. For a classical-quantum density operator $\rho_{XB}=\sum_{x}p_{x}\proj{x}_{X}\otimes\rho_{B}^{x}$, the conditional von Neumann entropy is defined as $H(X|B)_{\rho}=H(XB)_{\rho}-H(B)_{\rho}$, where $H(XB)_{\rho}=-\mathrm{tr}[\rho_{XB}\log\rho_{XB}]$ denotes the von Neumann entropy. In the following, we use that the conditional von Neumann entropy can also be rewritten as
\begin{align}\label{eq:relative_finite}
H(X|B)_{\rho}= - \sum_{x} \tr\big[ p_{x}\rho_{B}^{x} \,(\log p_{x}\rho_{B}^{x} -\log \rho_{B})  \big] = -\sum_{x}D(p_{x}\rho_{B}^{x}\|\rho_{B})\ ,
\end{align}
where the quantum relative entropy of two density matrices $\rho$ and $\sigma$ acting on a finite-dimensional Hilbert space $\cH$ is defined as (see, e.g., Ref.~\onlinecite{PetzQI})
\begin{align}
D(\rho\|\sigma)=\mathrm{tr}[\rho\log\rho]-\mathrm{tr}[\rho\log\sigma]\ ,
\end{align}
in the case where the support of $\rho$ is contained in the support of $\sigma$, and $\infty$ else.
Writing the conditional von Neumann entropy in terms of the quantum relative entropy is motivated by the fact that the latter has a well behaved extension to states on von Neumann algebras which was introduced by Araki~\cite{Araki76} and further studied by Petz and various authors (see Ref.~\onlinecite{PetzBook} and references therein). This generalization can be understood in the finite-dimensional case by writing 
\begin{align}
D(\rho\|\sigma)= \tr\Big[\rho^{1/2}\log\big(\Delta(\rho/\sigma)\big)\rho^{1/2}\Big]\ ,
\end{align}
where the so-called spatial derivative is defined as $\Delta(\rho/\sigma)=L(\sigma^{-1})R(\rho)$, where $L(a)$ and $R(a)$ denote the left and right multiplication by an element $a\in \cB(\cH)$, respectively. Here, $\sigma^{-1}$ denotes the pseudo inverse on the support of $\sigma$. Note that $\Delta(\rho/\sigma)$ defines  a linear positive operator acting on the Hilbert space $\mathrm{HS}(\cH)$ of Hilbert-Schmidt operators on $\cH$. We emphasize that the mapping $\pi: a \mapsto L(a)$, $a \in \cB(\cH)$ defines a representation of the algebra $\cB(\cH)$ on the Hilbert space $\mathrm{HS}(\cH)$. Before discussing the spatial derivative on von Neumann algebras we first consider its properties in the case of density operators (see also Ref.~\onlinecite[Chapter 3.4]{PetzQI}). The spatial derivative may then be defined by the quadratic form
\begin{align}\label{eq:15}
q: a \mapsto \tr\Big[ \rho R(\sigma^{-\half} \, a) R(\sigma^{-\half}\, a)^*\Big] = \tr\left[\rho a^* \sigma^{-1} a \right]\ ,
\end{align}
where again $R(\sigma^{-\half} \, a)$ is the right multiplication by $\sigma^{-\half} \, a$. The sesquilinear form $s: (a,b) \mapsto \tr\left[a^* \Delta(\rho/\sigma)(b) \right]$ defining the positive linear operator $\Delta(\rho/\sigma)$ is derived from $q$ by the polarization identity $s(a,b) = \frac{1}{4}(q(a+b) - q(a-b) + i q(a-ib) -i q(a+ib))$. The operator $R(\sigma^{-\half} \, a) R(\sigma^{-\half}\, a)^*$ commutes with all operators acting by left multiplication, and hence is an element of the commutant of $\pi(\cB(\cH))$. This characterization of the spatial derivative can be generalized to states on von Neumann algebras. 

For a von Neumann algebra $\cM\subseteq\cB(\cH)$, let $(\xi_\sigma, \pi_\sigma,\cH_\sigma)$ be the GNS-triple associated with $\sigma\in\cP(\cM$). For vectors $\eta$ in the set $\{\,\eta \in \cH\,:\, \norm{a \eta} \leq c_\eta \sigma(a^* a),\, a \in \cM,\,c_\eta > 0 \,\}$ with closure equal to the support of $\sigma$, we may define a linear bounded operator from $\cH_\sigma$ to $\cH$ by
\begin{align}\label{eq:14}
r_\sigma(\eta) \,:\,x\xi_\sigma \mapsto x \eta\ .
\end{align}
Note that the GNS construction ensures that the linear span of vectors of the form $x\xi_\sigma$, $x \in \cM$ are dense in $\cH_\sigma$. If the Hilbert spaces $\cH_\sigma$ and $\cH$ are isomorphic and $\eta = c x\xi_\sigma$ for $c \in \cM'$, then $r_\sigma(\eta) = c$. Moreover, the operator $r_\sigma(\eta) r_\sigma(\eta)^*$ is always an element of $\cM'$. Let now $\omega$ be a state on $\cM$, which is implemented by a vector $\xi \in \cH$, that is, $\xi$ is a purifying vector of $\omega$. The vector $\xi$ also defines a state $\omega'_\xi$ on the commutant $\cM'$ by $\omega'_\xi(y) = \Scp{\xi}{y \xi}$ for $y \in \cM'$. We define the spatial derivative $\Delta(\omega'_\xi/ \sigma)$ as the self-adjoint operator associated with the quadratic form on $\cH$ given by
\begin{align}\label{eq:16}
q: \eta \mapsto \omega'_\xi(r_\sigma(\eta) r_\sigma(\eta)^*)\ .
\end{align}
For a detailed derivation of its properties, see Ref.~\onlinecite[Chapter 9.3]{Takesaki2}, and Ref.~\onlinecite[Chapter 4]{PetzBook} and references therein. In analogy with the finite-dimensional case, we can now define the quantum relative entropy in terms of this operator (following Araki~\cite{Araki76}).

\begin{definition}\label{def:relative}
Let $\cM\subseteq\cB(\cH)$ be a von Neumann algebra acting on a Hilbert space $\cH$, $\omega\in\cS(\cM)$ implemented by a vector $\xi \in \cH$, and $\sigma\in\cP(\cM)$. If $\xi$ is in the support of $\sigma$, then the quantum relative entropy of $\omega$ with respect to $\sigma$ is defined as
\begin{align}\label{eq:relative}
D(\omega\|\sigma)=  \bra{\xi}\log\left(\Delta(\omega'_\xi/\sigma)\right)\xi\rangle\ .
\end{align}
The logarithm of the possibly unbounded operator $\Delta(\omega'_\xi/\sigma)$ is defined via the functional calculus. If $\xi$ is not in the support of $\sigma$, we set $D(\omega\|\sigma)=\infty$. 
\end{definition}

It can be shown that the quantum relative entropy is independent of the particular choice of some purifying vector $\xi$ of $\omega$, see the discussion in Ref.~\onlinecite[Chapter 5]{PetzBook} together with~\cite{Uhlmann:1977uo}. We define now the differential quantum conditional von Neumann entropy as the integral version of~\eqref{eq:relative_finite}. For later purposes, we also include an additional finite-dimensional quantum system.

\begin{definition}\label{def:condneumann}
Let $\cM_{XAB}=L^{\infty}(X)\otimes\cB(\cH_{A})\otimes\cM_{B}$ with $(X,\Sigma,\mu,)$ a $\sigma$-finite measure space, $\cH_{A}$ a finite-dimensional Hilbert space, $\cM_{B}$ a von Neumann algebra, and $\omega_{XAB}\in\cS_{\leq}(\cM_{XAB})$. Then, the conditional von Neumann entropy of $XA$ given $B$ is defined as
\begin{align}
h(XA|B)_{\omega}=-\int D(\omega_{AB}^{x}\|\tr_A \otimes \omega_{B})\, d\mu(x)\ .
\end{align}
where $\tr_{A}$ is the trace on $\cH_{A}$. 
\end{definition}
In the following, we use lower case letter $h(X|B)_{\omega}$ if $X$ is a continuous measure space and use uppercase letter, $H(X|B)_{\omega}$, if $X$ is discrete. In the latter case of a discrete measure space, we recover the formula in~\eqref{eq:relative_finite} with now a possible infinite sum 
\begin{align}
H(X|B)_{\omega}=  -\sum_{x\in X}D(p_{x}\omega_{B}^{x}\|\omega_{B})\ .
\end{align}

The following statement shows that the differential quantum conditional von Neumann entropy can be retrieved from the regularized version of its discrete counterpart in the limit of finer and finer coarse grainings. 
\begin{proposition}\label{prop:vNapprox}
Let $\cM_{XB}=L^{\infty}(X)\otimes\cM_{B}$ with $\cM_{B}$ a von Neumann algebra and $(X,\Sigma,\mu,\{\cP_\alpha\})$ a coarse grained measure space.  Consider $\omega_{XB}\in\cS(\cM_{XB})$ such that $-\infty<h(X|B)_{\omega}$, and assume that there exists $\alpha_0> 0$ for which $H(X_{\cP_{\alpha_0}} |B)_{\omega} < \infty$. Then, it follows that
\begin{align} 
h(X|B)_{\omega}=\lim_{\alpha\rightarrow 0} \left(H(X_{\cP_{\alpha}} |B)_{\omega}+\log\alpha \right)\ ,
\end{align}
where $\omega_{X_{\cP_{\alpha}}B}$ is defined as in~\eqref{eq:DiscState}. Furthermore, if $h(X)_{\omega}<\infty$, then it follows that
\begin{align}\label{eq:kuznetsova}
h(X|B)_{\omega}=h(X)_{\omega}-D(\omega_{XB}\|\omega_{X}\otimes\omega_{B})\ .
\end{align}
\end{proposition}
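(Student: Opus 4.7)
The plan is to express both the regularized discrete entropy and the differential entropy as integrals of Araki relative entropies with a common reference $\omega_B$, and to sandwich the limit between a convexity-based upper bound and a lower-semi-continuity-based lower bound. For each balanced partition $\cP_\alpha=\{I_k\}$, I would introduce the coarse-grained density $\bar\omega_B^{x,\cP_\alpha}:=\omega_B^{\cP_\alpha,k}/\alpha$ for $x\in I_k$, where $\omega_B^{\cP_\alpha,k}=\int_{I_k}\omega_B^y\,d\mu(y)$ as in~\eqref{eq:DiscState}. Applying the scaling rule $D(\lambda\rho\|\sigma)=\lambda D(\rho\|\sigma)+\lambda(\log\lambda)\,\rho(\idty)$ for the Araki relative entropy with $\lambda=1/\alpha$ termwise to $H(X_{\cP_\alpha}|B)_\omega=-\sum_k D(\omega_B^{\cP_\alpha,k}\|\omega_B)$ and using $\sum_k\omega_B^{\cP_\alpha,k}(\idty)=1$ yields the core identity
\begin{equation}
H(X_{\cP_\alpha}|B)_\omega+\log\alpha \;=\; -\int_X D\bigl(\bar\omega_B^{x,\cP_\alpha}\,\big\|\,\omega_B\bigr)\,d\mu(x)\,. \notag
\end{equation}
Since $h(X|B)_\omega=-\int_X D(\omega_B^x\|\omega_B)\,d\mu(x)$ by Definition~\ref{def:condneumann}, the first claim reduces to proving $\int f_\alpha\,d\mu\to\int f\,d\mu$ as $\alpha\to 0$, where $f_\alpha:=D(\bar\omega_B^{\cdot,\cP_\alpha}\|\omega_B)$ and $f:=D(\omega_B^{\cdot}\|\omega_B)$.

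First I would establish the upper bound and monotonicity: joint convexity of $D(\cdot\|\omega_B)$ in its first argument gives $f_\alpha(x)\leq\frac{1}{\alpha}\int_{I_k}f(y)\,d\mu(y)$ for $x\in I_k$, hence $\int f_\alpha\,d\mu\leq\int f\,d\mu=-h(X|B)_\omega$ (equivalently $H(X_{\cP_\alpha}|B)_\omega+\log\alpha\geq h(X|B)_\omega$ uniformly in $\alpha$). Applied across a refining partition $\cP_{\alpha'}\leq\cP_\alpha$ the same bound shows that $\alpha\mapsto\int f_\alpha\,d\mu$ is monotone non-decreasing as $\alpha\to 0$, so together with the hypothesis $H(X_{\cP_{\alpha_0}}|B)_\omega<\infty$ (equivalently $\int f_{\alpha_0}\,d\mu>-\infty$) the monotone limit $L:=\lim_{\alpha\to 0}\int f_\alpha\,d\mu$ exists in $\RR$ and satisfies $L\leq\int f\,d\mu$. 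For the matching reverse inequality I would apply the Lebesgue differentiation theorem to the $L^1$-Bochner density $x\mapsto\omega_B^x$ (justified because $\{\cP_\alpha\}$ is an ordered dense sequence of balanced partitions generating $\Sigma$) to deduce $\bar\omega_B^{x,\cP_\alpha}\to\omega_B^x$ in norm $\mu$-almost everywhere, and then invoke lower semi-continuity of the Araki relative entropy to get $\liminf_\alpha f_\alpha(x)\geq f(x)$ a.e. The required Fatou step is delicate because the integrand is not sign-definite; one routes around this by passing to the non-negative shifted sequence $\tilde f_\alpha:=f_\alpha-\bar q_\alpha\log\bar q_\alpha\geq 0$, where $\bar q_\alpha(x):=\bar\omega_B^{x,\cP_\alpha}(\idty)$, whose integral is the discrete classical-quantum mutual information $I(X_{\cP_\alpha};B)_\omega$; Fatou on this non-negative sequence, combined with the standard classical convergence $\int\bar q_\alpha\log\bar q_\alpha\,d\mu\to\int q\log q\,d\mu$, yields $L\geq\int f\,d\mu$ and closes the gap.

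For the second identity, note that the conditional states of $\omega_X\otimes\omega_B$ at $x$ are $q(x)\omega_B$, where $q(x)=\omega_B^x(\idty)$ is the Radon--Nikodym density of $\omega_X$, so that $D(\omega_{XB}\|\omega_X\otimes\omega_B)=\int D(\omega_B^x\,\|\,q(x)\omega_B)\,d\mu(x)$. The scaling rule $D(\rho\|\lambda\sigma)=D(\rho\|\sigma)-\rho(\idty)\log\lambda$ applied pointwise with $\lambda=q(x)$ gives $D(\omega_B^x\|q(x)\omega_B)=D(\omega_B^x\|\omega_B)-q(x)\log q(x)$; integrating and using $-\int q\log q\,d\mu=h(X)_\omega$ (finite by the extra hypothesis) produces $D(\omega_{XB}\|\omega_X\otimes\omega_B)=h(X)_\omega-h(X|B)_\omega$, which rearranges to~\eqref{eq:kuznetsova}. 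The hard part will be the matching lower bound in the first claim: it relies on an operator-valued Lebesgue differentiation theorem on $L^1(X,\cP(\cM_B))$, on lower semi-continuity of the Araki relative entropy for states on a general von Neumann algebra, and on careful handling of the negative contributions in the integrand stemming from the sub-normalization of the $\omega_B^x$.
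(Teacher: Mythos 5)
Your setup is sound: the rescaling identity $H(X_{\cP_\alpha}|B)_\omega+\log\alpha=-\int_X D\bigl(\bar\omega_B^{x,\cP_\alpha}\|\omega_B\bigr)\,d\mu(x)$ is correct, the convexity (equivalently, monotonicity under the conditional expectation onto step functions) upper bound and the monotonicity along refinements match what the paper uses, and your derivation of~\eqref{eq:kuznetsova} via disintegration plus the scaling rule is an acceptable variant of the paper's route through the chain rule (Lemma~\ref{lem:petz3}). The genuine gap is in the matching lower bound. There you subtract $\bar q_\alpha\log\bar q_\alpha$ to make the integrand non-negative and then invoke ``the standard classical convergence'' $\int\bar q_\alpha\log\bar q_\alpha\,d\mu\to\int q\log q\,d\mu$, i.e.\ $H(X_{\cP_\alpha})_\omega+\log\alpha\to h(X)_\omega$. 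This is not available: it is exactly the $\cM_B=\CC$ special case of the statement you are proving, so calling it standard is circular, and, more seriously, it is not implied by the hypotheses, which only give $h(X|B)_\omega>-\infty$ and $H(X_{\cP_{\alpha_0}}|B)_\omega<\infty$. The classical quantities can be infinite: take $q$ piecewise constant on the bins of $\cP_{\alpha_0}$ with bin probabilities $p_k$ satisfying $\sum_k p_k\log(1/p_k)=\infty$, and let $B$ be a classical copy of the bin index; then $H(X_{\cP_{\alpha_0}}|B)_\omega=0$ and $h(X|B)_\omega=\log\alpha_0$, while $h(X)_\omega=H(X_{\cP_\alpha})_\omega=\infty$ and $I(X_{\cP_\alpha};B)_\omega=\infty$ for all $\alpha\leq\alpha_0$. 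In that regime your decomposition $\int f_\alpha\,d\mu=\int\tilde f_\alpha\,d\mu+\int\bar q_\alpha\log\bar q_\alpha\,d\mu$ is of the form ``$\infty-\infty$'' and the Fatou step yields nothing; even when everything is finite, the classical limit would need its own proof under an assumption ($H(X_{\cP_{\alpha_0}})_\omega<\infty$) that you do not have. The pointwise ingredients (martingale a.e.\ convergence of $\bar\omega_B^{x,\cP_\alpha}$ and lower semicontinuity of the Araki relative entropy) are fine, but by themselves they do not integrate on the infinite-measure space without a correct treatment of the negative part.

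For comparison, the paper circumvents precisely this point by localizing rather than by shifting the integrand: it covers $X$ by compact, finite-measure pieces $X^k$ compatible with the partitions, identifies $\int_{X^k}D(\omega_B^x\|\omega_B)\,d\mu=D(\omega_{X^kB}\|\mu_{X^k}\otimes\omega_B)$ by disintegration, and then uses the Ohya--Petz convergence of relative entropy along an increasing net of subalgebras (Lemma~\ref{lem:vNconv}), so that on each piece the discretized quantity converges monotonically to the continuum one\,---\,no a.e.\ density convergence, no Fatou. The sign problem is then handled when interchanging the sum over pieces with the limit $\alpha\to0$, via the Weierstrass criterion with majorant $|f_k(\alpha_0)|+|f_k(0)|$, and it is exactly here that both hypotheses enter ($h(X|B)_\omega>-\infty$ makes $\sum_k|f_k(0)|$ finite, $H(X_{\cP_{\alpha_0}}|B)_\omega<\infty$ makes $\sum_k|f_k(\alpha_0)|$ finite). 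If you wish to salvage your pointwise route, you should replace the subtraction of $\bar q_\alpha\log\bar q_\alpha$ by such a localization to finite-measure sets (controlling each piece by the two-sided bounds $f_k(\alpha_0)\leq f_k(\alpha)\leq f_k(0)$), so that the negative contributions are controlled by the assumed finiteness conditions rather than by an unproven classical limit.
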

We note that in Ref.~\onlinecite{Kuznetsova10}, the conditional von Neumann entropy was defined as in~\eqref{eq:kuznetsova} for $\omega_{AB}\in\cS\big(\cB(\cH_{A}\otimes\cH_{B})\big)$ with $H(A)_{\omega}<\infty$, and separable Hilbert spaces $\cH_{A}$ and $\cH_{B}$.
\begin{proof}
We first write the integral as a series of integrals over a covering $\{X^k\}_{k=0}^\infty$ of $X$ by compact measurable sets with $\mu(X^k\cap X^l) = 0$ for $k\neq l$. Using that the Lebesgue integral can be split over positive and negative parts of the integrand, we can use the monotone convergence theorem to obtain
\begin{align}
 - h(X|B)_{\omega} = \lim_{n\rightarrow \infty}  \sum_{k=1}^n \int_{X^k} D(\omega_{B}^{x}\| \omega_{B})\, d\mu(x)\ .
\end{align}
For a fixed $k$, it follows from disintegration theory, Ref.~\onlinecite[Chapter IV.7]{Takesaki1}, that 
\begin{align}\label{eq:17}
\int_{X^k} D(\omega_{B}^{x}\| \omega_{B})\, dx = D(\omega_{XB}\| \mu_{X^{k}}  \otimes \omega_B)\ ,
\end{align}
where $\mu_{X^{k}} $ denotes the restriction of the Lebesgue measure on $X^k$. Note that $\mu_{X^{k}} $ is now a positive finite functional such that we can apply the approximation result for the quantum relative entropy of states on a von Neumann algebra along a net of increasing subalgebras in $L^\infty(X^k)$ (Lemma~\ref{lem:vNconv}). In particular, we take the net of subalgebras given by the step functions corresponding to the fixed partitions $\cP_\alpha^k$ obtained by restricting $\cP_\alpha$ to $X^k$. 
We assume here that the covering $\{X_k\}$ is taken such that it is compatible with the partitions $\cP_\alpha$ for a small enough $\alpha$ such that $\cP_\alpha^k$ is balanced as well. Note that such a covering exists since one can for instance take the the sets of a fixed partition $\cP_\alpha$ for a large enough $\alpha$.
Let us denote the corresponding alphabet of the induced discrete and finite abelian algebra by $X_\alpha^k$. Hence, we obtain that 
\begin{align}\label{eq1:pf:prop:vNapprox}
 - h(X|B)_{\omega} = \lim_{n\rightarrow \infty}  \lim_{\alpha\rightarrow 0} \sum_{k=1}^n D(\omega_{X_\alpha^k B}\|\mu_{X_\alpha^k}\otimes\omega_{B})\ . 
\end{align}
where $\omega_{X_\alpha^kB}$ and $\mu_{X_\alpha^k}\otimes\omega_{B}$ are states in $\ell^\infty(X_\alpha^k)\otimes\cM_{B}$ and defined as in~\eqref{eq:DiscState}. We therefore have that $\mu_{X_\alpha^k} = \alpha \idty$, where the identity is the one in $\ell^\infty(X_\alpha^k)$, and it follows by an elementary property of the quantum relative entropy (Lemma~\ref{lem:petz2}) that  
\begin{align}\label{eq2:pf:prop:vNapprox}
D(\omega_{X_\alpha^k B}\|\mu_{X_\alpha^k}\otimes\omega_{B})=D(\omega_{X_\alpha^k B}\|\idty\otimes\omega_{B})-p_k\log\alpha\ ,
\end{align}
where $p_k = \int_{X^k} \omega_{B}^x(\idty) dx$ is the probability that an event in the interval $X^k$ occurs.  Hence, in order to obtain the approximation result in the proposition, we have to show that the limits on the right hand side of~\eqref{eq1:pf:prop:vNapprox} can be interchanged. For that, it is sufficient to show that the sum $\sum f_k(\alpha)$ with $f_k(\alpha)=D(\omega_{X_\alpha^k B}\|\mu_{X_\alpha^k}\otimes\omega_{B})$ converges uniformly. By assumption, $H(X_{\cP_{\alpha_0}}|B)_{\omega}<\infty$, and due to the monotonicity of the quantum relative entropy under restrictions (Lemma~\ref{lem:petz1}), we then get that $f_k(\alpha_0)\leq f_k(\alpha) \leq f_k(0) $ for all $k$. Together with~\eqref{eq2:pf:prop:vNapprox} it follows that
\begin{align}
h(X|B)_{\omega} = - \sum_k f_k(0) \leq -\sum_k f_k(\alpha_0)  =  H(X_{\cP_{\alpha_0}} |B)_{\omega}+\log\alpha_0 <\infty\ ,
\end{align}
and since by assumption $h(X|B)_{\omega}>-\infty$, we conclude that $h(X|B)_{\omega}$ is finite. Further, we have that $|f_k(\alpha)| \leq |f_k(\alpha_0)| + |f_k(0)|=M_k $. Note that the terms $f_k(\alpha)$ in the sum can be negative or positive and we need lower and upper bounds in order to bound the absolute value of $f_k(\alpha)$. Using the Weierstrass uniform convergence criteria, it remains to show that $\sum_k M_k$ is finite. The series $\sum_k |f_k(0)|$ is finite since it is upper bounded by $\int |D(\omega_{B}^{x}\| \omega_{B})|\, dx$, which is finite since $h(X|B)_{\omega}$ is finite. Using~\eqref{eq2:pf:prop:vNapprox} and the fact that $D(\omega_{X_\alpha^k B}\|\idty\otimes\omega_{B})\geq 0$ for all $k$, it is easy to see that the series $\sum_k |f_k(\alpha_0)| $ is bounded by $H(X_{\cP_{\alpha_0}}|B)+\log(\alpha_0)$ (which is finite by assumption). This concludes the first statement of the proposition.

The second statement follows from the first together with the chain rule for the quantum relative entropy (Lemma~\ref{lem:petz3}).
\end{proof}

\subsection{Quantum Conditional Min- and Max-Entropy}\label{sec:MinMax}

Quantum conditional min- and max-entropy have already been investigated on infinite-dimensional Hilbert spaces~\cite{Furrer10} and von Neumann algebras~\cite{Berta11,FurrerPhD}. For finite classical systems $X$, the conditional min- and max-entropy for a state $\omega_{XB}$ on the bipartite system $\cM_{XB}=L^\infty(X)\otimes \cM_B$ with $\cM_B$ a von Neumann algebra are given by~\cite{Berta11}
\begin{align}
&H_{\min}(X|B)_{\omega} =-\log\sup\left\{\sum_{x}\omega_{B}^{x}(E_{B}^{x}): \; E_{B}^{x}\in\cM_{B},E_{B}^{x}\geq0,\sum_{x}E_{B}^{x}=\idty_{B}\right\} \label{eq:Guessing}\\
&H_{\max}(X|B)_{\omega}=2 \log\sup\left\{\sum_{x}\sqrt{F(\omega_{B}^{x},\sigma_{B})} :\; \sigma_{B}\in\cS(\cM_{B})\right\}\ , \label{eq:SecKey}
\end{align}
where $F(\cdot,\cdot)$ denotes the fidelity~\eqref{eq:fidelity}. These quantities admit natural extensions to classical-quantum systems where the classical variable takes values in an arbitrary $\sigma$-finite measure space.

\begin{definition}\label{def:cont_entropy}
Let $\cM_{XB}= L^{\infty}(X)\otimes\cM_{B}$ with $(X,\Sigma,\mu)$ a $\sigma$-finite measure space, $\cM_{B}$ a von Neumann algebra, and $\omega_{XB}\in\cS_\leq(\cM_{XB})$. Then, the conditional min-entropy of $X$ given $B$ is defined as
\begin{align}\label{eq:mincont}
\dmin{X}{B}_{\omega} =-\log\sup \left\{\int\omega_{B}^{x}(E_{B}^{x})\,d\mu(x):\;E\in L^\infty(X) \otimes \cM_{B},E\geq0, \int E_{B}^{x}\,d\mu(x)\leq\idty_{B}\right\}\ .
\end{align}
Furthermore, the conditional max-entropy of $X$ given $B$ is defined as
\begin{align}\label{eq:max}
\dmax{X}{B}_{\omega}= 2 \log\sup\left\{\int \sqrt{F(\omega_{B}^{x},\sigma_{B})}\,d\mu(x):\;\sigma_{B}\in\cS(\cM_{B})\right\}\ .
\end{align}
\end{definition}

The quantities in~\eqref{eq:mincont} and~\eqref{eq:max} are well defined since the integrands are measurable and positive. An important example is given for $X=\mathbb R$. Then, for trivial quantum memory~$\cM_{B}=\mathbb C$, the differential min- and max-entropy correspond to the differential R\'enyi entropy of order $\infty$ and $1/2$, respectively,
\begin{align}
&h_{\min}(X)_{\omega}=-\log\Vert \omega_X\Vert_{\infty}\label{eq:renyiinfty}\\
&h_{\max}(X)_{\omega}= 2\log\int\sqrt{\omega^x} \, dx = \log\Vert\omega_{X}\Vert_{\frac{1}{2}}\label{eq:renyi12}\ ,
\end{align}
where $\Vert \cdot \Vert_{p}$ denotes the usual p-norm on $L^p(\mathbb R)$. We note that like any differential entropy, the differential conditional min- and max-entropy can get negative. In particular,
\begin{align}
-\infty\leq h_{\min}(X)_{\omega}<\infty,\quad-\infty<h_{\max}(X)_{\omega}\leq\infty\ .
\end{align}
and the same bounds also hold for the conditional versions in~\eqref{eq:mincont} and~\eqref{eq:max}. 

In the case where the measure space $X$ is discrete and equipped with the counting measure, we retrieve the usual definitions as in~\eqref{eq:Guessing} and~\eqref{eq:SecKey}, with sums now involving infinitely many terms. We therefore use uppercase letters for the entropies, $H_{\min}(X|B)_{\omega}$ and $H_{\max}(X|B)_{\omega}$,  if $X$ is discrete. For $X$ having infinite cardinality, we can assume that $X\cong \mathbb N$ and since all the terms inside the sums are positive, the conditional min- and max-entropy can be obtained from finite sum approximations
\begin{align}
&H_{\min}(X|B)_{\omega}=-\log\sup_{m}\sup\left\{\sum_{x=1}^{m}\omega_{B}^{x}(E_{B}^{x}): \; E_{B}^{x}\in\cM_{B},E_{B}^{x}\geq0,\sum_{x=1}^{m}E_{B}^{x}\leq\idty_{B}\right\}\label{eq:minapprox}\\
&H_{\max}(X|B)_{\omega}=2 \log\sup_{m}\sup\left\{\sum_{x=1}^{m}\sqrt{F(\omega_{B}^{x},\sigma_{B})} :\; \sigma_{B}\in\cS(\cM_{B})\right\}\label{eq:maxapprox}\ .
\end{align}

The following proposition shows that the differential conditional min- and max-entropy for coarse grained measure spaces are retrieved from the discrete quantities in the regularized limit of finer and finer discretization. 

\begin{proposition}\label{thm:MinMaxApprox}
Let $\cM_{XB}=L^{\infty}(X)\otimes\cM_{B}$ with $\cM_{B}$ a von Neumann algebra and $(X,\Sigma,\mu,\{\cP_\alpha\})$ a coarse grained measure space. Then, we have that for $\omega_{XB}\in\cS(\cM_{XB})$, 
\begin{align}\label{thm,eq2:MinMaxApprox}
\dmin XB_\omega=\lim_{\alpha\rightarrow 0}\Big(\Hmin{X_{\cP_\alpha}}B_\omega+\log\alpha\Big)\ ,
\end{align}
where $\omega_{X_{\cP_{\alpha}}B}$ is defined as in~\eqref{eq:DiscState}. Furthermore, if there exists an $\alpha_{0}>0$ such that $H_{\max}(X_{\cP_{\alpha_{0}}})_\omega<\infty$, then we have that
\begin{align}\label{thm,eq1:MinMaxApprox}
\dmax XB_\omega=\lim_{\alpha\rightarrow 0}\Big(\Hmax {X_{\cP_\alpha}}B_\omega+\log\alpha\Big)\ .
\end{align}
\end{proposition}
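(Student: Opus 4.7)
I would reduce each equality to the fact that the cell averages $\omega_B^{\cP_\alpha,k}/\alpha$ are operator-valued conditional expectations of the density $\omega_B^{\cdot}$ onto the sub-$\sigma$-algebra generated by $\cP_\alpha$, and exploit that $\bigcup_\alpha\cP_\alpha$ generates $\Sigma$ by hypothesis to apply martingale convergence. For the min-entropy, the easy inequality $\dmin XB_\omega \leq \Hmin{X_{\cP_\alpha}}B_\omega + \log\alpha$ holds for every $\alpha$ by lifting an admissible discrete POVM $(E_B^k)_k$ on $X_{\cP_\alpha}$ to the piecewise-constant continuous POVM density $x \mapsto E_B^{k(x)}/\alpha$, which realises the value $\alpha^{-1}\sum_k \omega_B^{\cP_\alpha,k}(E_B^k)$ in the variational formula for $p_{\mathrm{guess}}(X|B)_\omega$. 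For the converse, I would take an $\varepsilon$-optimal continuous POVM density $E_B^{x}$ and discretise it by cell averaging, $E_B^{\cP_\alpha,k}:=\int_{I_k}E_B^y\,d\mu(y)$; a Fubini calculation gives
\[
\alpha^{-1}\sum_k \omega_B^{\cP_\alpha,k}(E_B^{\cP_\alpha,k}) = \int \omega_B^x(\bar E_B^{x,\alpha})\,d\mu(x),
\]
where $\bar E_B^{x,\alpha}$ is the operator-valued conditional expectation of $E_B^\cdot$ onto $\cP_\alpha$. Operator-valued martingale convergence yields $\bar E_B^{x,\alpha}\to E_B^x$ pointwise in the $\sigma$-weak topology, and the uniform bound $\|\bar E_B^{x,\alpha}\|\le 1$ combined with the integrability of $x\mapsto\omega_B^x(\idty)$ provides a dominating integrand; dominated convergence closes $p_{\mathrm{guess}}(X|B)_\omega \le \liminf_\alpha\alpha^{-1}P_{\mathrm{guess}}(X_{\cP_\alpha}|B)_\omega$ after sending $\varepsilon \to 0$.

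For the max-entropy, the easy direction $\dmax XB_\omega \leq \Hmax{X_{\cP_\alpha}}B_\omega + \log\alpha$ follows without the finiteness hypothesis: joint concavity of $F(\cdot,\sigma_B)$ together with the homogeneity $F(\alpha\rho,\sigma)=\alpha F(\rho,\sigma)$ yields $F(\omega_B^{\cP_\alpha,k},\sigma_B) \geq \int_{I_k} F(\omega_B^y,\sigma_B)\,d\mu$, and Cauchy--Schwarz on each cell transfers this to the square-root form. Joint concavity of $\sqrt{F(\cdot,\cdot)}$ applied to the decomposition $\omega_B^{\cP_\alpha,k}=\sum_{j\in k}\omega_B^{\cP_{\alpha'},j}$ along a refinement $\cP_{\alpha'}\leq\cP_\alpha$ then shows $\sqrt\alpha\sum_k\sqrt{F(\omega_B^{\cP_\alpha,k},\sigma_B)} \geq \sqrt{\alpha'}\sum_j\sqrt{F(\omega_B^{\cP_{\alpha'},j},\sigma_B)}$ for every fixed $\sigma_B$, so $\Hmax{X_{\cP_\alpha}}B_\omega+\log\alpha$ is non-decreasing in $\alpha$; under the hypothesis $H_{\max}(X_{\cP_{\alpha_0}})_\omega<\infty$ the sequence is also bounded above by $\Hmax{X_{\cP_{\alpha_0}}}B_\omega+\log\alpha_0$, and the limit as $\alpha\to 0$ exists.

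The matching lower bound on this limit is the main obstacle. The plan is first to rewrite, using homogeneity, $\sqrt\alpha\sum_k\sqrt{F(\omega_B^{\cP_\alpha,k},\sigma_B)} = \int\sqrt{F(\tilde\omega_B^{x,\alpha},\sigma_B)}\,d\mu(x)$ with $\tilde\omega_B^{x,\alpha}:=\omega_B^{\cP_\alpha,k(x)}/\alpha$, and then to use martingale convergence $\tilde\omega_B^{x,\alpha}\to\omega_B^x$ together with continuity of $F(\cdot,\sigma_B)$ in its first argument to get $\sqrt{F(\tilde\omega_B^{x,\alpha},\sigma_B)}\to\sqrt{F(\omega_B^{x},\sigma_B)}$ almost everywhere. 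The finiteness hypothesis supplies the uniform bound $\int\sqrt{\tilde\omega_B^{x,\alpha}(\idty)}\,d\mu = \sqrt\alpha\sum_k\sqrt{\omega_B^{\cP_\alpha,k}(\idty)} \leq \sqrt{\alpha_0}\sum_k\sqrt{\omega_B^{\cP_{\alpha_0},k}(\idty)} < \infty$ for $\alpha \leq \alpha_0$ (by the same monotonicity applied with trivial $\cM_B$), which together with Vitali's convergence theorem yields $\lim_\alpha\int\sqrt{F(\tilde\omega_B^{x,\alpha},\sigma_B)}\,d\mu = \int\sqrt{F(\omega_B^{x},\sigma_B)}\,d\mu$ for each fixed $\sigma_B$. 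The remaining and most delicate step is to interchange $\lim_\alpha$ with $\sup_{\sigma_B}$; I would handle this by combining concavity of $\sigma_B \mapsto \sqrt\alpha\sum_k\sqrt{F(\omega_B^{\cP_\alpha,k},\sigma_B)}$, weak-$\ast$ compactness of the state space of $\cM_B$, upper semicontinuity in $\sigma_B$, and the monotone pointwise decay of $g(\alpha,\sigma_B)$ to its limit to extract a limiting near-optimiser via a Dini-type argument, thereby closing the max-entropy identity.
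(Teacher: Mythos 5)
Your min-entropy argument and the easy direction plus monotonicity for the max-entropy are essentially sound, and they take a genuinely different route from the paper: the paper restricts to a compact exhaustion of $X$, uses $\sigma$-weak density of the step-function subalgebras together with normality of $\omega_{XB}$ to identify the supremum over POVMs directly with the discretized suprema (min-entropy), and for the max-entropy rewrites everything as fidelities of classical-quantum states via disintegration and the Alberti/Alberti--Uhlmann variational formulas; your POVM lifting/cell-averaging plus martingale convergence replaces this and can be made rigorous by working in the predual (the map $x\mapsto\omega_B^x$ is a Bochner-integrable $\cM_{B*}$-valued function, so closed vector-valued martingale convergence applies). Two slips in that part are repairable: the bound $\|\bar E_B^{x,\alpha}\|\leq 1$ is false in general for POVM densities normalized by $\int E_B^x\,d\mu(x)\leq\idty$ (only $\|\bar E_B^{x,\alpha}\|\leq\|E\|_\infty$ holds, which suffices for domination); and a uniform $L^1$ bound on $x\mapsto\sqrt{\tilde\omega_B^{x,\alpha}(\idty)}$ is not by itself what Vitali's theorem needs --- you need uniform integrability including tightness at spatial infinity, which does follow from the hypothesis via the cell-wise Cauchy--Schwarz estimate $\int_{I_k^{\alpha_0}}\sqrt{\tilde\omega_B^{x,\alpha}(\idty)}\,d\mu(x)\leq\sqrt{\alpha_0\,\omega_B^{\cP_{\alpha_0},k}(\idty)}$ for $\alpha\leq\alpha_0$, but you should say so.

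The genuine gap is the step you yourself flag: interchanging $\sup_{\sigma_B}$ with the limit $\alpha\to 0$ in the max-entropy lower bound, which is the hard direction of a minimax inequality ($\inf_\alpha\sup_\sigma\leq\sup_\sigma\inf_\alpha$) and never comes for free. The ingredients you propose do not exist in this setting: the set of normal states $\cS(\cM_B)$ of an infinite-dimensional von Neumann algebra is not weak-$\ast$ compact (its weak-$\ast$ closure contains singular states), the map $\sigma_B\mapsto\sum_k\sqrt{F(\omega_B^{\cP_\alpha,k},\sigma_B)}$ is an increasing limit of upper semicontinuous partial sums and hence need not be upper semicontinuous, and a Dini-type argument requires precisely the compactness and the continuity of the limit function that are missing; so ``extracting a limiting near-optimiser'' is unjustified as stated. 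What actually closes this step in the paper is not compactness but uniformity in $\sigma$: the same tail estimate $\sqrt{\alpha F(\omega_B^{\cP_\alpha,k},\sigma_B)}\leq\sqrt{\alpha\,\omega_B^{\cP_\alpha,k}(\idty)}$ that you use for Vitali gives, under the hypothesis $H_{\max}(X_{\cP_{\alpha_0}})_\omega<\infty$, a bound on the tails of the sums that is uniform in $\sigma_B$ and in $\alpha\leq\alpha_0$ (a Weierstrass-type criterion); the paper uses the resulting uniform convergence of the finite-region truncations $f_n(\sigma,\alpha)$, together with monotonicity in $\alpha$, to exchange $\sup_\sigma$ with $\inf_\alpha$. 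To repair your proof you must upgrade your per-$\sigma$ convergence $f(\sigma,\alpha)\to f(\sigma,0)$ to a statement uniform in $\sigma_B$ (the tail bound above gives exactly this after reducing to finitely many cells); without that, the proposal does not prove \eqref{thm,eq1:MinMaxApprox}.
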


A similar result under additional conditions and with different techniques is derived in the thesis of one of the authors~\cite{FurrerPhD}. 
We emphasize that the limits for $\alpha \rightarrow 0$ in~\eqref{thm,eq2:MinMaxApprox} and~\eqref{thm,eq1:MinMaxApprox} can be replaced by an infimum over $\alpha$. This emerges directly from the following proof of Proposition~\ref{thm:MinMaxApprox}.

\begin{proof}
We start with the differential conditional min-entropy. Let us fix an $\alpha_0$ and consider $\cP_{\alpha_0}=\{I^{\alpha_0}_l\}_{l\in\Lambda}$ where we can assume that $\Lambda = \{1,2,3,...\} \subset \mathbb N$. For $k\in\Lambda$, we then define $C_k=\bigcup_{l=1}^k I_l^{\alpha_0}$ which is compact according to the definition of a coarse graining. We then write the differential min-entropy as 
\begin{align}
h_{\min}(X|B)_{\omega} =-\log\sup_{k}\sup\Big\{\omega_{XB}(E):\;E\in L^\infty(C_k) \otimes \cM_{B},E\geq0, \int E_{B}^{x}\,d\mu(x)\leq\idty_{B}\Big\}\ .
\end{align}
Since $C_k$ is compact, the set of step functions $\cT^k=\bigcup_{\alpha\leq \alpha_0} \cT^k_\alpha$ with $\cT^k_\alpha$ the step functions corresponding to partitions $\cP^k_{\alpha}$ defined as the restriction of $\cP_\alpha$ onto $C_k$ is $\sigma$-weakly dense in $L^\infty(C_k)$. Because $\omega_{XB}$ is $\sigma$-weakly continuous we get that
\begin{align}\label{pf,thm:MinMaxApprox,eq1}
h_{\min}(X|B)_{\omega} =-\log\sup_{k}\sup_{\alpha} \Big\{\omega_{XB}(E):E\in \cT^k_\alpha \otimes \cM_{B},E\geq0, \int E_{B}^{x}\,d\mu(x)\leq\idty_{B}\Big\}\ ,
\end{align}
where we used that a $\{\cP^k_\alpha\}$ is an ordered family of partitions. By exchanging the two suprema, we find that the right hand side of~\eqref{pf,thm:MinMaxApprox,eq1} reduces to the infimum of $H_{\min}(X_{\cP_{\alpha}}|B)_{\omega}+\log\alpha$ over $\alpha$, with $\omega_{X_{\cP_{\alpha}B}}$ defined as in~\eqref{eq:DiscState}. Finally, we note that since the expression on the right hand side of~\eqref{pf,thm:MinMaxApprox,eq1} is monotonic in $\alpha$, the infimum over $\alpha$ can be exchanged by the limit $\alpha \rightarrow 0$. 

To show the approximation of the differential conditional max-entropy, we start by using a different but equivalent expression for the fidelity~\cite{Alberti-1983}
\begin{align}
h_{\max}(X|B)_{\omega}=2\log\sup\left\{\int \sup_{U(x)\in\pi(\cM_{B})'}|\bra{\xi_{\omega}^{x}}U(x)\ket{\xi_{\sigma}}|dx : \; \sigma_{B}\in\cS(\cM_{B})\right\}\ ,
\end{align}
where $\pi$ is some fixed representation of $\cM_{B}$ in which all $\omega_{B}^{x}$ and $\sigma_{B}$ admit vector states $\ket{\xi^{x}_{\omega}}$, $\ket{\xi_{\sigma}}$, respectively, and the supremum is taken over all elements $U(x)\in\pi(\cM_{B})'$ with $\|U(x)\|\leq1$. We note that we can always choose $U(x)$ such that $\bra{\xi_{\omega}^{x}}U(x)\ket{\xi_{\sigma}}$ is positive. It follows by the $\sigma$-finiteness of the measure space together with the theorem of monotone convergence, that we can find a sequence of sets $X^n$ all having finite measure, and
\begin{align}
h_{\max}(X|B)_{\omega}&=2\log\sup_{\sigma_{B}\in\cS(\cM_{B})}\lim_{n\to\infty}\sup_{U(x)\in\pi(\cM_{B})'}\int_{X^n}\bra{\xi_{\omega}^{x}}U(x)\ket{\xi_{\sigma}}dx\ .
\end{align}
For later reasons we note that the sequence $X^n$ can be chosen such that it is compatible with the partitions in the sense that for every $n$ the restriction of $\cP_\alpha$ onto $X^n$ forms again a balanced partition with measure $\alpha$. One can take for instance $X^n$ to be generated by finite increasing unions of the sets in a partition $\cP_{\alpha_0}$ for a fixed $\alpha_0$. Then for all $\alpha \leq \alpha_0$ the condition is satisfied. It then follows from disintegration theory of von Neumann algebras, Ref.~\onlinecite[Chapter IV.7]{Takesaki1}, that the expression involving the second supremum and the integral can again be recognized as a fidelity, more precisely, as the square root of $F(\omega_{X^n B}, \mu_{X^n} \otimes\sigma_B)$, where $\omega_{X^n B}$ is the state restricted to the subalgebra $L^\infty(X^n) \otimes\cM_B \subseteq L^\infty(X) \otimes\cM_B$, and $\mu_{X^n}$ is the Lebesgue measure restricted to the set $X^n$. Because $X_n$ is of finite measure, $\mu_{X^n}$ can be identified as a positive functional on $L^\infty(X^n) \otimes\cM_B$. The fidelity between the two positive forms $\omega_{X^n B}$ and $\mu_{X^n} \otimes\sigma_B$ can be approximated by evaluating~\cite{Alberti:2000ve}
\begin{align}
F(\omega_{X^n B}, \mu_{X^n} \otimes\sigma_B) = \inf \left\{ \sum_j \omega_{X^n B}(e_j)\, \mu_{X^n} \otimes\sigma_B(e_j) \right\}\ ,
\end{align}
where $\{e_j\} \subset L^\infty(X^n) \otimes\cM_B$ are finitely many orthogonal projections summing up to the identity. Using the same reasoning as for the conditional min-entropy, we can restrict this infimum to finite sets of orthogonal projections in $\cT \otimes \cM_B$. Since any such projection is of the form $\chi_{I^\alpha_{k}}\otimes P^{k}_{B}$, for a projection $P^k_B \in \cM_B$, we find that
\begin{align}
F(\omega_{X^n B},\mu_{X^n}\otimes\sigma_B)&=\inf_{\alpha>0}F(\omega_{X^{n}_{\cP_{\alpha}}B},\mu_{X^{n}_{\cP_{\alpha}}}\otimes\sigma_{B})=\lim_{\alpha\to0}F(\omega_{X^{n}_{\cP_{\alpha}}B},\mu_{X^{n}_{\cP_{\alpha}}}\otimes\sigma_{B})\ .
\end{align}
Note that the infimum can be replaced by the limit since the family $\{P_\alpha\}$ is ordered and the fidelity is monotonic under restrictions~\cite{Alberti-1983}. This leads to
\begin{align}\label{eq:limits_wrong}
h_{\max}(X|B)_{\omega} &= 2 \log\sup_{\sigma_{B}\in\cS(\cM_{B})}\lim_{n\to\infty}\lim_{\alpha\to0}\sqrt{F(\omega_{X^n_{\cP_{\alpha}}B}, \mu_{X^n_{\cP_{\alpha}}} \otimes\sigma_B)}\ ,
\end{align}
and in order to proceed, we have to interchange the limits. For this, we define
\begin{align}
f_{n}(\sigma,\alpha)=\sqrt{F(\omega_{X^n_{\cP_{\alpha}}B},\mu_{X^n_{\cP_{\alpha}}}\otimes\sigma_B)}=\sum_{k\in\Lambda(\alpha,n)}\sqrt{\alpha\cdot F(\omega_{B}^{\cP_{\alpha}^{n},k},\sigma_B)}\ ,
\end{align}
where we have used that $\mu_{X^n}$ restricted to $\cN_{\alpha}$ is just the counting measure multiplied by $\alpha$. Since $f_{n}(\sigma,\alpha)$ is monotonously increasing in $\alpha$, there exists by assumption an $\alpha_{0}$ such that
\begin{align}
f_{n}(\sigma,\alpha)\leq\sum_{k\in\Lambda(\alpha_{0},n)}\sqrt{\alpha_{0}\cdot F(\omega_{B}^{\cP_{\alpha_{0}}^{n},k},\sigma_B)}\leq\sum_{k\in\Lambda(\alpha_{0},n)}\sqrt{\alpha_{0}\cdot\omega_{B}^{\cP_{\alpha_{0}}^{n},k}(\idty)}
\end{align}
is finite in the limit $n\to\infty$. It follows by the Weierstrass' uniform convergence theorem that the sequence $f_{n}(\sigma,\alpha)$ converges uniformly in $\sigma$ and $\alpha$ to the limiting function $f(\sigma,\alpha)=\lim_{n\to\infty}f_{n}(\sigma,\alpha)$. Hence, the limits in~\eqref{eq:limits_wrong} can be interchanged, and we arrive at
\begin{align}
h_{\max}(X|B)_{\omega}=2\log\sup_{\sigma_{B}\in\cS(\cM_{B})}\lim_{\alpha\to0}f(\sigma,\alpha)=2\log\sup_{\sigma_{B}\in\cS(\cM_{B})}\inf_{\alpha>0}f(\sigma,\alpha)\ .
\end{align}
In the last step, we need to interchange the supremum with the infimum. For this, we extend the function $f(\sigma,\alpha)$ by setting $f(\sigma,0)=\int\sqrt{F(\omega_{B}^{x},\sigma_{B})}d\mu(x)$ such that we can write
\begin{align}\label{eq:minimax_wrong}
h_{\max}(X|B)_{\omega}=2\log\sup_{\sigma_{B}\in\cS(\cM_{B})}\inf_{\alpha\in[0,\alpha_{0}]}f(\sigma,\alpha)\ .
\end{align}
Since $f_{n}(\sigma,\alpha)$ converges uniformly in $\sigma$ and $\alpha$, and $f(\sigma,\alpha)$ is monotonically increasing in $\alpha$, we have
\begin{align}
\inf_{\alpha>0}f(\sigma,\alpha)=f(\sigma,0)\ .
\end{align}
Hence, we find that
\begin{align}
\sup_{\sigma_{B}\in\cS(\cM_{B})}\inf_{\alpha\in[0,\alpha_{0}]}f(\sigma,\alpha)=\inf_{\alpha\in[0,\alpha_{0}]}\sup_{\sigma_{B}\in\cS(\cM_{B})}f(\sigma,\alpha)\ ,
\end{align}
and by using that $f(\sigma,\alpha)$ is monotonically increasing in $\alpha$ we obtain with~\eqref{eq:minimax_wrong} that
\begin{align}
h_{\max}(X|B)_{\omega}=\lim_{\alpha\rightarrow0}\Big(H_{\max}(X_{\cP_{\alpha}}|B)_{\omega}+\log\alpha\Big)\label{eq:maxuncert}\ .
\end{align}
\end{proof}

We note that the condition $H_{\max}(X_{\cP_{\alpha_{0}}})_\omega<\infty$ required for the approximation of the differential conditional max-entropy does not follow from $h_{\max}(X)_\omega<\infty$. Indeed, there exist states $\omega_{X}\in L^{1}(X)$ with $h_{\max}(X)_\omega<\infty$ but $H_{\max}(X_{\cP_{\alpha}})_\omega=\infty$ for all $\alpha>0$. As an example of such a state $\omega_X$ for $X=\mathbb R$ take the normalization of the function which is equal to $1$ for $x\in [k,k-1/k^2]$, $k\in\mathbb N$, and $0$ else. But conversely, $H_{\max}(X_{\cP_{\alpha}})_\omega < \infty$ implies that $h_{\max}(X)_\omega<\infty$ since the relation $h_{\max}(X|B)_\omega\leq H _{\max}(X_{\cP_{\alpha}}|B)_\omega+\log\alpha$ holds for all $\alpha>0$ and $\omega_{XB}\in\cS(\cM_{XB})$.

In the important case where $X=\mathbb R$, the condition $H_{\max}(X_{\cP_{\alpha_{0}}})_\omega<\infty$ is satisfied under the assumption that the second moment of the distribution $\omega_X$ is finite (which is often a valid assumption in physical applications).

\begin{lemma}\label{lem:MaxCond}
Let $X =\mathbb{R}$ and $\omega\in \cS(L^\infty(X))$ such that $\int \omega(x) x^2 d x<\infty$. Then, there exists a partition $\cP_\alpha$ of $X$ into intervals of length $\alpha >0$ such that $H_{\max}(X_{\cP_\alpha})_\omega<\infty$.
\end{lemma}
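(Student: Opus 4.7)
The plan is to reduce the claim to an elementary calculation using the fact that for trivial quantum memory, $H_{\max}(X_{\mathcal{P}_\alpha})_\omega$ reduces to $2\log\sum_k\sqrt{p_k^\alpha}$, where $p_k^\alpha=\int_{I_k^\alpha}\omega(x)\,d\mu(x)$ denotes the probability mass of the $\alpha$-interval $I_k^\alpha=[k\alpha,(k+1)\alpha]$. Indeed, specializing Definition~\ref{def:cont_entropy} to $\mathcal{M}_B=\mathbb{C}$ gives $\sigma_B=1$ and $F(p_k^\alpha,1)=p_k^\alpha$, so the task reduces to exhibiting one $\alpha>0$ with
\begin{equation}
\sum_{k\in\mathbb{Z}}\sqrt{p_k^\alpha}<\infty .
\end{equation}

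The key input is the second-moment hypothesis. I would fix any $\alpha>0$ (in fact the argument works for every $\alpha$) and exploit the pointwise bound $x^2\geq (|k|-1)^2\alpha^2$ valid on $I_k^\alpha$ whenever $|k|\geq 2$. This yields
\begin{equation}
p_k^\alpha \;\leq\; \frac{1}{(|k|-1)^2\alpha^2}\int_{I_k^\alpha} x^2\,\omega(x)\,d\mu(x)
\qquad (|k|\geq 2).
\end{equation}
Extracting a square root and applying the Cauchy--Schwarz inequality to the tail gives
\begin{equation}
\sum_{|k|\geq 2}\sqrt{p_k^\alpha}
\;\leq\; \frac{1}{\alpha}\sqrt{\sum_{|k|\geq 2}\frac{1}{(|k|-1)^2}}\;\sqrt{\sum_{|k|\geq 2}\int_{I_k^\alpha}x^2\,\omega(x)\,d\mu(x)}
\;\leq\; \frac{1}{\alpha}\sqrt{\frac{\pi^2}{3}}\sqrt{\int_{\mathbb{R}} x^2\,\omega(x)\,d\mu(x)},
\end{equation}
which is finite by hypothesis. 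The remaining two terms $\sqrt{p_0^\alpha}+\sqrt{p_{-1}^\alpha}\leq 2$ contribute only a finite amount, so the full series converges and $H_{\max}(X_{\mathcal{P}_\alpha})_\omega<\infty$ as claimed.

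There is no real obstacle here; the only subtlety is being careful with the sign of $k$ when translating the bound $|x|\geq(|k|-1)\alpha$ into an estimate on $p_k^\alpha$, and isolating the finitely many central intervals (where the second-moment bound degenerates) from the tail where Cauchy--Schwarz is applied. The argument simultaneously shows that in the presence of a finite second moment the discrete max-entropy is finite for every $\alpha$, which is a slightly stronger conclusion than what the lemma requires.
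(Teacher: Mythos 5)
Your proof is correct, and it follows the same overall skeleton as the paper's: reduce $H_{\max}(X_{\cP_\alpha})_\omega<\infty$ to the convergence of $\sum_k\sqrt{\omega_k}$ with $\omega_k=\int_{I_k}\omega(x)\,dx$, and then feed in the interval-wise lower bound $x^2\gtrsim k^2\alpha^2$ on $I_k$ coming from the finite second moment. Where you diverge is the final summability step: the paper splits the index set into $\Delta=\{k: k^2\sqrt{\omega_k}\geq 1\}$ and its complement, bounding $\sum_{k\in\Delta}\sqrt{\omega_k}$ by $\sum_k k^2\omega_k$ and $\sum_{k\notin\Delta}\sqrt{\omega_k}$ by $\sum_k 1/k^2$ (an additive, max-type estimate), whereas you apply Cauchy--Schwarz to $\sum_{|k|\geq2}\frac{1}{(|k|-1)\alpha}\bigl(\int_{I_k}x^2\omega\bigr)^{1/2}$ directly. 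Both arguments rest on exactly the same two inputs (finiteness of the second moment and summability of $1/k^2$), but your route is slightly slicker and yields an explicit quantitative bound $\sum_k\sqrt{\omega_k}\leq 3+\frac{\pi}{\alpha\sqrt{3}}\bigl(\int x^2\omega(x)\,dx\bigr)^{1/2}$, valid for every $\alpha>0$, while the paper's splitting is more elementary (no Cauchy--Schwarz) but nonquantitative; the paper's fixed $\alpha$ is also arbitrary, so the ``every $\alpha$'' strengthening is implicit there too. One small slip: excluding $|k|\geq 2$ leaves the three central intervals $k=-1,0,1$, not two, so you should also account for $\sqrt{\omega_1}\leq 1$; since each of the finitely many central terms is at most $1$, this does not affect the conclusion.
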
 

\begin{proof}
Let us fix $\alpha$ and take the partition $\cP_\alpha$ of $X$ into intervals $I_k=[k\alpha, (k+1)\alpha] $ for $k\in \mathbb Z$. The max-entropy $H_{\max}(X_{\cP_\alpha})_\omega$ is finite if and only if $\sum_{k} \sqrt{\omega_k} $ with $\omega_k=\int_{I_k} \omega(x)  dx$ is finite. By means of the monotone convergence theorem, we can write 
\begin{align}
\int x^2 \omega(x) \, dx =  \sum_{k\geq 0} \int_{I_k} \omega(x) x^2 \, dx + \sum_{k < 0} \int_{I_k} \omega(x) x^2 \, dx\ .
\end{align}
For the following we only consider the sum over $k\geq 0$, but the same argument can also be applied to the sum over $k < 0$. From the monotonicity of the square and the definition of $I_k$ follows that $\int_{I_k} \omega(x) x^2  dx \geq (\alpha k)^2 \int_{I_k} \omega(x) dx =  (\alpha k)^2 \omega_k$. Hence, we find that 
\begin{align}
 \alpha^2  \sum_{k}  k^2\omega_k  \leq  \sum_{k} \int_{I_k} \omega(x) x^2 \, dx < \infty\ ,
\end{align} 
and since all terms are positive we conclude that the sum $\sum  k^2\omega_k$ converges absolutely. We set $\Delta = \{k \in\mathbb N \, : \,  k^2\sqrt{\omega_k} \geq 1 \}$ and write
\begin{align}
\sum_{k\in \mathbb N}  k^2\omega_k  = \sum_{k\in \Delta}  k^2\omega_k  + \sum_{k\in \NN \backslash \Delta}  k^2\omega_k  \geq  \sum_{k\in \Delta}  \sqrt{\omega_k}   \ ,
\end{align}
where we used that absolute converging series can be reordered and that $\sum  k^2\omega_k= \sqrt{\omega_k}(k^2\sqrt{\omega_k}) \geq \sqrt{\omega_k}$ for all $k\in\Delta$. Hence, we find that  $ \sum_{k\in \Delta}  \sqrt{\omega_k}$ converges absolutely. Moreover, by definition of $\Delta$, it holds that $\sqrt{\omega_k} < 1/k^2$ for all $k\in \NN \backslash \Delta$ such that $\sum_{k\in \NN\backslash \Delta} \sqrt{\omega_k} \leq \sum_{k\in \NN\backslash \Delta} {1}/{k^2} < \infty$. 
Using again that absolutely converging series can be reordered, we finally obtain 
\begin{align}
\sum_{k\in \mathbb N} \sqrt{\omega_k} = \sum_{k\in \Delta} \sqrt{\omega_k } +  \sum_{k\in \NN\backslash \Delta} \sqrt{\omega_k}   < \infty  \ .
\end{align}
\end{proof}

%%%%%%%%%%%%%%%%%%%%%%%%%%%%%%%%%%%%%%%%%%%%%%%%%%%%%%%%%%%%%%%

%%%%%%%%%%%%%%%%%%%%%%%%%%%%%%%%%%%%%%%%%%%%%%%%%%%%%%%%%%%%%%%

\section{Entropic Uncertainty Relations in the Presence of Quantum Memory}\label{sec:relations}

In the following sections, we first derive the uncertainty relations for arbitrary measurements and then discuss the special case of position and momentum observables. Our starting point is a recent proof technique developed by Coles et al.~\cite{PhysRevLett.108.210405} (see also Ref.~\onlinecite{Tomamichel11}) for finite-dimensional Hilbert spaces, which we generalize to von Neumann algebras and continuous measure spaces by means of the approximation results derived in Section~\ref{sec:entropy}. The advantage of the applied proof strategy is that it only relies on basic properties of the involved entropies. We start with the uncertainty relation for the quantum conditional min- and max-entropy. 

%%%%%%%%%%%%%%%%%%%%%%%%%%%%%%%%%%%%%%%%%%%%%%%%%%%%%%%%%%%%%%%

\subsection{Uncertainty Relations in Terms of Quantum Conditional Min- and Max-Entropy}\label{sec:minmaxrelations}

An uncertainty relation for conditional min- and max-entropy was first derived in the finite-dimensional setting~\cite{Tomamichel11}, and then generalized to measurements with a finite number of outcomes on von Neumann algebras~\cite{Berta11}. Before stating the extension to measurements with continuous outcomes, we first prove the uncertainty relation for the case of measurements with an infinite, but countable number of outcomes. We emphasize here that this result cannot directly be inferred from the similar relation for a finite measurement range since the uncertainty relation does not generalize to non-normalized POVMs.  

\begin{proposition}\label{prop:minmaxdisc}
Let $\cM_{ABC}$ be a tripartite von Neumann algebra, $\omega_{ABC}\in\cS(\cM_{ABC})$, $X$ and $Y$ countable, and $E_{X}=\{E_x\}_{x\in X} \in\mathrm{Meas}(X,\cM_{A})$ and $F_{Y}=\{F_y\}_{y \in Y} \in\mathrm{Meas}(Y,\cM_{A})$. Then, we have that
\begin{align}
H_{\max}(X|B)_{\omega}+H_{\min}(Y|C)_{\omega}\geq -\log c(E_{X},F_{Y})\ ,
\end{align}
where the overlap of the measurements is given by
\begin{align}\label{eq:ccountable}
c(E_{X},F_{Y})= \sup_{x,y}\|E_x^{1/2} F_y^{1/2}\|^{2}\ .
\end{align}
\end{proposition}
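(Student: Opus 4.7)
The plan is to adapt the strategy of Coles, Colbeck, Yu, and Zwolak (Ref.~\onlinecite{PhysRevLett.108.210405}) from the finite-dimensional setting to countable outcomes and general von Neumann algebras, building on the tools developed in Ref.~\onlinecite{Berta11} and Ref.~\onlinecite{FurrerPhD}. The backbone of the argument is the min--max duality for classical-quantum states with pure branches, together with a direct operational comparison of guessing probabilities.

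First I would apply Naimark's dilation theorem to both $E_X$ and $F_Y$, replacing the POVMs by projective measurements on an enlarged ambient algebra $\tilde{\cM}_A$; this leaves the overlap $c(E_X,F_Y)$ unchanged and does not alter the classical-quantum states that define $H_{\min}$ and $H_{\max}$. Next, I would purify $\omega_{ABC}$ via the GNS construction to a vector state on $ABCR$. The projective $E$-measurement on $A$ then yields a classical-quantum state $\omega_{XBCR}$ whose branches $\omega_{BCR}^x$ are pure, which is the structural hypothesis needed to invoke the min--max duality
\begin{equation*}
H_{\max}(X|B)_\omega = -H_{\min}(X|CR)_\omega\,.
\end{equation*}
Substituting this identity reduces the proposition to the operational statement
\begin{equation*}
P_{\textrm{guess}}(Y|C)_\omega \leq c(E_X,F_Y)\cdot P_{\textrm{guess}}(X|CR)_\omega\,.
\end{equation*}

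To establish this guessing-probability inequality, I would take an arbitrary POVM $\{G^C_y\}_y$ on $C$ and construct an explicit POVM $\{G^{CR}_x\}_x$ on $CR$ satisfying $\sum_x \omega_{CR}^x(G^{CR}_x) \geq c^{-1}\sum_y \omega_C^y(G^C_y)$. The construction exploits the pure-state correspondence between the $A$-side and its complement $BCR$: on $R$ one can implement a conjugate of the $F_Y$-measurement, and tensoring this with the $G^C_y$-strategy on $C$ produces a POVM indexed by pairs $(x,y)$. A coarse-graining $y\mapsto x$ designed from the overlap inequality $\|E_x^{1/2}F_y^{1/2}\|^2\leq c$, together with a normalization check, then delivers the required $\sum_x G^{CR}_x\leq \idty_{CR}$ and the $c$-loss factor.

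The principal obstacles I anticipate are twofold. First, extending the min--max duality to classical-quantum states with a countably infinite classical register and an arbitrary-type von Neumann algebra on the quantum side requires careful handling of the GNS data and the spatial derivative formalism of Section~\ref{sec:prel}; this is, however, precisely the regime treated in Ref.~\onlinecite{FurrerPhD,Berta11}. Second, because the paper explicitly notes that the finite-outcome uncertainty relation does \emph{not} survive truncation to non-normalized POVMs, the above construction has to be carried out directly in the countable setting, and one must verify $\sigma$-weak convergence of the series defining $G^{CR}_x$ together with the normalization constraint without relying on a finite-outcome limiting argument.
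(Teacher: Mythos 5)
The reduction at the heart of your plan breaks down. The duality you invoke is not $H_{\max}(X|B)_{\omega}=-H_{\min}(X|CR)_{\omega}$: the dual conditioning system must be the full purifying system of the classical-quantum state $\omega_{XB}$, which necessarily contains a coherent copy $X'$ of the classical register (and, unless the dilated measurement is rank one, the measured system as well); purity of the branches $\omega_{BCR}^{x}$ does not remove the need for $X'$. In the paper's proof the identity reads $H_{\max}(X|B)_{\omega}=-H_{\min}(X|X'ACD)_{\psi\circ V}$ with $D$ the commutant. Dropping $X'$ and $A$ still gives the inequality $H_{\max}(X|B)_{\omega}\geq-H_{\min}(X|CR)_{\omega}$ by data processing, so your reduction is logically sufficient, but the operational statement you reduce to, $P_{\textrm{guess}}(Y|C)_{\omega}\leq c(E_X,F_Y)\cdot P_{\textrm{guess}}(X|CR)_{\omega}$, is simply false. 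Counterexample: let $\cM_C$ be trivial, $\omega_{AB}$ the maximally entangled pure state on $\cB(\mathbb{C}^d\otimes\mathbb{C}^d)$ (so the GNS purifier $R$ is trivial too), and $E_X,F_Y$ rank-one projective measurements in two mutually unbiased bases, hence $c=1/d$ and all branches are pure. Then $P_{\textrm{guess}}(Y|C)=1/d$ while $c\cdot P_{\textrm{guess}}(X|CR)=1/d^{2}$; equivalently $H_{\max}(X|B)=0$ whereas $-H_{\min}(X|CR)=-\log d$, so your claimed duality also fails despite pure branches. The systems $X'$ and $A$ you discarded are exactly what carries the guessing power in the correct statement $P_{\textrm{guess}}(Y|C)\leq c\cdot P_{\textrm{guess}}(X|X'ACD)$, and the paper establishes that statement not by constructing a guessing POVM (in the example above $CR$ is trivial, so no ``conjugate $F_Y$-measurement on $R$'' can exist) but by monotonicity of $D_{\max}$ under the sub-unital map $a\mapsto VaV^{*}$ built from the Stinespring isometry $V\ket{\psi}=\sum_x E_x^{1/2}\ket{\psi}\otimes\ket{x,x}$, followed by restriction to $AC$, application of $F_Y$, and the operator estimate involving $\|E_x^{1/2}F_y^{1/2}\|^{2}$.

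Two further problems. First, your opening claim that Naimark dilation of both POVMs ``leaves the overlap unchanged'' is wrong: passing to projective dilations can strictly increase $\sup_{x,y}\|E_x^{1/2}F_y^{1/2}\|^{2}$ (for $E_x=\idty/|X|$ and $F_y=\idty/|Y|$ the overlap is $(|X||Y|)^{-1}$, yet any commuting projective dilations have overlap $1$), so proving the projective case does not deliver the stated constant; this is why the paper keeps $F_Y$ as a POVM and lets the factors $E_x^{1/2}$ carried by the dilation isometry meet $F_y$ inside the norm. Second, the duality lemma available in the cited references (Lemma~\ref{lem:duality}) requires a finite-dimensional conditioned system, so it cannot be applied directly to a countably infinite register $X$ over a general von Neumann algebra; the paper circumvents precisely this by first proving a finite-outcome, sub-normalized version of the inequality with explicit correction terms and then taking limits $X_n\nearrow X$, $Y_m\nearrow Y$ with a uniform-convergence argument---machinery your ``direct countable'' plan would still need even after the issues above are repaired.
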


\begin{proof}
The main difference to the proofs of the uncertainty relations in~\cite{Tomamichel11,PhysRevLett.108.210405} is that we have to take an infinite number of outcomes into account. We achieve this by first showing an inequality for sub-normalized measurements with a finite number of outcomes, and then use a limit argument to obtain the uncertainty relation for measurements with an infinite number of outcomes. We describe sub-normalized measurements $E_X$ and $F_Y$ by a finite collection of positive operators $\{E_x\}_{x\in X}$ and $F_Y=\{F_y\}_{y\in Y}$, which sum up to $M=\sum_{x}E_{x}\leq\idty$ and $\sum_{y}F_{y}\leq\idty$.

Let us take a Hilbert space $\cH$ where $\cM_{ABC}\subseteq\cB(\cH)$ is faithfully embedded and there exists a purifying vector $\ket{\psi}\in\cH$ for $\omega_{ABC}$, that is, $\omega_{ABC}(\cdot)=\bra{\psi}\cdot\psi\rangle$. We choose a Stinespring dilation (see Ref.~\onlinecite[Theorem 4.1]{Paulsen}) for $E_{X}$ of the form
\begin{align}
V:\cH\rightarrow\cH\otimes\mathbb{C}^{X}\otimes\mathbb{C}^{X'},\quad V\ket{\psi}=\sum_{x}E_{x}^{1/2}\ket{\psi}\otimes\ket{x,x}\ ,
\end{align}
where $\mathbb{C}^{X}$ denotes a quantum system with dimension $X$ in which the classical output of the measurement $E_{X}$ is embedded, and $X\cong X'$. 
Since $\ket{\psi}\in\cH$ is a purifying vector of $\omega_{ABC}$, we have that $V\ket{\psi}\in\cH\otimes\mathbb{C}^{X}\otimes\mathbb{C}^{X'},$ is a purifying vector of $\omega_{XB}=\omega_{AB}\circ E_X$. Denoting the commutant of $\cM_{ABC}$ in $\cB(\cH)$ by $\cM_{D}$, we find that the purifying system of $\omega_{XB}$ is $\cB(\mathbb{C}^{X'})\otimes\cM_{ACD}$. It then follows from the duality of the conditional min- and max-entropy (Lemma~\ref{lem:duality}) that
\begin{align}\label{eq:first}
H_{\max}(X|B)_{\omega}=-H_{\min}(X|X'ACD)_{\psi\circ V}\ ,
\end{align}
where $\psi_{ACD}\circ V(\cdot)=\bra{\psi}V^{*}(\cdot)V\psi\rangle$. Since the conditional min-entropy can be written as a max-relative entropy (Definition~\ref{lem:HminDmax}), we have that
\begin{align}
-H_{\min}(X|X'ACD)_{\psi\circ V}=\inf_{\sigma_{X'ACD}}D_{\max}(\psi_{ACD}\circ V \|\tau_{X}\otimes\sigma_{X'ACD})\ ,
\end{align}
where the max-relative entropy is given by (Definition~\ref{def:maxrelative})
\begin{align}
D_{\max}(\psi_{ACD}\circ V\|\tau_{X}\otimes\sigma_{X'ACD})=\inf\{\iota\in\mathbb{R}:\psi_{ACD}\circ V\leq2^{\iota}\cdot\tau_{X}\otimes\sigma_{X'ACD}\}\ ,
\end{align}
the infimum is over $\sigma_{X'ACD}\in\cS(\cM_{X'ACD})$, and $\tau_{X}$ denotes the trace on $\cB(\mathbb{C}^{X})$. Let us now define the completely positive map $\cE:\cB(\cH) \rightarrow \cB(\cH\otimes\mathbb{C}^{X}\otimes\mathbb{C}^{X'})$ given by $\cE(a)= V a V^*$. The map is sub-unital since $\cE(\idty) = VV^*$ and $\Vert V V^* \Vert = \Vert V^* V\Vert = \Vert\sum_{x}E_x \Vert = \Vert M \Vert \leq 1$. Due to the monotonicity of the max-relative entropy under applications of sub-unital, completely positive maps (Lemma~\ref{lem:mon_Dmax}), we obtain for fixed $\sigma_{X'ACD}\in\cS(\cM_{X'ACD})$,
\begin{align}\label{eq:uncertdev1}
D_{\max}(\psi_{ACD}\circ V\|\tau_{X}\otimes\sigma_{X'ACD})&\geq D_{\max}((\psi_{ACD}\circ V)\circ\cE \|(\tau_{X}\otimes\sigma_{X'ACD})\circ\cE )\\
&=D_{\max}(\omega^V_{ACD}\|\gamma^{\sigma,V}_{ACD})\ ,
\end{align}
where we denoted $\omega^V_{ACD}= (\psi_{ACD}\circ V) \circ\cE $ and $\gamma^{\sigma,V}_{ACD}=(\tau_{X}\otimes\sigma_{X'ACD})\circ\cE $. Due to the fact that $V^*V= M$, we have that $\omega^V_{ACD}(\cdot) = \psi_{ACD}\circ V\circ V^{*}(\cdot)=\Scp{\psi}{V^{*}V(\cdot)V^{*}V\psi}=\omega_{ACD}(M \cdot M)$ with $\omega_{ACD}$ the state on $\cM_{ACD}$ corresponding to $\ket{\psi}$. Using once more the monotonicity of the max-relative entropy under application of channels (Lemma~\ref{lem:mon_Dmax}), we obtain by first restricting onto the subalgebra $\cM_{AC}$ and then measuring the $A$ system with $F$,
\begin{align}\label{pf,thm:minmaxcont,eq2}
D_{\max}(\omega^V_{ACD}\|\gamma^{\sigma,V}_{ACD})\geq D_{\max}(\omega^V_{AC}\|\gamma^{\sigma,V}_{AC})\geq D_{\max}(\omega^V_{YC}\|\gamma^{\sigma,V}_{YC})\ ,
\end{align}
where we set $\omega^V_{YC}=\omega^V_{AC}\circ F_Y$ and $\gamma^{\sigma,V}_{YC} = \gamma^{\sigma,V}_{AC} \circ F_Y$.  By definition, we have that $\gamma^{\sigma,V}_{YC} (a) = \sum_y \gamma^{\sigma,V}_{AC}(F_y a_y)$ for $a=(a_y)\in \cM_{YC}$. Hence, it holds for all positive $a_y\in\cM_C$ with $y\in Y$ that
\begin{align}\label{eq:boundC}
\gamma^{\sigma,V}_{AC}(F_y a_y) = \tau_X\otimes \sigma_{X'AC}(VF_y  a_y V^*) = \sum_{x}\sigma^{x,x}_{AC}( \sqrt{E_x}F_y\sqrt{E_x} a_y ) \leq 
\sup_{x,y}\left\|E_{x}^{1/2} F_{y}^{1/2} \right \|^{2} \sigma_C(a_y)\ ,
\end{align} 
where we denoted $\sigma_{X'ACD}=(\sigma^{x,x'}_{ACD})$, and used that $a_y$ commutes with $E_x$ and $F_y$. Thus, we conclude that
\begin{align}
\gamma^\sigma_{YC}\leq\sup_{x,y}\left\|E_{x}^{1/2} F_{y}^{1/2}\right\|^{2}\cdot\tau_{Y}\otimes\sigma_{C}\ .
\end{align} 
By using elementary properties of the max-relative entropy (Lemma~\ref{lem:minmax_elementary1} and Lemma~\ref{lem:minmax_elementary2}), it then follows for any $\sigma_{X'ACD}\in\cS(\cM_{X'ACD})$ that
\begin{align}
D_{\max}(\omega^V_{YC}\|\gamma^\sigma_{YC})&\geq D_{\max}(\omega^V_{YC}\|\tau_{Y}\otimes\sigma_{C})-\log\sup_{x,y}\left\|E_{x}^{1/2} F_{y}^{1/2}\right\|^{2}\\
&\geq\inf_{\eta}D_{\max}(\omega^V_{YC}\|\tau_{Y}\otimes\eta_{C})-\log\sup_{x,y}\left\|E_{x}^{1/2} F_{y}^{1/2}\right\|^{2}\\
&=-H_{\min}(Y|C)_{\omega^V}-\log\sup_{x,y}\left\|E_{x}^{1/2} F_{y}^{1/2}\right\|^{2}\label{eq:uncertdev2}\ ,
\end{align}
where the infimum is over $\eta_{C}\in\cS(\cM_{C})$, and we used again that the conditional min-entropy can be written as a max-relative entropy (Definition~\ref{lem:HminDmax}). Combining this with all the steps going back to~\eqref{eq:first}, we therefore obtain 
\begin{align} \label{eq:URdiscSubnormal}
H_{\max}(X|B)_{\omega} \geq -H_{\min}(Y|C)_{\omega^V}-\log\sup_{x,y}\left\|E_{x}^{1/2} F_{y}^{1/2}\right\|^{2}\ .
\end{align}
Recall that $\omega^V_{YC} = (\omega^{V,y}_{C})_y$ with $\omega^{V,y}_{C}(\cdot)= \omega(M F_y M \cdot)$, and thus, if $E$ is normalized we obtain the uncertainty relation for measurements with a finite number of outcomes.

Let us now lift the relation to the case of discrete $X$ and $Y$ with infinite cardinality. We take sequences of increasing finite subsets $X_1 \subset X_2 \subset ... \subset X$  and $Y_1 \subset Y_2 \subset ...\subset  Y$ such that $\bigcup_n X_n = X$ and $\bigcup_n Y_n = Y$. The strategy is to apply the inequality~\eqref{eq:URdiscSubnormal} derived for sub-normalized measurements to $E_{X_n}=\{E_x\}_{x\in X_n}$ and $F_{Y_m}=\{F_y\}_{y\in Y_m}$. It is straightforward to see that~\eqref{eq:URdiscSubnormal} for fixed $n$ and $m$ reads as
\begin{align} 
H_{\max}(X_n|B)_{\omega} \geq -H_{\min}(Y_m|C)_{\omega^n}-\log\sup_{x \in X ,y\in Y}\left\|E_{x}^{1/2} F_{y}^{1/2}\right\|^{2}\ ,
\end{align}
where we denoted $\omega_{X_nB} = \omega_{AB} \circ E_{X_n}$ and $\omega^n_{Y_mC} = (\omega^{n,y}_{C})_{y\in Y_m}$ with $\omega^{n,y}_C(\cdot)=\omega_{AC}({M_n} F_y {M_n} \cdot)$, and $M_n=\sum_{x\in X_n} E_x$. Note that we already used that taking the supremum over $X$ and $Y$ instead of $ X_n$ and $y\in Y_m$ only decreases the right hand side of the above inequality. 
We now take the limit for $n\rightarrow \infty$ on both sides. By using the definition of the conditional max-entropy in~\eqref{eq:maxapprox}, it is straightforward to see that $ H_{\max}(X_n|B)_{\omega}$ converges to $H_{\max}(X|B)_{\omega}$ for $n\rightarrow \infty$. The only term on the right hand side depending on $n$ is the conditional min-entropy of the state $\omega^n_{Y_mC}$, which is given by (see~\eqref{eq:Guessing})
\begin{align}
H_{\min}(Y_m|C)_{\omega^n} = -\log\sup_{G} \sum_{y\in Y_m}\omega_{AC}(M_{n}F_{y}M_{n}G_{y})\ ,
\end{align}
where the supremum is taken over all $G=\{G_{y}\}_{y\in Y_m}$ in $\mathrm{Meas}(Y_m,\cM_{C})$. It holds for every $y\in Y_m$ and $0\leq G_{y}\leq \idty$ that 
\begin{align}
|\omega_{AC}(F_{y}G_{y})-\omega_{AC}(M_{n}F_{y}M_{n}G_{y})|&\leq|\omega_{AC}(F_{y}G_{y}(\idty-M_n))|+|\omega_{AC}((\idty-M_n)F_{y}G_{y}M_n)|\\
&\leq2\sqrt{\omega_{AC}((\idty-M_n)^2)}\ ,  
\end{align}
where we used the Cauchy-Schwarz inequality for states, that is, $\omega(ab)^2\leq\omega(a^{*}a)\omega(b^{*}b)$ for any operators $a,b$. Hence, we have that the functionals $\omega_{C}^{n,y}(\cdot)=\omega_{AC}(M_{n}F_{y}M_{n}\cdot)$ converge uniformly to $\omega^{y}_{C}(\cdot)=\omega_{AC}(F_{y}\cdot)$ on the unit ball of $\cM_C$ for any $y \in Y_m$ (since $M_n$ converges in the $\sigma$-weak topology to $\idty$). Because the set $Y_m$ is finite, this also implies that $\omega_{Y_{m}C}^n=(\omega_{C}^{n,y})_{y\in Y_{m}}$ converges uniformly to $\omega_{Y_{m}C}=(\omega_{C}^{y})_{y\in Y_{m}}$ on the unit ball of $\cM_{Y_{m}C}$. Hence, we can interchange the limit for $n \rightarrow \infty$ with the supremum over $\mathrm{Meas}(Y_m,\cM_{A})$ and obtain  
\begin{align}
H_{\max}(X|B)_{\omega}\geq-H_{\min}(Y_m|C)_{\omega}-\log\sup_{x\in X,y\in Y}\left\|E_{x}^{1/2} F_{y}^{1/2}\right\|^{2}\ .
\end{align}
In a final step we take the infimum over all $m\in \mathbb N$ which gives the desired inequality due to the definition of the conditional min-entropy in~\eqref{eq:minapprox}. 
\end{proof}

Using the discrete approximation of the differential conditional min- and max-entropy from Proposition~\ref{thm:MinMaxApprox}, we obtain the uncertainty relation for continuous outcome measurements.

\begin{theorem}\label{thm:minmaxcont}
Let $\cM_{ABC}$ be a tripartite von Neumann algebra, $\omega_{ABC}\in\cS(\cM_{ABC})$ and $E_{X}\in\mathrm{Meas}(X,\cM_{A})$ and $F_{Y}\in\mathrm{Meas}(Y,\cM_{A})$ with coarse grained measure spaces $(X,\Sigma_X,\mu_X,\{\cP_\alpha\})$ and $(Y,\Sigma_Y,\mu_Y,\{\cQ_\beta\})$. If for the post-measurement states $\omega_{XBC}=\omega_{ABC}\circ E_{X}$ and $\omega_{YBC}=\omega_{ABC}\circ F_{Y}$, there exists an $\alpha_{0}>0$ such that $H_{\max}(X_{\cP_{\alpha_{0}}})_\omega<\infty$, then we have that
\begin{align}
h_{\max}(X|B)_{\omega}+h_{\min}(Y|C)_{\omega}\geq -\log c(E_{X},F_{Y})\ ,
\end{align}
where the overlap of the measurements is given by
\begin{align}\label{eq:ComplConstCont}
c(E_{X},F_{Y})=\liminf_{\alpha,\beta \rightarrow 0} \sup_{I\in\cP_\alpha, J\in \cQ_\beta}\frac{\|(E_X(\chi_I))^{1/2}\cdot(F_Y(\chi_J))^{1/2}\|^{2}}{\alpha\cdot\beta}\ ,
\end{align}
where $E_X(\chi_I)$ and $F_Y(\chi_J)$ are defined as in~\eqref{eq:DiscMeasurement}.
\end{theorem}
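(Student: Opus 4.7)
\bigskip

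\noindent\textbf{Proof proposal.} The plan is to bootstrap Proposition~\ref{prop:minmaxdisc} (the countable-outcome uncertainty relation) to the continuous-outcome setting by applying it to the discretized measurements $E_{X}^{\cP_\alpha}$ and $F_{Y}^{\cQ_\beta}$ introduced in Section~\ref{sec:discretisation}, and then passing to the limit $\alpha,\beta\to 0$ using the regularized-discretization formulas of Proposition~\ref{thm:MinMaxApprox}. The proof strategy is then essentially a bookkeeping exercise to check that both sides of the inequality survive the limiting procedure and that the direction of the inequality yields the claimed $\liminf$ on the right-hand side.

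First I would fix $\alpha,\beta>0$ and apply Proposition~\ref{prop:minmaxdisc} to the countable-outcome POVMs $E_{X}^{\cP_\alpha}$ and $F_{Y}^{\cQ_\beta}$, whose POVM elements are exactly $E_X(\chi_I)$ for $I\in\cP_\alpha$ and $F_Y(\chi_J)$ for $J\in\cQ_\beta$. Noting that the post-measurement states of $\omega_{AB}\circ E_X^{\cP_\alpha}$ and $\omega_{AC}\circ F_Y^{\cQ_\beta}$ coincide with the discretizations $\omega_{X_{\cP_\alpha}B}$ and $\omega_{Y_{\cQ_\beta}C}$ by the compatibility noted in Section~\ref{sec:discretisation}, Proposition~\ref{prop:minmaxdisc} gives
\begin{equation*}
H_{\max}(X_{\cP_\alpha}|B)_\omega + H_{\min}(Y_{\cQ_\beta}|C)_\omega \;\geq\; -\log\,\sup_{I\in\cP_\alpha, J\in\cQ_\beta}\bigl\|\,E_X(\chi_I)^{1/2}\,F_Y(\chi_J)^{1/2}\bigr\|^{2}.
\end{equation*}
Adding $\log\alpha+\log\beta$ to both sides and combining the two log terms into the denominator yields
\begin{equation*}
\bigl(H_{\max}(X_{\cP_\alpha}|B)_\omega+\log\alpha\bigr)+\bigl(H_{\min}(Y_{\cQ_\beta}|C)_\omega+\log\beta\bigr) \;\geq\; -\log\,\frac{\sup_{I\in\cP_\alpha, J\in\cQ_\beta}\bigl\|E_X(\chi_I)^{1/2}F_Y(\chi_J)^{1/2}\bigr\|^{2}}{\alpha\,\beta}.
\end{equation*}

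Next I would pass to the continuous limit. By the hypothesis $H_{\max}(X_{\cP_{\alpha_0}})_\omega<\infty$, Proposition~\ref{thm:MinMaxApprox} gives $\lim_{\alpha\to 0}\bigl(H_{\max}(X_{\cP_\alpha}|B)_\omega+\log\alpha\bigr)=h_{\max}(X|B)_\omega$; the min-entropy part of the same proposition requires no additional assumption and yields $\lim_{\beta\to 0}\bigl(H_{\min}(Y_{\cQ_\beta}|C)_\omega+\log\beta\bigr)=h_{\min}(Y|C)_\omega$. Hence the left-hand side of the displayed inequality converges to $h_{\max}(X|B)_\omega+h_{\min}(Y|C)_\omega$ as $(\alpha,\beta)\to(0,0)$ along the coarse-graining index set. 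Since the inequality holds for every $(\alpha,\beta)$ and the left-hand side possesses a genuine limit, taking $\limsup$ of both sides preserves the inequality, and the $\limsup$ of the left side equals its limit. Using that $\limsup[-\log(\cdot)] = -\log\liminf(\cdot)$, I obtain
\begin{equation*}
h_{\max}(X|B)_\omega+h_{\min}(Y|C)_\omega \;\geq\; -\log\,\liminf_{\alpha,\beta\to 0}\,\frac{\sup_{I\in\cP_\alpha, J\in\cQ_\beta}\bigl\|E_X(\chi_I)^{1/2}F_Y(\chi_J)^{1/2}\bigr\|^{2}}{\alpha\,\beta} \;=\; -\log c(E_X,F_Y),
\end{equation*}
which is the claimed relation.

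The main potential obstacle I anticipate is the direction of the inequality when the limit on the right-hand side fails to exist: one must ensure that $\liminf$ (rather than $\limsup$) appears in the definition of $c(E_X,F_Y)$, which is exactly what is enabled by the fact that the left-hand side possesses a true limit (so that $\limsup=\liminf=\lim$ there). A second minor point is to verify that Proposition~\ref{prop:minmaxdisc} really applies to the discretized POVMs even though the original ones may have continuous outcome spaces; this is immediate since the discretization produces a genuine POVM on the countable index set of the partition with $\sum_k E_X(\chi_{I_k})=E_X(\chi_X)=\idty$ by normality of $E_X$ and $\sigma$-additivity. No further conditions beyond those stated in the theorem are required.
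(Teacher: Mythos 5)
Your proposal is correct and follows essentially the same route as the paper: apply Proposition~\ref{prop:minmaxdisc} to the discretized POVMs given by the partitions, add $\log\alpha+\log\beta$, and pass to the limit $\alpha,\beta\to 0$ via Proposition~\ref{thm:MinMaxApprox}, which turns the right-hand side into $-\log c(E_X,F_Y)$ with the $\liminf$ as stated. Your additional remarks on why the $\liminf$ (rather than $\limsup$) is the correct constant and on the discretized measurements being genuine POVMs are sound and consistent with the paper's (more terse) argument.
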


A similar relation derived under stronger conditions and with different techniques can be found in the thesis of one of the authors~\cite{FurrerPhD}. Note that in the case $X=\mathbb R$ the assumption that there exists an $\alpha_{0}>0$ such that $H_{\max}(X_{\cP_{\alpha_{0}}})_\omega<\infty$ is satisfied if the second moment of the distribution of $X$ is finite, or equivalently, the expectation value of the observable $\int x^2 E_{X}(x)dx $ with respect to $\omega_A$ is finite (see Lemma~\ref{lem:MaxCond}).

\begin{proof}
We first apply the uncertainty relation for measurements with a discrete number of outcomes (Proposition~\ref{prop:minmaxdisc}) to obtain for any partitions $\cP_\alpha$ and $\cQ_\beta$ the inequality
\begin{align}
\Big(H_{\max}(X_{\cP_{\alpha}}|B)_{\omega}+\log\alpha\Big)+\Big(H_{\min}(Y_{\cP_{\beta}}|C)_{\omega}+\log\beta\Big)\geq-\log \sup_{k,l} \frac{\|(E_{k}^{\cP_{\alpha}})^{1/2}\cdot(F_{l}^{\cP_{\beta}})^{1/2}\|^{2}}{\alpha \beta} \ ,
\end{align}
where the supremum in the logarithm is taken over all possible measurement operators $E_{k}^{\cP_{\alpha}}=E_X(\chi_{I^\alpha_k})$ and $F_{l}^{\cQ_{\beta}} = F_Y(\chi_{J^\alpha_l})$ with $\cP_\alpha=\{I^\alpha_k\}$ and $\cP_\beta=\{J^\alpha_l\}$. Taking the limit superior for $\alpha,\beta \rightarrow 0$ on both sides, we obtain the desired uncertainty relation by means of Proposition~\ref{thm:MinMaxApprox}.
\end{proof}

%%%%%%%%%%%%%%%%%%%%%%%%%%%%%%%%%%%%%%%%%%%%%%%%%%%%%%%%%%%%%%%

\subsection{Uncertainty Relations in Terms of the Quantum Conditional von Neumann Entropy}\label{sec:vNrelations}

We follow the same strategy as in the case of conditional min- and max-entropy and start with countably many outcomes. The following uncertainty relation for conditional von Neumann entropy was first derived in the finite-dimensional setting in Ref.~\onlinecite{Berta09}. 

\begin{proposition}\label{prop:URvNdisc}
Let $\cM_{ABC}$ be a tripartite von Neumann algebra, $\omega_{ABC}\in\cS(\cM_{ABC})$, $X$ and $Y$ countable, and $E_{X}=\{E_x\}_{x\in X} \in\mathrm{Meas}(X,\cM_{A})$ and $F_{Y}=\{F_y\}_{y \in Y} \in\mathrm{Meas}(Y,\cM_{A})$. Then, we have that
\begin{align}\label{eq,prop:URdiscVN}
H(X|B)_{\omega}+H(Y|C)_{\omega}\geq -\log c(E_{X},F_{Y})\ ,
\end{align}
where the overlap is given in~\eqref{eq:ccountable}. 
\end{proposition}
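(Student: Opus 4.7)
The plan is to follow the same template as the proof of Proposition~\ref{prop:minmaxdisc}, replacing the max-relative entropy $D_{\max}$ by Umegaki's relative entropy $D$ throughout. The Coles et al.\ strategy only relies on the monotonicity of $D$ under sub-unital completely positive maps and elementary scaling identities, properties shared by $D_{\max}$ and $D$.

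First I would reduce to the case of sub-normalized measurements $E_{X_n}=\{E_x\}_{x\in X_n}$ and $F_{Y_m}=\{F_y\}_{y\in Y_m}$ with finite range, using increasing finite subsets $X_n\uparrow X$ and $Y_m\uparrow Y$ as in the second half of the proof of Proposition~\ref{prop:minmaxdisc}. For the finite-range case, pick a purifying vector $\ket{\psi}\in\cH$ of $\omega_{ABC}$ and take the Stinespring dilation $V_n\ket{\psi}=\sum_{x\in X_n}E_x^{1/2}\ket{\psi}\otimes\ket{x,x}$, so that $V_n\ket{\psi}$ purifies $\omega_{X_nB}$ with commutant system $\cB(\mathbb{C}^{X_n'})\otimes\cM_{ACD}$, where $\cM_D$ is the commutant of $\cM_{ABC}$ in $\cB(\cH)$.

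Next, I would exploit the identity
\begin{equation*}
-H(X_n|B)_\omega=\inf_{\sigma_B\in\cS(\cM_B)}D(\omega_{X_nB}\|\tau_{X_n}\otimes\sigma_B),
\end{equation*}
which is immediate from the definition since the infimum is attained at $\sigma_B=\omega_B$, together with duality of the conditional von Neumann entropy on pure classical-quantum-quantum states with a finite-dimensional classical register, to rewrite $H(X_n|B)_\omega$ as $\inf_{\sigma_{X_n'ACD}}D(\psi_{ACD}\circ V_n\|\tau_{X_n}\otimes\sigma_{X_n'ACD})$. By the data-processing inequality for Umegaki's relative entropy under the sub-unital completely positive map $\cE(a)=V_naV_n^*$, the restriction to $\cM_{AC}$, and the measurement channel $F_{Y_m}$, this infimum is bounded below by $D(\omega_{Y_mC}^{V_n}\|\gamma_{Y_mC}^{\sigma,V_n})$. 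The operator bound $\gamma_{Y_mC}^{\sigma,V_n}\leq c(E_X,F_Y)\,\tau_{Y_m}\otimes\sigma_C$ is the same computation as in~\eqref{eq:boundC}, and combining it with elementary scaling identities for $D$ analogous to Lemmas~\ref{lem:minmax_elementary1}--\ref{lem:minmax_elementary2} yields
\begin{equation*}
H(X_n|B)_\omega+H(Y_m|C)_{\omega^{V_n}}\geq-\log c(E_X,F_Y).
\end{equation*}

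It remains to take the limits $n\to\infty$ and $m\to\infty$. The uniform convergence of the sub-normalized post-measurement states $\omega^{V_n}_{Y_mC}$ on the unit ball of $\cM_{Y_mC}$ is handled exactly as in the last paragraph of the proof of Proposition~\ref{prop:minmaxdisc} using the $\sigma$-weak convergence $V_n^*V_n\to\idty$. The main obstacle I anticipate is the convergence $H(X_n|B)_\omega\to H(X|B)_\omega$: unlike for the min- and max-entropies, where the truncation approximations in~\eqref{eq:minapprox}--\eqref{eq:maxapprox} are manifestly monotone, the series $-\sum_xD(p_x\omega_B^x\|\omega_B)$ defining the conditional von Neumann entropy has terms of possibly either sign, so a Weierstrass-type uniform convergence argument along the lines used in the proof of Proposition~\ref{prop:vNapprox} will be required. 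A secondary subtlety is to rigorously establish the duality of the conditional von Neumann entropy in the von Neumann algebra setting; since the classical register $X_n$ is finite-dimensional, this should reduce to the finite-dimensional duality after taking a suitable conditional expectation onto the classical subalgebra, but the verification merits care.
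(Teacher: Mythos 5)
There is a genuine gap at the central step. You justify the chain of data-processing steps by asserting that monotonicity under sub-unital completely positive maps is ``shared by $D_{\max}$ and $D$'', and on this basis claim the clean truncated inequality $H(X_n|B)_{\omega}+H(Y_m|C)_{\omega^{V_n}}\geq-\log c(E_X,F_Y)$ for the sub-normalized measurements $E_{X_n},F_{Y_m}$. But Umegaki's relative entropy is only monotone under \emph{unital} channels; for a sub-unital $\cE$ one only has the corrected statement of Lemma~\ref{lem:mono:subunital}, $D(\omega\|\sigma)\geq D(\omega\circ\cE\|\sigma\circ\cE)+D(\omega(\idty-\cE(\idty))\|\sigma(\idty-\cE(\idty)))$, and the classical correction term $a\log(a/b)$ can be strictly negative (here $a=\omega_A(M_n-M_n^2)$ is small while $b=\tau_{X_n}\otimes\sigma(\idty-V_nV_n^*)$ can be large), so it cannot simply be dropped. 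Indeed your intermediate inequality is false, not merely unproven: fix $n,m$ and position/momentum-type measurements with small overlap $c(E_X,F_Y)$, and take $\omega$ (with trivial $B,C$) concentrated far from the region covered by $X_n$, so that $\omega_A(M_n)=\eps\ll1$; then $H(X_n|B)_{\omega}+H(Y_m|C)_{\omega^{V_n}}=O(\eps\log(1/\eps))$ while $-\log c(E_X,F_Y)$ stays bounded away from zero. This is precisely where the paper's proof deviates from the min/max template: before applying the sub-unital map $\cE(a)=VaV^*$ it first applies the channel $\mathrm{T}_\Pi+\mathrm{T}_{\idty-\Pi}$ with $\Pi=\sum_x\kettbra{x,x}$ (using $\psi\circ V\circ\mathrm{T}_{\idty-\Pi}=0$) so that the second argument becomes sub-normalized, and then uses Lemma~\ref{lem:mono:subunital}, carrying the correction terms $\omega_A(M_n-M_n^2)\log\omega_A(M_n-M_n^2)$ and $\omega^{V_n}_A(\idty-N_m)\log\omega^{V_n}_A(\idty-N_m)$ as well as the factor $\omega^{V_n}_{Y_mC}(\idty)$ multiplying $\log c$. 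These terms are what make the truncated inequality true, and they disappear only in the limit $n,m\to\infty$ via $x\log x\to0$.

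Two further points on your limiting argument. The convergence $H(X_n|B)_{\omega}\to H(X|B)_{\omega}$, which you single out as the main obstacle, is in fact the unproblematic part: for a countable $X$ with counting measure each summand $-D(p_x\omega_B^x\|\omega_B)$ is nonnegative (since $p_x\omega_B^x\leq\omega_B$ and Lemma~\ref{lem:petz1}), so the truncations increase monotonically; the sign-indefiniteness that forces the Weierstrass argument in Proposition~\ref{prop:vNapprox} is a feature of the differential (Lebesgue-density) case only. Conversely, the step you dismiss as ``handled exactly as'' in Proposition~\ref{prop:minmaxdisc} does not carry over verbatim: there the limit is interchanged with a supremum over finite POVMs using uniform convergence, whereas here one must pass to the limit in $D(\omega^{V_n}_{Y_mC}\|\tau_{Y_m}\otimes\tilde\omega^{n}_C)$ with both arguments varying in $n$, which the paper handles by lower semicontinuity of the relative entropy. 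Finally, the duality you flag is indeed needed in the von Neumann algebra setting and is supplied by Lemma~\ref{lem:duality_vN}, proved in the appendix via the spatial derivative rather than by reduction through a conditional expectation.
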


\begin{proof}
The result is obtained by following the same steps as in the proof of the statement for the conditional min- and max-entropy (Proposition~\ref{prop:minmaxdisc}). Doing so, one has to replace the conditional min- and max-entropy by the conditional von Neumann entropy and the max-relative entropy by the quantum relative entropy, respectively. Again, we first assume that $E_X$ and $F_Y$ are sub-normalized measurements with a finite number of outcomes $X$ and $Y$.

In the following, we use the same notation as in the proof of Proposition~\ref{prop:minmaxdisc}. By a similar argument as in the case of the conditional min- and max-entropy, the self duality of the conditional von Neumann entropy (Lemma~\ref{lem:duality_vN}) leads to
\begin{equation}
H(X|B)_{\omega} = D(\psi_{ACD}\circ V\|\tau_{X}\otimes\tilde\omega^V_{X'ACD})
\end{equation} 
with $\tilde\omega^V_{X'ACD}$ the restriction of $\psi_{ACD}\circ V$ onto $\mathbb C^{X'}\otimes \cM_{ABC}$. As in the case of the min- and max-entropy, the goal of the next step is to undo the dilated non-normalized measurement by applying $V^*$. But before using the monotonicity of the relative entropy under sub-unital maps, we first apply the projector $\Pi=\sum_{x}\ket {x,x} \bra{x,x}$ on $\mathbb C^{X}\otimes \mathbb C^{X'}$ to ensure that the second argument in the relative entropy is sub-normalized. Denoting by $\text{T}_O$ the transformation $a\mapsto  O^* a O$ with a suitable operator $O$, we can use the monotonicity of the quantum relative entropy under application of channels (Lemma~\ref{lem:mono:subunital}) to get
\begin{align}
D(\psi_{ACD}\circ V\|\tau_{X}\otimes\tilde\omega^V_{X'ACD})  & \geq  D(\psi_{ACD}\circ V\circ(\text{T}_\Pi +\text{T}_{\idty-\Pi} )   \|\tau_{X}\otimes\tilde\omega^V_{X'ACD}\circ(\text{T}_\Pi +\text{T}_{\idty-\Pi} )) \, .
\end{align}
Note now that the channel $\text{T}_\Pi +\text{T}_{\idty-\Pi}$ projects $\mathbb C^{X}\otimes \mathbb C^{X'}$ onto two orthogonal subspaces such that the term on the right hand side can be written as 
\begin{align}
D(\psi_{ACD}\circ V\circ\text{T}_\Pi   \|\tau_{X}\otimes\tilde\omega^V_{X'ACD}\circ\text{T}_\Pi) +  D(\psi_{ACD}\circ V\circ\text{T}_{\idty-\Pi}   \|\tau_{X}\otimes\tilde\omega^V_{X'ACD}\circ\text{T}_{\idty-\Pi} ) \, .
\end{align}
Since $\psi_{ACD}\circ V\circ\text{T}_{\idty-\Pi} = 0$, we have that the right term in the above sum is zero, and thus,
\begin{align}
D(\psi_{ACD}\circ V\|\tau_{X}\otimes\tilde\omega^V_{X'ACD})  \geq D(\psi_{ACD}\circ V    \|\tau_{X}\otimes\tilde\omega^V_{X'ACD}\circ\text{T}_\Pi) \, .
\end{align}
In a next step, we apply the sub-unital map $\cE(a)=VaV^*$ and get according to Lemma~\ref{lem:mono:subunital}
\begin{align}
D(\psi_{ACD}\circ V    \|\tau_{X}\otimes\tilde\omega^V_{X'ACD}\circ\text{T}_\Pi)  & \, \geq  D(\omega_{ABC}^V   \|\tau_{X}\otimes\tilde\omega^V_{X'ACD}\circ\text{T}_\Pi\circ \cE ) 
 + \omega_A(M-M^2)\log \omega_A(M-M^2) 
\, , 
\end{align}
where we used that $\tau_{X}\otimes\tilde\omega^V_{X'ACD}\circ\text{T}_\Pi(\idty) \leq 1$. Using again the monotonicity of the quantum relative entropy under sub-unital maps (Lemma~\ref{lem:mono:subunital}), the restriction onto the systems $AC$ followed by the application of the measurement $F_Y$ leads to the bound  
\begin{align}
 D(\omega_{ABC}^V   \|\tau_{X}\otimes\tilde\omega^V_{X'ACD}\circ\text{T}_\Pi\circ \cE )  \geq D(\omega_{YC}^V   \|\tau_{X}\otimes\tilde\omega^V_{X'ACD}\circ\text{T}_\Pi\circ \cE \circ F_Y ) + \omega^V_A(\idty-N)\log\omega^V_A(\idty-N)
\, , 
\end{align}
with $N=\sum_y F_y$. A straightforward computation similar to~\eqref{eq:boundC} shows that
\begin{align}
\tau_{X}\otimes\tilde\omega^V_{X'ACD}\circ\text{T}_\Pi\circ \cE \circ F_Y \leq c(E_X,F_Y) \tau_Y\otimes \tilde\omega^V_C\ ,
\end{align}
from which by basic properties of the quantum relative entropy (Lemma~\ref{lem:petz1} and Lemma~\ref{lem:petz2}), we obtain the bound   
\begin{align}
 D(\omega_{YC}^V   \|\tau_{X}\otimes\tilde\omega^V_{X'ACD}\circ\text{T}_\Pi\circ \cE \circ F_Y ) \geq D(\omega_{YC}^V   \|\tau_{Y}\otimes\tilde\omega^V_C) - \omega_{YC}^V(\idty)\log c(E_X,F_Y) 
\, .
\end{align}
Plugging all the steps together we finally arrive at 
\begin{align}\label{eq:ineqSubNormal}
H(X|B)_{\omega}  & \geq  D(\omega_{YC}^V   \|\tau_{Y}\otimes\tilde\omega^V_C) - \omega_{YC}^V(\idty)\log c(E_X,F_Y) \\
& \hspace{0.4cm}+ \omega_A(M-M^2)\log \omega_A(M-M^2)  + \omega^V_A(\idty-N)\log\omega^V_A(\idty-N)\label{eq:ineqSubNormal2}\ . 
\end{align}
Note that the above inequality reduces to~\eqref{eq,prop:URdiscVN} if both measurements $E_X$ and $F_Y$ are normalized. This can easily be seen by using that in such a case $M=N=\idty$, and thus,  $\omega^V_{YC} = \omega_{YC}$ and $\tilde\omega^V_C = \omega_C$.  

We now use the inequality~\eqref{eq:ineqSubNormal}-\eqref{eq:ineqSubNormal2} in a similar way as in the proof of Theorem~\ref{thm:minmaxcont} to obtain~\eqref{eq,prop:URdiscVN} for an infinite number of outcomes. For that we let $X_n$ and $Y_n$, $n\in \mathbb N$, as well as $E_{X_n}$ and $F_{Y_n}$ be as in the proof of Proposition~\ref{thm:minmaxcont}. For fixed sub-normalized measurements $E_{X_n}$ and $F_{Y_m}$, the inequality~\eqref{eq:ineqSubNormal}-\eqref{eq:ineqSubNormal2} then reads 
\begin{align}
H(X_n|B)_{\omega}  & \geq  D(\omega_{Y_m C}^n   \|\tau_{Y_m}\otimes\tilde\omega^n_C) - \omega_{Y_mC}^n(\idty)\log c(E_{X},F_{Y}) \\
& \hspace{0.4cm}+ \omega_A(M_n-M_n^2)\log \omega_A(M_n-M_n^2)  + \omega^n_A(\idty-N_m)\log\omega^V_A(\idty-N_m) 
\, .
\end{align}
Here, we used the same notation as in the proof of Proposition~\ref{prop:minmaxdisc} and set further $\tilde\omega^n_C(a) = \omega(M_n a)$ as well as $N_m =\sum_{y\in Y_m} F_y$. Let us take the limit inferior for $n,m\rightarrow \infty$. According to the definition of the conditional von Neumann entropy (Definition~\ref{def:condneumann}) we have that $ H(X_n|B)_{\omega^n}  $ converges to $ H(X|B)_{\omega}$. Furthermore, we use the lower semi-continuity of the quantum relative entropy (see, e.g., Ref.~\onlinecite[Corollary 5.12]{PetzBook}) and that $\omega_{Y_m C}^n$ and $\tilde\omega^n_C$ converge to $\omega_{YC}$ and $\omega_C$, respectively, to get that $\liminf_{n,m} D(\omega_{Y_m C}^n   \|\tau_{Y_m}\otimes\tilde\omega^n_C)  \geq H(Y|C)_\omega$. Using that $M_n$ and $N_m$ converge $\sigma$-weakly to $\idty_A$, it is straightforward to see that $ \omega_{Y_mC}^n(\idty) \rightarrow 1$, $\omega_A(M_n-M_n^2)\rightarrow 0$ as well as $\omega^n_A(\idty-N_m)\rightarrow 0$. Using that $x\log x \rightarrow 0 $ for $x\rightarrow 0$, we finally obtain~\eqref{eq,prop:URdiscVN}.

\end{proof}

\begin{theorem}\label{prop:vN_tri_disc}
Let $\cM_{ABC}$ be a tripartite von Neumann algebra, $\omega_{ABC}\in\cS(\cM_{ABC})$ and $E_{X}\in\mathrm{Meas}(X,\cM_{A})$ and $F_{Y}\in\mathrm{Meas}(Y,\cM_{A})$ with coarse grained measure spaces $(X,\Sigma,\mu_X,\{\cP_\alpha\})$ and $(Y,\Sigma_Y,\mu_Y,\{\cQ_\beta\})$. If the post-measurement states $\omega_{XBC}=\omega_{ABC}\circ E_{X}$ and $\omega_{YBC}=\omega_{ABC}\circ F_{Y}$ satisfy $-\infty<h(X|B)_{\omega}$, $-\infty<h(Y|C)_{\omega}$, and if there exists $\alpha_0> 0$ for which $H(X_{\cP_{\alpha_0}} |B)_{\omega}<\infty$ as well as $\beta_0>0$ for which $H(Y_{\cQ_{\beta_0}}|C)_{\omega}<\infty$, then it holds that
\begin{align}
h(X|B)_{\omega}+h(Y|C)_{\omega}\geq-\log c(E_{X},F_{Y})\ ,
\end{align}
where $c(E_{X},F_{Y})$ is as in~\eqref{eq:ComplConstCont}.
\end{theorem}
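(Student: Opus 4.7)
The plan is to mirror the strategy used for Theorem~\ref{thm:minmaxcont}: apply the discrete uncertainty relation Proposition~\ref{prop:URvNdisc} to the discretizations $E^{\cP_\alpha}$ and $F^{\cQ_\beta}$, regularize both sides by adding $\log\alpha+\log\beta$, and pass to the limit $\alpha,\beta\to 0$ using the approximation result Proposition~\ref{prop:vNapprox}.

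First I would observe that since $\{I^\alpha_k\}$ partitions $X$ and $E_X$ is unital (as a channel in the Heisenberg picture), the discretized POVM with operators $E^{\cP_\alpha}_k = E_X(\chi_{I^\alpha_k})$ satisfies $\sum_k E^{\cP_\alpha}_k = \idty$, and similarly for $F^{\cQ_\beta}$. Thus Proposition~\ref{prop:URvNdisc} applies cleanly\,---\,no sub-normalization correction is needed\,---\,and after adding $\log(\alpha\beta)$ to both sides yields, for every $\alpha,\beta$,
\begin{equation}
\bigl(H(X_{\cP_\alpha}|B)_\omega + \log\alpha\bigr) + \bigl(H(Y_{\cQ_\beta}|C)_\omega + \log\beta\bigr) \;\geq\; -\log \sup_{k,l}\frac{\bigl\|(E^{\cP_\alpha}_k)^{1/2}(F^{\cQ_\beta}_l)^{1/2}\bigr\|^2}{\alpha\beta}.
\end{equation}

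Next, the finiteness hypotheses of the theorem are tailored precisely to meet the conditions of Proposition~\ref{prop:vNapprox} applied separately to the two classical marginals of $\omega_{XBC}$ and $\omega_{YBC}$. Consequently $H(X_{\cP_\alpha}|B)_\omega + \log\alpha \to h(X|B)_\omega$ and $H(Y_{\cQ_\beta}|C)_\omega + \log\beta \to h(Y|C)_\omega$. Because the two summands depend on independent variables, the joint limit of the left-hand side along any sequence $(\alpha_n,\beta_n)\to(0,0)$ exists and equals $h(X|B)_\omega + h(Y|C)_\omega$. Using that whenever $A_n\ge B_n$ with $A_n$ convergent one has $\lim A_n \ge \limsup B_n$, taking $\limsup$ of both sides of the displayed inequality then gives
\begin{equation}
h(X|B)_\omega + h(Y|C)_\omega \;\geq\; -\log \liminf_{\alpha,\beta\to 0} \sup_{k,l}\frac{\bigl\|(E^{\cP_\alpha}_k)^{1/2}(F^{\cQ_\beta}_l)^{1/2}\bigr\|^2}{\alpha\beta} = -\log c(E_X,F_Y),
\end{equation}
where I used $\limsup[-\log(\cdot)]=-\log\liminf(\cdot)$ together with the definition~\eqref{eq:ComplConstCont} of the continuous overlap.

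I expect no deep obstacle: the hard technical work is packaged inside Proposition~\ref{prop:URvNdisc} and Proposition~\ref{prop:vNapprox}, and the remaining steps amount to bookkeeping. The one subtle point worth checking carefully is that the continuous overlap $c(E_X,F_Y)$ is defined through a $\liminf$ rather than a bona fide limit; this is exactly what the sign flip induced by $-\log$ produces after taking the limit on the LHS, so no additional assumption on the existence of a joint limit for the overlap is needed. The sole role of the finiteness hypotheses is to ensure the approximation result applies to both marginals simultaneously.
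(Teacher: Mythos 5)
Your proposal is correct and follows essentially the same route as the paper, which obtains the theorem by applying the discrete relation (Proposition~\ref{prop:URvNdisc}) to the discretized measurements $E^{\cP_\alpha}$, $F^{\cQ_\beta}$, regularizing with $\log\alpha+\log\beta$, and passing to the limit via Proposition~\ref{prop:vNapprox}, exactly as in the proof of Theorem~\ref{thm:minmaxcont}. Your explicit handling of the $\liminf$ in the definition~\eqref{eq:ComplConstCont} of $c(E_X,F_Y)$ is a welcome clarification of what the paper compresses into ``taking the limit superior on both sides.''
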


The theorem is obtained via the approximation result for the differential conditional von Neumann entropy (Proposition~\ref{prop:vNapprox}) using the exactly same steps as in the proof of the corresponding result for the differential conditional min- and max-entropy (Theorem~\ref{thm:minmaxcont}).

%%%%%%%%%%%%%%%%%%%%%%%%%%%%%%%%%%%%%%%%%%%%%%%%%%%%%%%%%%%%%%%

\subsection{Entropic Uncertainty Relations for Position and Momentum Operators}\label{sec:posmomrelations}

Let $Q$ and $P$ be a pair of position and momentum operators defined via the canonical commutation relation $[Q,P]= i$ where we set $\hbar = 1$. The unique representation space is $\cH=L^2(\mathbb R)$, with $Q$ the multiplication operator and $P$ the first order differential operator. Both operators possess a spectral decomposition with a positive operator valued measure $E_Q$ and $E_P$ in $\rm{Meas}(\mathbb R,\cB(\cH))$.

Let us start with finite spacing measurements and assume that the precision of the position and momentum measurement are given by intervals of length $\delta_q$ and $\delta_p$ for the entire range of the spectrum. As we will see later, the overlap in the uncertainty relation only depends on the spacings $\delta_q$ and $\delta_p$ but not on the explicit coarse grainings $\cQ_{\delta_q}=\{I_{\delta_q}^k\}_{k=1}^\infty$ and $\cP_{\delta_p}=\{J_{\delta_p}^k\}_{k=1}^\infty$. We can thus fix arbitrary coarse grainings $\cQ_{\delta_q}$ and $\cP_{\delta_p}$ for given spacings $\delta_q$ and $\delta_p$. The corresponding measurements are then formed by the positive operators $Q^k[\delta_q] = E_Q(I_{\delta_q}^k)$ and $P^k[\delta_p] = E_P(J_{\delta_p}^k)$. Following the notation in Section~\ref{sec:Results}, we denote the discrete classical systems induced by these position and momentum measurements by $Q_{\delta_q}$ and $P_{\delta_p}$. In order to keep the notation simple, we omit the dependence of the distributions on the particular coarse grainings. 

According to Proposition~\ref{prop:minmaxdisc} and~\ref{prop:URvNdisc}, the quantity which enters the entropic uncertainty relation is the overlap of the measurement operators
\begin{align}\label{eq:DefQPconstDisc}
\sup_{k,l} \Vert \sqrt{ Q^k[\delta_q]} \sqrt{ P^l[\delta_p]} \Vert^2 = \sup_{k,l} \Vert Q^k[\delta_q]  P^l[\delta_p] Q^k[\delta_q] \Vert\ .
\end{align}
Note first that $\Vert Q^k[\delta_q]  P^l[\delta_p] Q^k[\delta_q] \Vert $ can only depend on the length of the intervals and is similar for any $k$ and $l$, which are taken to be $k=l=1$ in the following. The reason is that the translation in position and momentum space are given by the unitary transformations $\exp[-i x_0 P]$ and $\exp[-i p_0 Q]$, which leave the norm invariant. Furthermore, since dilation operators are unitary, the constant only depends on the invariant product $\delta q \delta p$. The operator $H(\delta q,\delta p)=Q^1[\delta_q]P^1[\delta_p]Q^1[\delta_q]$ is important for time-limiting and lowpassing signals, and its largest eigenvalue, and thus its norm, can be expressed by~\cite{Slepian1964} (see also, e.g., Ref.~\onlinecite{Kiukas10} and references therein)
\begin{align}\label{eq:QPconstDisc}
c(\delta q,\delta p) =\frac{1}{2\pi}\delta q\delta p\cdot S_{0}^{(1)}\left(1,\frac{\delta q\delta p}{4}\right)^{2}\ ,
\end{align}
where $S_{0}^{(1)}(1,\cdot)$ denotes the 0th radial prolate spheroidal wave function of the first kind. For $\delta q\delta p \rightarrow 0$, it follows that $S_{0}^{(1)}\left(1,\frac{\delta q\delta p}{4}\right)\rightarrow 1$, such that the overlap behaves as $ c(\delta q,\delta p) \approx \frac{\delta q\delta p}{2\pi}$ for small spacing. A plot of the overlap $c(\delta q,\delta p)$ as well as the complementarity constant $-\log c(\delta q,\delta p)$ are shown for $\delta q = \delta p$ in Fig.~\ref{fig}.

\begin{figure}
\includegraphics{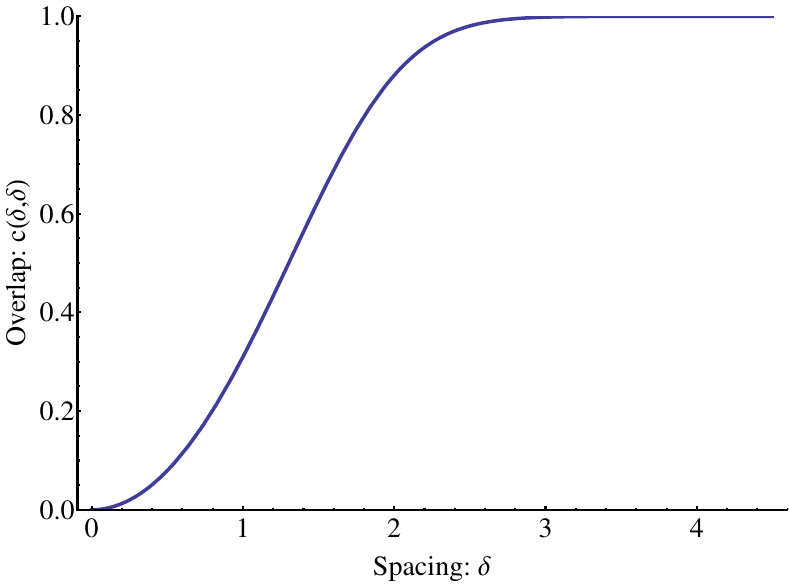}
\includegraphics{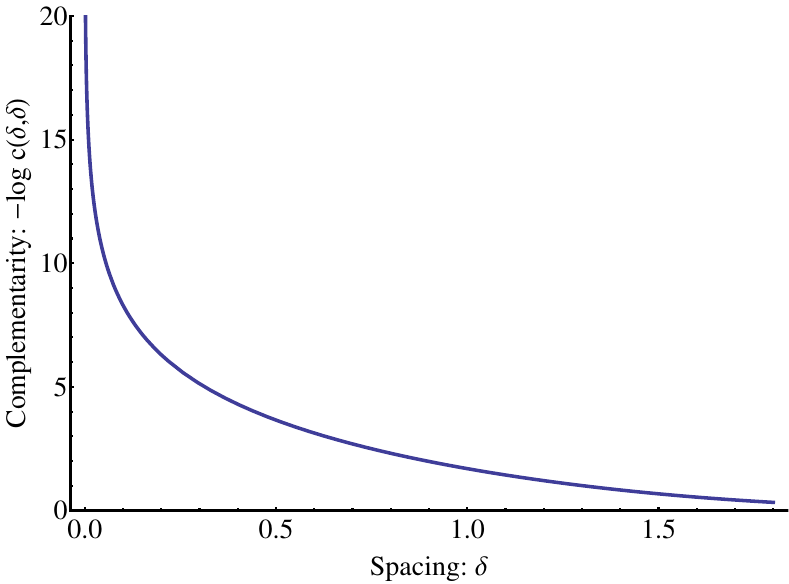}
\caption{The plot	on the left hand side shows the overlap $c(\delta q,\delta p)$ depending on $\delta$ where $\delta^2 = \delta q \delta p$. One can see that for $\delta \geq 3$ the value is already approximately $1$ and the uncertainty relation gets trivial. The plot on the right hand side shows the behavior of the complementarity constant $-\log c(\delta q,\delta p)$ for small $\delta = \sqrt{\delta q \delta p}$. Note that we have set $\hbar =1$ such that the minimal uncertainty product of the standard deviations is $\sqrt{\text{Var}(P)} \sqrt{\text{Var}(Q)} = 1/2$ and the vacuum has a variance of $1/2$.}
\label{fig} 
\end{figure}

\begin{corollary}\label{cor:pqfinite}
Let $\cM_{ABC}=\cB(L^{2}(\mathbb{R}))\otimes\cM_{BC}$ with $\cM_{BC}$ a von Neumann algebra, and consider position and momentum measurements with spacing $\delta_q> 0$ and $\delta_p> 0$ on system $A$. Then, we have that
\begin{align}
H_{\max}^{}(Q_{\delta_q}|B)_{\omega}+H_{\min}^{}(P_{\delta_p} |C)_{\omega}\geq - \log c(\delta_q,\delta_p)\label{eq:minmax_pq_disc}\ ,
\end{align}
and
\begin{align}
H(Q_{\delta_q}|B)_{\omega}+H(P_{\delta_p}|C)_{\omega}\geq -\log c(\delta_q,\delta_p)\label{eq:neumann_pq_disc}\ ,
\end{align}
where $c(\delta_q,\delta_p)$ is given in~\eqref{eq:QPconstDisc}.
\end{corollary}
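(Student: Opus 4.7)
The plan is a direct specialization of the two general discrete-outcome uncertainty relations in Proposition~\ref{prop:minmaxdisc} and Proposition~\ref{prop:URvNdisc} to the position and momentum POVMs. Concretely, I would apply each proposition with $\cM_A = \cB(L^2(\RR))$, $X = Y = \ZZ$, and measurements $E_X = \{Q^k[\delta_q]\}_{k\in\ZZ}$, $F_Y = \{P^l[\delta_p]\}_{l\in\ZZ}$ given by the spectral projections of $Q$ and $P$ on the chosen intervals. Both inequalities then reduce immediately to the task of showing that the overlap constant occurring in those propositions,
\[
c(E_X, F_Y) = \sup_{k,l}\,\bigl\|Q^k[\delta_q]^{1/2}\,P^l[\delta_p]^{1/2}\bigr\|^{2} = \sup_{k,l}\,\bigl\|Q^k[\delta_q]\,P^l[\delta_p]\,Q^k[\delta_q]\bigr\|,
\]
coincides with $c(\delta_q,\delta_p)$ as given in~\eqref{eq:QPconstDisc}. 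The second equality uses that $Q^k[\delta_q]$ and $P^l[\delta_p]$ are idempotent together with the $C^*$-identity $\|AB\|^2 = \|B^* A^* A B\|$.

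The next step is to reduce this norm computation by exploiting the symmetries of phase space. The Weyl displacement unitaries $W(a) = e^{-iaP}$ and $V(b) = e^{-ibQ}$ translate the spectra of $Q$ and $P$ respectively, so conjugation by an appropriate product $W(a)V(b)$ sends $Q^k[\delta_q]\,P^l[\delta_p]\,Q^k[\delta_q]$ to $Q^0[\delta_q]\,P^0[\delta_p]\,Q^0[\delta_q]$ while preserving the operator norm. This collapses the supremum and shows $c(E_X, F_Y) = \|Q^0[\delta_q]\,P^0[\delta_p]\,Q^0[\delta_q]\|$ independently of $k,l$. A subsequent dilation $D_\lambda\psi(x) = \sqrt{\lambda}\,\psi(\lambda x)$ rescales the position interval by $\lambda^{-1}$ and the momentum interval by $\lambda$ (and is unitary), so the remaining constant depends on $\delta_q$ and $\delta_p$ only through the dimensionless product $\delta_q\delta_p$.

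Finally, the reduced operator $Q^0[\delta_q]\,P^0[\delta_p]\,Q^0[\delta_q]$ is precisely the time-and-band limiting operator studied by Slepian: a position-truncation to an interval of length $\delta_q$, composed with a momentum-truncation to an interval of length $\delta_p$, and its adjoint. Its largest eigenvalue is the classical quantity $\frac{1}{2\pi}\,\delta_q\delta_p\, S_0^{(1)}(1,\delta_q\delta_p/4)^{2}$ computed in Ref.~\onlinecite{Slepian1964}, which matches~\eqref{eq:QPconstDisc}. The only genuine obstacle in the proof is citing this spectral identity correctly; everything else is mechanical. Assembling the pieces, inserting $c(E_X, F_Y) = c(\delta_q,\delta_p)$ into Proposition~\ref{prop:minmaxdisc} yields~\eqref{eq:minmax_pq_disc}, and inserting it into Proposition~\ref{prop:URvNdisc} yields~\eqref{eq:neumann_pq_disc}, completing the corollary.
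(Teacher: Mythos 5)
Your proposal is correct and follows essentially the same route as the paper: specialize Propositions~\ref{prop:minmaxdisc} and~\ref{prop:URvNdisc} to the interval projections of $Q$ and $P$, reduce the overlap via phase-space translations and dilations to the single time-and-band limiting operator, and invoke Slepian's eigenvalue formula to identify it with $c(\delta_q,\delta_p)$ in~\eqref{eq:QPconstDisc}. No gaps to report.
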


The corollary follows directly from Proposition~\ref{prop:minmaxdisc}. Since the statement is invariant under exchanging $Q$ and $P$, the uncertainty relation in~\eqref{eq:minmax_pq_disc} also holds for the conditional min-entropy of $\omega_{Q_{\delta_q}B}$ and the conditional max-entropy of $\omega_{P_{\delta_p}C}$. Corollary~\ref{cor:pqfinite} complements known results for the Shannon entropy~\cite{Partovi83,Birula84,Rudnicki10,Rudnicki11,PhysRevA.85.042115} and for the R\'enyi entropy (for the order pair $\infty-1/2$, cf.~\eqref{eq:renyiinfty} and~\eqref{eq:renyi12})~\cite{Birula06,Rudnicki10,PhysRevA.85.042115} in the presence of quantum memories.

Let us address the sharpness of the uncertainty relations. More precisely, we say that an uncertainty relation is sharp if there exists a state for which equality is attained.

\begin{proposition}\label{lem:tight}
The entropic uncertainty relation in terms of the conditional min- and max-entropy in~\eqref{eq:minmax_pq_disc} is sharp for any spacing $\delta_q$ and $\delta_p$.
\end{proposition}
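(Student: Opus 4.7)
The plan is to exhibit a concrete saturating state on $A$ alone, with trivial memories $B$ and $C$. Consider the operator $T := Q^{0}[\delta_{q}]\, P^{0}[\delta_{p}]\, Q^{0}[\delta_{q}]$. Since the $Q^{k}$ and $P^{l}$ are projections, the identity $\|A\|^{2} = \|A^{\ast}A\| = \|AA^{\ast}\|$ together with the definition~\eqref{eq:DefQPconstDisc} gives $\|T\| = c(\delta_{q},\delta_{p})$. The operator $T$ is the classical Slepian--Pollak time--frequency localization operator (cf.~\cite{Slepian1964}), which is self-adjoint, positive, and compact on $L^{2}(\mathbb{R})$, so its operator norm is attained by an eigenvector. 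Pick a unit eigenvector $|\psi\rangle_{A}$ with $T|\psi\rangle = c(\delta_{q},\delta_{p})|\psi\rangle$; since $T$ has range in $Q^{0}[\delta_{q}]\,L^{2}(\mathbb{R})$, we automatically have $Q^{0}[\delta_{q}]|\psi\rangle = |\psi\rangle$, i.e.\ the state is supported entirely in the position interval $I_{0}^{\delta_{q}}$.

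The position-side contribution is then trivially zero. The classical distribution $p_{k}^{\mathrm{pos}} := \langle\psi|Q^{k}[\delta_{q}]|\psi\rangle$ equals $1$ for $k=0$ and $0$ otherwise, so for trivial $B$ one has $H_{\max}(Q_{\delta_{q}}|B)_{\omega} = 2\log\sum_{k}\sqrt{p_{k}^{\mathrm{pos}}} = 0$.

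For the momentum side, first note that
\begin{equation*}
p_{0}^{\mathrm{mom}} := \langle\psi|P^{0}[\delta_{p}]|\psi\rangle = \langle\psi|Q^{0}[\delta_{q}]\, P^{0}[\delta_{p}]\, Q^{0}[\delta_{q}]|\psi\rangle = \langle\psi|T|\psi\rangle = c(\delta_{q},\delta_{p}),
\end{equation*}
using $Q^{0}[\delta_{q}]|\psi\rangle = |\psi\rangle$ twice. For $l \neq 0$ the operator-norm bound gives $p_{l}^{\mathrm{mom}} = \|P^{l}[\delta_{p}]\, Q^{0}[\delta_{q}]|\psi\rangle\|^{2} \leq \|P^{l}[\delta_{p}]\, Q^{0}[\delta_{q}]\|^{2}$, and conjugation by the unitary $e^{i\, l\delta_{p} Q}$ (a function of $Q$) fixes $Q^{0}[\delta_{q}]$ while mapping $P^{0}[\delta_{p}]$ to $P^{l}[\delta_{p}]$, so $\|P^{l}[\delta_{p}]\, Q^{0}[\delta_{q}]\|^{2} = \|P^{0}[\delta_{p}]\, Q^{0}[\delta_{q}]\|^{2} = c(\delta_{q},\delta_{p})$. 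Hence $\max_{l} p_{l}^{\mathrm{mom}} = c(\delta_{q},\delta_{p})$, so with trivial $C$ we obtain $H_{\min}(P_{\delta_{p}}|C)_{\omega} = -\log c(\delta_{q},\delta_{p})$.

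Adding the two contributions gives exactly $-\log c(\delta_{q},\delta_{p})$, saturating~\eqref{eq:minmax_pq_disc}. The only nontrivial point to justify carefully is that the supremum in the definition of $c$ is actually attained by the top eigenvector of $T$; this relies on the compactness of $T$ on $Q^{0}[\delta_{q}]\,L^{2}(\mathbb{R})$, which is the content of the Slepian--Pollak theory of prolate spheroidal wave functions. The translation-invariance step is then a routine consequence of $e^{i\alpha Q}$ being unitary and implementing a rigid shift of the spectrum of $P$ by $-\alpha$.
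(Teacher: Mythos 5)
Your proposal is correct and follows essentially the same route as the paper's proof: a pure state on $A$ with trivial memories, supported in a single position interval and chosen as the top (prolate spheroidal) eigenvector of $Q^{0}[\delta_q]P^{0}[\delta_p]Q^{0}[\delta_q]$, which gives $H_{\max}(Q_{\delta_q})_\omega=0$ and $H_{\min}(P_{\delta_p})_\omega=-\log c(\delta_q,\delta_p)$. Your version is merely a bit more explicit about the attainment of the norm (compactness of the Slepian--Pollak operator) and the translation-invariance argument for the other momentum intervals, both of which the paper uses implicitly.
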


\begin{proof}
By the data-processing inequalities for the conditional min- and max-entropy (Proposition~\ref{prop:data}), it is enough to show sharpness for
\begin{align}\label{eq:tight}
H_{\max}(Q_{\delta_q})_{\omega}+H_{\min}(P_{\delta_p})_{\omega}\geq\log c(\delta_q,\delta_p)\ .
\end{align}
Let us assume that the partitions are centralized around $0$, such that they contain the intervals $I=[-\delta_q/2,\delta_q/2]$ and $J=[-\delta_p/2,\delta_p/2]$, respectively. We take a normalized pure state given by a wave function $\psi(q)\in L^{2}(\mathbb{R})$ with support on $I$. It then follows that the distribution of the discretized position measurement is peaked and thus, $H_{\max}(Q(\delta_q))_{\omega}=0$. The probability distribution of the momentum measurement is given by $|\cF[\psi](p)|^{2}$, where $\cF$ denotes the Fourier transform. Therefore, we find for the min-entropy that
\begin{align}
2^{H_{\min}(P_{\delta_p})_{\omega}}\leq \int\chi_{J}(p)\big|\cF[\psi](p)\big|^{2}{\rm d}p & =\frac{1}{2\pi}\int\chi_{I}(q_1)\chi_{J}(p)\chi_{I}(q_2)\bar{\psi}(r)\psi(q)e^{-i\left(q-r\right)p}{\rm d}q_1{\rm d}p{\rm d}q_2 \\ & = \braket{\psi}{Q[I]P[J]Q[I] \psi}\ ,
\end{align}
where $\chi_I$ denotes the indicator function on I. But also note that the overlap is given by
\begin{align}
\left\|Q[I]P[J]\right\|^2=\left\|Q[I]P[J]Q[I]\right\| =\sup_{\phi\in L^{2}(X)}\braket{\phi}{Q[I]P[J]Q[I] \phi}\ .
\end{align}
Since the supremum over $\phi$ can be restricted to functions with support on $I$, we find the claim by choosing $\psi$ as the optimal $\phi$. We finally note that this state $\psi(q)$ is given by the normalized projection of the radial prolate spheroidal wave function of the first kind onto the interval $I$ (see, e.g., Ref.~\onlinecite{Kiukas10} and references therein).
\end{proof}

The sharpness for the uncertainty relation in terms of the conditional von Neumann entropy ~\eqref{eq:neumann_pq_disc} is a more difficult question, since one might expect that a non-trivial quantum memory is crucial. This is because without quantum memory the uncertainty relation 
\begin{align}
H(Q_{\delta_q})_{\omega}+H(P_{\delta_p})_{\omega}\geq\log\left(\frac{e\pi}{\delta_q\delta_p}\right)\ ,
\end{align}
with a different lower bound has been shown~\cite{Birula84}, which becomes a better bound for small enough $\delta_q\delta_p$.

Let us now consider continuous position-momentum distributions. In order to compute the measurement overlap given in~\eqref{eq:ComplConstCont}, we can simply take the limit of $c(\delta,\delta)$ for $\delta\rightarrow 0$ yielding 
\begin{align}
c(E_Q,E_P) =\lim_{\delta\rightarrow 0}\frac1{2\pi} \ S_{0}^{(1)}\left(1,\frac{\delta^2}{4}\right)^{2}=\frac1{2\pi}\ ,
\end{align}
where we used~\eqref{eq:QPconstDisc}, and that $S_{0}^{(1)}\left(1,\frac{\delta^2}{4}\right)\rightarrow 1$ for $\delta\rightarrow 0$. 
Hence, we immediately obtain the following corollary.

\begin{corollary}\label{cor:cont_pq}
Let $\cM_{ABC}=\cB(L^{2}(\mathbb{R}))\otimes\cM_{BC}$ with $\cM_{BC}$ a von Neumann algebra, $\omega_{ABC}\in\cS(\cM_{ABC})$, and denote the post-measurement states obtained by continuous position and momentum measurements on system $A$ by $\omega_{QBC}$ and $\omega_{PBC}$. If there exists a finite spacing $\delta_q$ such that $H_{\max}(Q_{\delta_q})_{\omega}<\infty$, then we have that
\begin{align}\label{eq:minmax_pq_cont}
h_{\min}(Q|B)_{\omega}+h_{\max}(P|C)_{\omega}\geq\log2\pi\ .
\end{align}
Furthermore, if $-\infty<h(Q|B)_{\omega}$, $-\infty<h(P|C)_{\omega}$, and if there exists finite spacings $\delta_q , \delta_p$ for which $H(Q_{\delta_q}|B)_\omega<\infty$ and $H(P_{\delta_p}|C)_\omega<\infty$, then we have that
\begin{align}\label{eq:PQcontURvN}
h(Q|B)_{\omega}+h(P|C)_{\omega}\geq\log2\pi\ .
\end{align}
\end{corollary}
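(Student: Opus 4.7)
The plan is to obtain both inequalities as direct specializations of the general entropic uncertainty relations for arbitrary measurements on von Neumann algebras, namely Theorem~\ref{thm:minmaxcont} for the min/max statement and Theorem~\ref{prop:vN_tri_disc} for the von Neumann statement, applied with $E_X = E_Q$ and $F_Y = E_P$. The only quantity left to identify is the measurement overlap $c(E_Q, E_P)$ in~\eqref{eq:ComplConstCont}; once this is shown to equal $1/(2\pi)$, both claims reduce to matching hypotheses.

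For the overlap, I would fix any convenient coarse graining of $\mathbb{R}$, for instance the dyadic partitions used in Section~\ref{sec:discretisation}, of widths $\alpha$ and $\beta$ for the position and momentum sides. The crucial observation is that phase-space translation is unitarily implemented by $e^{-ix_0 P}$ and $e^{-ip_0 Q}$, so $\|Q^k[\alpha]^{1/2} P^l[\beta]^{1/2}\|^2 = \|Q^k[\alpha] P^l[\beta] Q^k[\alpha]\|$ is independent of $k$ and $l$ and the supremum in~\eqref{eq:ComplConstCont} collapses to a single operator norm. Slepian's identity, already invoked in~\eqref{eq:QPconstDisc}, evaluates this norm to $(\alpha\beta)/(2\pi)\cdot S_0^{(1)}(1,\alpha\beta/4)^2$. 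Dividing by $\alpha\beta$ and sending $\alpha,\beta \to 0$ gives $c(E_Q,E_P) = 1/(2\pi)$ because $S_0^{(1)}(1,\,\cdot\,)\to 1$ at the origin; since $S_0^{(1)}(1,\cdot)$ is monotone on the relevant range, the liminf in~\eqref{eq:ComplConstCont} is a genuine limit.

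For the first claim I would then apply Theorem~\ref{thm:minmaxcont}: the sole hypothesis required there is the finiteness of a discretized max-entropy at some spacing, which is supplied by the stated assumption $H_{\max}(Q_{\delta_q})_\omega < \infty$ (with the roles of $Q$ and $P$ assigned according to the left-hand side, using the freedom noted after Corollary~\ref{cor:pqfinite}). For the second claim I would apply Theorem~\ref{prop:vN_tri_disc}, whose three finiteness conditions on $h(X|B)$, $h(Y|C)$ and the discretized von Neumann entropies correspond one-to-one with the hypotheses listed in the corollary. In each case, substituting the value $-\log c(E_Q,E_P) = \log(2\pi)$ into the general inequality yields the stated bound.

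The main obstacle is essentially nonexistent: the heavy lifting has been done in Section~\ref{sec:relations} and in the Slepian computation~\eqref{eq:DefQPconstDisc}--\eqref{eq:QPconstDisc}. The only point deserving explicit verification is that the liminf in~\eqref{eq:ComplConstCont} can be replaced by a limit in the continuous limit, which as remarked follows from the monotonicity of $S_0^{(1)}(1,\cdot)$, and that the overlap formula is independent of the specific choice of dyadic coarse graining, which follows from translation invariance together with the invariance of the operator norm under unitary conjugation.
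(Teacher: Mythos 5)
Your proposal matches the paper's own proof: the overlap $c(E_Q,E_P)$ is evaluated exactly as in~\eqref{eq:DefQPconstDisc}--\eqref{eq:QPconstDisc} via translation invariance and Slepian's formula, and its limit $1/(2\pi)$ as the spacings vanish is then fed into Theorem~\ref{thm:minmaxcont} and Theorem~\ref{prop:vN_tri_disc}, whose finiteness hypotheses are exactly those listed in the corollary. No substantive difference from the paper's argument.
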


Let us first note that for states with finite expectation for the operator $Q^2 + P^2$ we can always find a spacing for which $H_{\max}(Q_{\delta_q})_{\omega}$, $H(Q_{\delta_q}|B)_\omega$ and $H(P_{\delta p}|C)_\omega$ are less than $\infty$. This follows from Lemma~\ref{lem:MaxCond}, which says that the condition is satisfied whenever $\omega_A(P^2)$ and $\omega_A(Q^2)$ are finite. Note that due to the data processing inequality and the fact that max-entropy is larger as the von Neumann entropy, the latter is as well bounded whenever the assumptions of Lemma~\ref{lem:MaxCond} are satisfied. Hence, if considering modes of an electromagnetic field, the uncertainty relation for the conditional min-and max-entropy holds for any state with finite mean energy, while for the conditional von Neumann entropy the only further assumption is that $h(Q|B)_{\omega}$ and $h(P|C)_{\omega}$ are not $-\infty$.   

The uncertainty relation for the differential min- and max-entropy~\eqref{eq:minmax_pq_cont} is already sharp without quantum memory~\cite{Birula06}. The minimal uncertainty states are pure Gaussian states, where the product of the variances of the position and momentum measurements are minimal, that is, $\text{Var}(Q)\text{Var}(P)= \frac14$. This follows from the fact that for a Gaussian distribution $X$ with variance $\sigma$, $H_{\min}(X)=\log \sqrt{2 \pi \sigma}$ and $H_{\max}(X)=\log 2\sqrt{2 \pi \sigma}$. However, the uncertainty relation for the von Neumann entropy~\eqref{eq:PQcontURvN} is not sharp in the case of no quantum memory. This is due to the same reason as already encountered in the discrete case. Namely, the tighter inequality 
\begin{align}\label{eq:hirschmann}
H(Q)_{\omega}+H(P)_{\omega}\geq\log e\pi
\end{align}
holds in the absence of quantum memories~\cite{Beckner75,Birula75}. Another sharp uncertainty relation without quantum memory has recently been shown by Lieb and Frank~\cite{Lieb11}: $H(Q)_{\omega}+H(P)_{\omega}\geq\log 2\pi+H(A)_{\omega}$ (see also Ref.~\onlinecite{Rumin11,Rumin12}). But strikingly, the uncertainty relation~\eqref{eq:PQcontURvN} is sharp if we include quantum memory. In particular, take $\cM_B=\cB(L^{2}(\mathbb{R}))$ and $\cM_C$ trivial, then the EPR state~\cite{Weedbrook12} on $AB$ for infinite squeezing saturates inequality~\eqref{eq:PQcontURvN}. Note that the EPR state is a Gaussian state with covariance matrix
\begin{align} 
\Gamma^{AB}(\nu) = \frac{1}{2} 
\left(
\begin{array}{cc}
\nu \idty_2 & \sqrt{\nu^2 -1} Z \\
\sqrt{\nu^2 -1} Z & \nu \idty_2 \\
\end{array}
\right)\ ,
\end{align}
where $\nu=\cosh(2 r)$ with $r$ the squeezing strength and $Z=\text{diag}(1,-1)$. The covariance matrix is written with respect to a phase space parametrization given by $(q_A,p_A,q_B,p_B)$.  In the following we denote by $\omega_{AB}^\nu$ the Gaussian state corresponding to $\Gamma^{AB}(\nu)$. The variance of the outcome distribution of the $P$ measurement on the $A$ system (which is Gaussian as well) is given by $\Gamma_{2,2}^{AB}(\nu) = \nu/2$, and thus,
\begin{align}
h(P)_{\omega^\nu}  = \log(e)/2 + \log\sqrt{\pi \nu}\ .
\end{align}
In order to compute $h(Q|B)_{\omega^\nu}$, we first note that since $ h(Q)_{\omega^\nu}  < \infty$ we can use Proposition~\ref{prop:vNapprox} to write
\begin{align}
h(Q|B)_{\omega^\nu} = h(Q)_{\omega^\nu} - D(\omega^\nu_{QB}|| \omega_Q \otimes \omega_B)\ .
\end{align}
Due to $\Gamma_{1,1}^{AB}(\nu) = \Gamma_{2,2}^{AB}(\nu) $, we get $h(Q)_{\omega^\nu} = h(P)_{\omega^\nu} $. By using disintegration theory, Ref.~\onlinecite[Chapter IV.7]{Takesaki1}, we can further compute 
\begin{align}
D(\omega^\nu_{QB}|| \omega_Q \otimes \omega_B) = \int \omega_Q(x) \tr( \omega_B^x \log \omega_B^x) dx +  h(Q)_{\omega^\nu}   - h(Q)_{\omega^\nu}   + H(B)_{\omega^\nu}\ ,
\end{align} 
where $\omega_B^x$ is the normalized post-measurement state on $B$ conditioned on the outcome $x$ of the $Q$ measurement on $A$. Note that for every $x$ the state $\omega_B^x$ is a pure Gaussian states with a covariance matrix independent on $x$ and a displacement $x$ (see, e.g., Ref.~\onlinecite{Lodewyck07}). Thus, we end up with $D(\omega^\nu_{QB}|| \omega_Q \otimes \omega_B) =  H(B)_{\omega^\nu}$. Note that in the case of a Gaussian state the von Neumann entropy only depends on the symplectic eigenvalues of the covariance matrix~\cite{Holevo99,Weedbrook12}, and in our case it is given by $ H(B)_{\omega^\nu} =  t \log t - (t-1)\log(t-1)$ with $t=(\nu +1)/2$. Hence, we finally get a closed expression for the left hand side of~\eqref{eq:PQcontURvN}
\begin{align}
f(\nu) = \log(e\pi \nu) - \frac{\nu+1}{2} \log  \frac{\nu+1}{2} + \frac{\nu-1}{2} \log  \frac{\nu-1}{2}\ .
\end{align} 
Note that $f(\nu) \rightarrow \log(2\pi)$ for $\nu \rightarrow \infty$ and the uncertainty relation~\eqref{eq:PQcontURvN} gets sharp. As illustrated in Figure~\ref{fig2}, the gap closes exponentially in the squeezing parameter $r$, and thus, linear in the energy. We conclude the above discussion with the following statement. 

\begin{figure}
\includegraphics{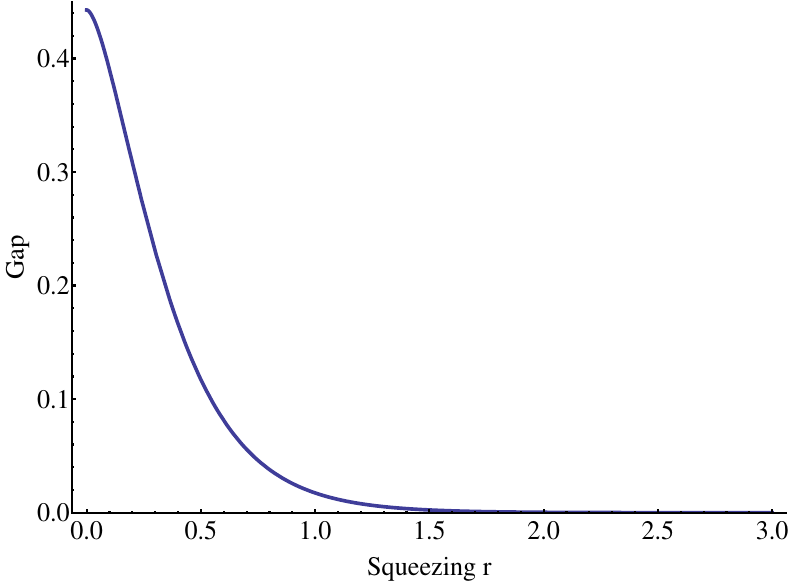}
\includegraphics{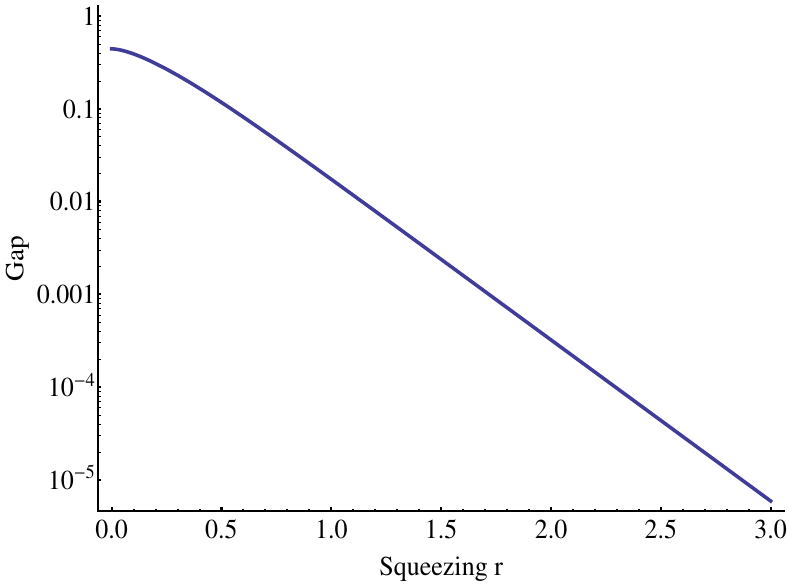}
\caption{The two plots show the gap $f(\nu) - \log(2\pi)$ in dependence on the squeezing $ r$ ($\nu =\cosh 2r$) for a linear (left) and logarithmic scaling (right). The plot on the right hand side illustrates the exponential convergence in $r$. Note that the mean energy of an EPR state with squeezing $r$ with respect to the harmonic oscillator Hamiltonian $Q_A^2 + P_A^2 +Q_B^2 + P_B^2$ is given by $1+ 2\sinh(r/2)^2$. Considering two-mode squeezed vacuum states of light, an experimentally achievable squeezing of $10$dB~\cite{Eberle13} corresponds to a squeezing of $r \approx 1.5$ for which the gap is already negligible.}
\label{fig2}
\end{figure}

\begin{proposition}\label{lem:tightvonNeum}
The entropic uncertainty relations for continuous position and momentum measurements stated in~\eqref{eq:minmax_pq_cont} and~\eqref{eq:PQcontURvN} are sharp.
\end{proposition}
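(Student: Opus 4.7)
The plan is to handle the two assertions separately. Quantum side information is not needed for~\eqref{eq:minmax_pq_cont}, since a pure Gaussian wave packet already saturates the bound (essentially recovering the result of~\cite{Birula06}). For~\eqref{eq:PQcontURvN}, by contrast, a non-trivial memory is essential because without memory one has the strictly stronger bound~\eqref{eq:hirschmann}; the preceding discussion identifies the two-mode squeezed state in the limit of infinite squeezing as the natural candidate, so the task reduces to evaluating the closed form $f(\nu)$ and verifying $f(\nu)\to\log 2\pi$.

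For~\eqref{eq:minmax_pq_cont}, I would take $\cM_B,\cM_C$ trivial and a pure Gaussian $\omega$ of position variance $\sigma^2$, hence momentum variance $1/(4\sigma^2)$. The formulas~\eqref{eq:renyiinfty}--\eqref{eq:renyi12} identify the differential min- and max-entropies with the $L^\infty$- and $L^{1/2}$-norms of the density, and a direct Gaussian integral then yields $h_{\min}(Q)_{\omega}=\tfrac12\log(2\pi\sigma^2)$ and $h_{\max}(P)_{\omega}=\tfrac12\log(2\pi/\sigma^2)$; the $\sigma$-dependence cancels in the sum, giving exact equality in~\eqref{eq:minmax_pq_cont}.

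For~\eqref{eq:PQcontURvN}, I would take $\cM_C=\mathbb{C}$, $\cM_B=\cB(L^2(\mathbb{R}))$, and $\omega_{AB}^\nu$ the two-mode squeezed Gaussian state with covariance $\Gamma^{AB}(\nu)$ displayed in the discussion, then execute four computational steps. First, read off $h(P)_{\omega^\nu}=\tfrac12\log(\pi e\nu)$ from the Gaussian $P_A$-marginal of variance $\nu/2$. Second, since $h(Q)_{\omega^\nu}$ is finite by the same formula, apply the chain rule~\eqref{eq:kuznetsova} of Proposition~\ref{prop:vNapprox} to rewrite $h(Q|B)_{\omega^\nu}=h(Q)_{\omega^\nu}-D(\omega_{QB}^\nu\|\omega_Q\otimes\omega_B)$. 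Third, invoke disintegration theory together with the fact that the post-measurement states $\omega_B^q$ on $B$ are pure displaced Gaussians with a common ($q$-independent) covariance matrix to collapse this relative entropy into the marginal von Neumann entropy $H(B)_{\omega^\nu}$. Fourth, evaluate $H(B)_{\omega^\nu}$ from the single symplectic eigenvalue $t=(\nu+1)/2$ of the thermal marginal, obtaining $t\log t-(t-1)\log(t-1)$. Assembling the four pieces yields the closed form $f(\nu)=\log(e\pi\nu)-\tfrac{\nu+1}{2}\log\tfrac{\nu+1}{2}+\tfrac{\nu-1}{2}\log\tfrac{\nu-1}{2}$, and the asymptotic expansion $t\log t-(t-1)\log(t-1)=\log t+\log e+O(1/t)$ as $t\to\infty$ shows $f(\nu)\to\log(2\pi)$.

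The main obstacle I anticipate is step three: rigorously justifying $D(\omega_{QB}^\nu\|\omega_Q\otimes\omega_B)=H(B)_{\omega^\nu}$ by disintegration on the algebra $L^\infty(\mathbb{R})\otimes\cB(L^2(\mathbb{R}))$. One needs that conditioning the EPR state on a sharp $Q_A$-outcome produces a measurable family $\{\omega_B^q\}$ of pure Gaussian states whose covariance is independent of $q$, a fact which is standard at the level of characteristic functions but requires care in the algebraic framework of Section~\ref{sec:prel}, in the spirit of the disintegration step~\eqref{eq:17} used in the proof of Proposition~\ref{prop:vNapprox}. All other manipulations are routine Gaussian calculations.
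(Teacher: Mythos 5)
Your proposal is correct and follows essentially the same route as the paper: sharpness of~\eqref{eq:minmax_pq_cont} via pure Gaussian minimum-uncertainty states with trivial memory (the Gaussian min-/max-entropy values you compute match the paper's, summing to $\log 2\pi$), and sharpness of~\eqref{eq:PQcontURvN} via the two-mode squeezed state $\omega^\nu_{AB}$, using the chain rule of Proposition~\ref{prop:vNapprox}, the disintegration argument giving $D(\omega^\nu_{QB}\|\omega_Q\otimes\omega_B)=H(B)_{\omega^\nu}$ with $t=(\nu+1)/2$, and the limit $f(\nu)\to\log 2\pi$ as $\nu\to\infty$. The disintegration subtlety you flag is handled in the paper exactly as you suggest, by the same Takesaki disintegration machinery used in~\eqref{eq:17} together with the standard Gaussian conditioning facts.
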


%%%%%%%%%%%%%%%%%%%%%%%%%%%%%%%%%%%%%%%%%%%%%%%%%%%%%%%%%%%%%%%

\section{Conclusion and Outlook}\label{sec:outlook}

We have shown entropic uncertainty relations in the presence of quantum memory for states on von Neumann algebras and measurements with an infinite or continuous outcome range. Our relations are expressed in terms of differential quantum conditional von Neumann entropy and differential quantum conditional min- and max-entropy. We further established a discrete approximation theorem for those quantum conditional entropies by their discrete and regularized counterparts. We have shown that the uncertainty relations in the continuous case are sharp in the sense that there exists a state for which equality holds. Hence, they provide the best possible state-independent bounds. Moreover, it turned out that in the case of the von Neumann entropy, the minimal uncertainty bound is lowered in the presence of quantum memories.

The use of quantum conditional entropy measures to express the uncertainty principle is motivated by their importance in quantum information theory. Whereas von Neumann entropy based measures are the most studied in quantum physics and asymptotic quantum information theory, the conditional min- and max-entropy have applications in non-asymptotic quantum information theory and quantum cryptography. For example, an entropic uncertainty relation for finite and discrete conditional min- and max-entropy has been used to show security for discrete variable quantum key distribution~\cite{Tomamichel11,Lim11}. Similarly, our uncertainty relation for position and momentum observables and conditional min- and max-entropy in~\eqref{eq:minmax_pq_disc} and~\eqref{eq:minmax_pq_cont} have direct application to continuous variable quantum key distribution, and where the key ingredient for the first quantitative finite-key security analysis against coherent attacks~\cite{PhysRevLett.109.100502,furrer14}. Moreover, the uncertainty relation for the differential quantum conditional von Neumann entropy in~\cite{} has been used to obtain simple key rate formulas in the asymptotic limit~\cite{Walk14}. Finally, it was already suggested in~\cite{Berta09} to use bipartite entropic uncertainty relations in the presence of quantum memory for entanglement witnessing. This would be certainly interesting for continuous variable systems since it provides a simple criterion. Ideas in this direction have for instance been developed in~\cite{Wallborn11,Schneeloch12,Ray13}.

An interesting open question concerns the derivation of bipartite continuous variable entropic uncertainty relations. In the case of the conditional von Neumann entropy and rank-$1$ measurements, the tripartite uncertainty relation is equivalent to~\cite{Berta09}
\begin{align}\label{eq:URsideInfo_bi}
H(X|B)+H(Y|B)\geq\log\frac{1}{c}+H(A|B)\ , 
\end{align}
with the same constant $c$ as in the tripartite version. Recently, Frank and Lieb~\cite{Lieb12} extended the relation~\eqref{eq:URsideInfo_bi}, and proved that for any finite-dimensional bipartite quantum state $\rho_{AB}$ as well as finite measurements $\{E_{x}\}_{x\in X}$ and $\{F_{y}\}_{y\in Y}$ on $A$,
\begin{align}\label{eq:lieb}
H(X|B)+H(Y|B)\geq\log\frac{1}{c_{1}}+H(A|B)\ ,
\end{align}
where $c_1=\max_{x,y}\tr\big[E_x F_y\big]$. The constant $c_{1}$ agrees with the constant $c$ if at least one of the measurements is rank-one projective~\cite{Coles10}, and is otherwise an upper bound. Frank and Lieb then go on and extend this to continuous position and momentum measurements, for which they get $c_{1}=\frac{1}{2\pi}$. An alternative generalization of~\eqref{eq:URsideInfo_bi} (that is tighter than the relation~\eqref{eq:lieb} for some natural applications) was presented by one of the authors in his thesis~\cite{tomamichel:thesis}. This approach is motivated by the following gedankenexperiment. Consider a quantum system $A$ comprised of two qubits, $A_1$ and $A_2$, where $A_1$ is maximally entangled with a second system, $B$, and $A_2$ is in a fully mixed state, product with $A_1$ and $B$. We employ projective measurements $E^1$ and $F^1$ which measure $A_1$ in two mutually unbiased bases and leave $A_2$ intact. Analogously, $E^2$ and $F^2$ measure $A_2$ in mutually unbiased bases and leave $A_1$ intact. Evaluating the terms of interest for this setup yields $c(E^1,F^1)=c(E^2, F^2) = \frac{1}{2}$ and $c_1(E^1, F^1) = c_1(E^2, F^2) = 1$ as well as $H(A|B) = H(A_1|B) + H(A_2) = -1 + 1 = 0$. Indeed, if the maximally entangled system $A_1$ is measured, we find that $H(X|B) + H(Y|B) = 0$, and the bound by Frank and Lieb~\eqref{eq:lieb} is sharp. On the other hand, if $A_2$ is measured instead, we expect that $H(X|B) + H(Y|B) = 2$ and the bound is far from sharp. Examining the above example, it is clear that the expected uncertainty depends strongly on which of the two systems is measured and thus, how much entanglement is consumed. However, this information is not taken into account by the overlaps $c$ or $c_1$, nor by the entanglement of the initial state, $H(A|B)$. In the above example, it is straightforward to see that if $A_1$ ($A_2$) is measured, the average entanglement left in the post-measurement state measured by the von Neumann entropy is given by $H(A_2|B)$ ($H(A_1|B)$). Hence,~\eqref{eq:URsideInfo_bi} would turn into
\begin{align}
H(X|B)+H(Y|B)\geq\log\frac{1}{c}+\Big(H(A|B)-H(A'|B)\Big)\ ,
\end{align}
where $A'$ corresponds to $A_2$ if $A_1$ is measured and $A_1$ if $A_2$ is measured instead. It is easy to verify that the above inequality is sharp for both examples considered so far. This suggests that~\eqref{eq:URsideInfo_bi} can be generalized by considering the difference in entanglement of the state before and after measurement. The entanglement of the post-measurement state vanishes for rank-one projective measurements, which is why this contribution was initially overlooked\,---\,however, it must be accounted for when considering general measurements. An alternative bipartite entropic uncertainty relation can now be stated in the following way. Let $\{E_{x}\}_{x\in X}$ and $\{F_{y}\}_{y\in Y}$ be finite measurements on a finite-dimensional quantum system $A$, and denote the Stinespring dilations of $\{E_{x}\}_{x\in X}$ and $\{F_{y}\}_{y\in Y}$ by $U$ and $V$, respectively. Then, it holds for any finite-dimensional bipartite quantum state $\rho_{AB}$ that
\begin{align}\label{eq:tomamichel}
H(X|B)_{\omega\circ E}+H(Y|B)_{\omega\circ F}\geq-\log c(E_{X},F_{Y})+H(A|B)_{\omega}-\min\Big\{H(A|XB)_{\omega\circ V},H(A|YB)_{\omega\circ U}\Big\}\ .
\end{align}
It is possible to reformulate the relation~\eqref{eq:tomamichel} in terms of von Neumann mutual information or von Neumann conditional mutual information, and it would be interesting to find formulations of~\eqref{eq:tomamichel} that also hold for continuous measurements and infinite-dimensional quantum memory.

\section*{Acknowledgements}
We gratefully acknowledge discussions with Renato Renner, Reinhard F.~Werner and Jukka Kiukkas and express special thanks to Michael Walter. MB, VBS and MC acknowledge financial support by the Swiss National Science Foundation (grants PP00P2-128455, 20CH21-138799 (CHIST-ERA project CQC)), the Swiss National Center of Competence in Research 'Quantum Science and Technology (QSIT)', the Swiss State Secretariat for Education and Research supporting COST action MP1006 and the European Research Council under the European Union's Seventh Framework Programme (FP/2007-2013) / ERC Grant Agreement no. 337603. MC also acknowledges a Sapere Aude Grant from the Danish Council for Independent Research. VBS is in addition supported by an ETH Postdoctoral Fellowship. FF acknowledges support from Japan Society for the Promotion of Science (JSPS) by KAKENHI grant No. 24-02793. MT is funded by the Ministry of Education (MOE) and National Research Foundation Singapore, as well as MOE Tier 3 Grant "Random numbers from quantum processes" (MOE2012-T3-1-009).

\appendix

\section{Properties of the Conditional Min- and Max-Entropy}

\begin{definition}\label{def:maxrelative}
Let $\cM$ be a von Neumann algebra, $\omega\in\cP(\cM)$, and $\sigma\in\cP(\cM)$. Then, the max-relative entropy of $\omega$ with respect to $\sigma$ is defined as
\begin{align}
D_{\max}(\omega\|\sigma)=\inf\{\iota\in\mathbb{R}:\omega\leq2^{\iota}\cdot\sigma\}\ .
\end{align}
\end{definition}

\begin{definition}\label{lem:HminDmax}
Let $\cM_{AB}=\cB(\cH_{A})\otimes\cM_{B}$ with $\cH_{A}$ a finite-dimensional Hilbert space, $\cM_{B}$ a von Neumann algebra, and $\omega_{AB}\in\cS_{\leq}(\cM_{AB})$. Then, the conditional min-entropy of $A$ given $B$ is defined as
\begin{align}
H_{\min}(A|B)_{\omega}=-\inf_{\sigma_{B}\in\cS(\cM_{B})}D_{\max}(\omega_{AB}\|\tau_{A}\otimes\sigma_{B})\ ,
\end{align}
where $\tau_{A}$ denotes the trace on $\cB(\cH_{A})$. Furthermore, the conditional max-entropy of $A$ given $B$ is defined as
\begin{align}
H_{\max}(A|B)_{\omega}=\sup_{\sigma_{B}\in\cS(\cM_{B})}F(\omega_{AB},\tau_{A}\otimes\sigma_{B})\ .
\end{align}
\end{definition}

\begin{lemma}\label{lem:mon_Dmax}
Let $\cM_{A},\cM_{B}$ be von Neumann algebras, $\omega_{A},\sigma_{A}\in\cP(\cM_{A})$, and let $\cE:\cM_{B}\rightarrow\cM_{A}$ be a normal, completely positive, and sub-unital map. Then, we have that
\begin{align}
D_{\max}(\omega_{A}\|\sigma_{A})\geq D_{\max}(\omega_{A}\circ\cE\|\sigma_{A}\circ\cE)\ .
\end{align}
\end{lemma}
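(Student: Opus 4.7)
The plan is to prove this essentially by unwinding the definition of $D_{\max}$: any $\iota$ that is feasible on the $\cM_A$ side is also feasible on the $\cM_B$ side, and so the infimum can only decrease under composition with $\cE$. No variational trick or spectral analysis is required; the only ingredients are positivity and normality of $\cE$.

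First, fix any $\iota \in \mathbb{R}$ with $\omega_A \leq 2^\iota \sigma_A$, i.e., $(2^\iota \sigma_A - \omega_A)(a) \geq 0$ for every positive $a \in \cM_A$. I would then observe that for any positive $b \in \cM_B$ the element $\cE(b) \in \cM_A$ is positive, since $\cE$ is completely positive (in particular $2$-positive, hence positive). Applying the inequality above to $a = \cE(b)$ gives
\begin{equation}
(\omega_A \circ \cE)(b) = \omega_A(\cE(b)) \leq 2^\iota \sigma_A(\cE(b)) = 2^\iota (\sigma_A \circ \cE)(b)\ ,
\end{equation}
for every positive $b \in \cM_B$. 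Hence $\omega_A \circ \cE \leq 2^\iota \, \sigma_A \circ \cE$ as positive functionals on $\cM_B$, so the same $\iota$ is admissible for the infimum defining $D_{\max}(\omega_A \circ \cE \,\|\, \sigma_A \circ \cE)$.

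Before passing to the infimum, I would briefly justify that $\omega_A \circ \cE$ and $\sigma_A \circ \cE$ actually lie in $\cP(\cM_B)$, so that the right-hand max-relative entropy is even defined: positivity follows from the argument above applied to the individual functionals, and normality (i.e. $\sigma$-weak continuity) follows from the assumption that $\cE$ is normal, together with normality of $\omega_A$ and $\sigma_A$. Now taking the infimum over all admissible $\iota$ on both sides yields
\begin{equation}
D_{\max}(\omega_A \circ \cE \,\|\, \sigma_A \circ \cE) \leq D_{\max}(\omega_A \,\|\, \sigma_A)\ ,
\end{equation}
which is the desired inequality.

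There is no real obstacle here; the main thing to be careful about is the direction of the map ($\cE$ acts in the Heisenberg picture from $\cM_B$ to $\cM_A$, so that $\omega_A \circ \cE$ is a functional on $\cM_B$) and to note that neither the sub-unital condition $\cE(\idty_B) \leq \idty_A$ nor complete positivity beyond mere positivity is actually invoked — these hypotheses are slightly stronger than what this particular monotonicity statement needs, but they match the standing definition of a channel used elsewhere in the paper.
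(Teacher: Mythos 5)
Your proof is correct: feasibility of any $\iota$ with $\omega_A \leq 2^{\iota}\sigma_A$ is preserved under composition with the positive map $\cE$, and normality of $\cE$ guarantees that $\omega_A\circ\cE,\sigma_A\circ\cE\in\cP(\cM_B)$ so that the right-hand side is defined; the empty-feasible-set case is trivial since then $D_{\max}(\omega_A\|\sigma_A)=\infty$. The paper states Lemma~\ref{lem:mon_Dmax} without proof, and your direct unwinding of Definition~\ref{def:maxrelative} is exactly the standard argument it implicitly relies on; your remark that mere positivity suffices here (unlike Lemma~\ref{lem:mono:subunital}, where sub-unitality genuinely enters through correction terms) is also accurate.
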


\begin{lemma}\label{lem:minmax_elementary1}
Let $\cM$ be a von Neumann algebra, and $\omega,\sigma\in\cP(\cM)$ with $\sigma\geq\gamma$. Then, we have that
\begin{align}
D_{\max}(\omega\|\gamma)\geq D_{\max}(\omega\|\sigma)\ .
\end{align}
\end{lemma}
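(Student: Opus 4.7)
The claim is a direct monotonicity-in-the-second-argument property of the max-relative entropy, and my plan is to prove it by unfolding the defining infimum and exhibiting a set inclusion.

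First I would fix an arbitrary $\iota \in \mathbb{R}$ that is admissible for $D_{\max}(\omega\|\gamma)$, i.e., satisfies $\omega \leq 2^{\iota}\cdot\gamma$. Since $\gamma \leq \sigma$ by hypothesis and $2^{\iota}>0$, multiplying the inequality $\gamma \leq \sigma$ by the positive scalar $2^{\iota}$ preserves it, yielding $2^{\iota}\cdot\gamma \leq 2^{\iota}\cdot\sigma$. Transitivity of the order on $\cP(\cM)$ then gives $\omega \leq 2^{\iota}\cdot\sigma$, so the same $\iota$ is admissible for $D_{\max}(\omega\|\sigma)$.

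This shows the containment of admissible sets
\begin{align}
\{\iota\in\mathbb{R}:\omega\leq 2^{\iota}\cdot\gamma\} \;\subseteq\; \{\iota\in\mathbb{R}:\omega\leq 2^{\iota}\cdot\sigma\}\ .
\end{align}
Taking the infimum, and using that the infimum over a smaller set is at least the infimum over a larger set (with the convention $\inf\emptyset=+\infty$, so the degenerate case where no $\iota$ works for $\gamma$ only makes the left-hand side larger), we obtain $D_{\max}(\omega\|\gamma)\geq D_{\max}(\omega\|\sigma)$, as claimed.

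I do not anticipate a real obstacle: the argument uses only the order structure on $\cP(\cM)$ and the definition of $D_{\max}$, with no appeal to the modular or spatial-derivative machinery. The only small point to remark on is the trivial case when $\omega$ is not dominated by any scalar multiple of $\gamma$, which is handled by the convention on $\inf\emptyset$ and makes the inequality hold vacuously.
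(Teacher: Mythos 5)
Your argument is correct: unfolding the definition of $D_{\max}$, noting that $\omega\leq 2^{\iota}\gamma$ together with $\gamma\leq\sigma$ gives $\omega\leq 2^{\iota}\sigma$, and concluding by inclusion of the admissible sets is exactly the immediate verification the paper has in mind (it states this lemma without proof precisely because it follows directly from the definition and the order structure on $\cP(\cM)$). The remark about $\inf\emptyset=+\infty$ handling the degenerate case is a sensible addition and introduces no issues.
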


\begin{lemma}\label{lem:minmax_elementary2}
Let $\cM$ be a von Neumann algebra, $\omega,\sigma\in\cP(\cM)$, and $c>0$. Then, we have that
\begin{align}
D_{\max}(\omega\|c\cdot\sigma)=D_{\max}(\omega\|\sigma)+\log1/c\ .
\end{align}
\end{lemma}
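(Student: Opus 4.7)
The plan is to unfold the definition of the max-relative entropy given in Definition~\ref{def:maxrelative} and perform a trivial reparametrization of the infimum. By definition,
\begin{align}
D_{\max}(\omega\|c\cdot\sigma) = \inf\{\iota\in\mathbb{R} : \omega \leq 2^{\iota}\cdot (c\cdot\sigma)\}.
\end{align}
Since $c>0$, we may write $c = 2^{\log c}$, so that the operator inequality inside the set becomes $\omega \leq 2^{\iota+\log c}\cdot\sigma$.

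Next I would substitute $\iota' = \iota + \log c$, which is an invertible affine change of variable on $\mathbb{R}$. The set $\{\iota : \omega \leq 2^{\iota+\log c}\cdot\sigma\}$ is then in bijection with $\{\iota' : \omega \leq 2^{\iota'}\cdot\sigma\}$, and the infimum transforms as
\begin{align}
\inf\{\iota : \omega \leq 2^{\iota+\log c}\cdot\sigma\} = \inf\{\iota' - \log c : \omega \leq 2^{\iota'}\cdot\sigma\} = D_{\max}(\omega\|\sigma) - \log c.
\end{align}
Rewriting $-\log c = \log(1/c)$ yields the claim. No obstacle is expected: the statement is a bookkeeping identity about the definition, with no need to invoke spatial derivatives, the GNS construction, or any von Neumann algebraic machinery (the positivity of $c$ is the only hypothesis used, and it ensures that $c\sigma \in \cP(\cM)$ so the left-hand side is well defined). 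The only minor subtlety worth a one-line remark is the edge case in which no $\iota$ satisfies $\omega \leq 2^{\iota}\sigma$, in which case both sides equal $+\infty$ by the convention $\inf\emptyset = +\infty$, and the identity still holds trivially.
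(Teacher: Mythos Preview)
Your proof is correct; this is precisely the one-line reparametrization argument that the definition invites, and your handling of the $+\infty$ edge case is appropriate. The paper states this lemma without proof (it is listed in the appendix as an elementary property), so there is no alternative argument to compare against.
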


\begin{proposition}\label{prop:data}
Let $\cM_{XBC}=L^{\infty}(X)\otimes\cM_{BC}$ with $(X,\Sigma,\mu)$ a $\sigma$-finite measure space, $\cM_{BC}$ a bipartite von Neumann algebra, and $\omega_{XBC}\in\cS_{\leq}(\cM_{XBC})$. Then, we have that
\begin{eqnarray}\label{eq:DataProcessingIneq}
h_{\min}(X|BC)_{\omega} &\leq& h_{\min}(X|B)_{\omega} \\
h_{\max}(X|BC)_{\omega} &\leq& h_{\max}(X|B)_{\omega}\ .
\end{eqnarray}
\end{proposition}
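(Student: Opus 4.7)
The plan is to read off both inequalities directly from the variational definitions in Definition~\ref{def:cont_entropy}, using the algebraic fact that the partial trace $\cM_{BC} \to \cM_B$ is implemented, in the Heisenberg picture, by the unital normal inclusion $\cM_B \hookrightarrow \cM_{BC}$ of commuting subalgebras, and, dually, by restriction of states to $\cM_B$. Neither a discretisation argument nor the approximation results of Section~\ref{sec:entropy} should be needed; both inequalities are data-processing statements at the level of the optimisations inside the $\log$.

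For the min-entropy, I would argue at the level of $p_{\textrm{guess}}$. Fix any admissible $E \in L^\infty(X)\otimes \cM_B$, $E\geq 0$, with $\int E_B^x\, d\mu(x) \leq \idty_B$. Via the inclusion $\cM_B \hookrightarrow \cM_{BC}$ this same family is an admissible candidate in the supremum defining $p_{\textrm{guess}}(X|BC)_\omega$, and because $E_B^x \in \cM_B$ we have the identity $\omega_{BC}^x(E_B^x) = \omega_B^x(E_B^x)$ pointwise in $x$. Integrating and taking the supremum over POVMs on $BC$ gives $p_{\textrm{guess}}(X|BC)_\omega \geq p_{\textrm{guess}}(X|B)_\omega$, which after applying $-\log$ becomes the required inequality.

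For the max-entropy, I would argue at the level of $f_{\textrm{dec}}$. Given any $\sigma_{BC}\in \cS(\cM_{BC})$, let $\sigma_B$ be its restriction to $\cM_B$, which is again a normal state since the inclusion is a normal unital $*$-homomorphism. The key ingredient is the monotonicity of Uhlmann's fidelity under restriction to a von Neumann subalgebra, which gives
\begin{equation}
F(\omega_{BC}^x, \sigma_{BC}) \leq F(\omega_B^x, \sigma_B)\qquad\text{for $\mu$-almost every } x\ ,
\end{equation}
and holds in the present generality (see Ref.~\onlinecite{Uhlmann76} and the discussion around~\eqref{eq:fidelity}, as well as Ref.~\onlinecite{Alberti-1983}). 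Taking square roots, integrating against $\mu$, and taking the supremum over $\sigma_{BC}$ yields $f_{\textrm{dec}}(X|BC)_\omega \leq f_{\textrm{dec}}(X|B)_\omega$, from which $h_{\max}(X|BC)_\omega \leq h_{\max}(X|B)_\omega$ follows by applying $2\log$.

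The only mild technical point is to check that the embedded observable $x \mapsto E_B^x \in \cM_{BC}$ retains the measurability needed for the $BC$-optimisation and that the restriction $\sigma_{BC}\mapsto \sigma_B$ lands in $\cS(\cM_B)$; both are immediate from the general framework of Section~\ref{sec:prel} (normality of the inclusion and its pre-adjoint). Thus no genuine obstacle arises, and the proposition follows by the same reasoning as in the discrete, finite-dimensional case.
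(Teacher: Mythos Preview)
Your proposal is correct and follows essentially the same approach as the paper's own proof: for the min-entropy you embed measurements on $\cM_B$ into measurements on $\cM_{BC}$ via the inclusion and use $\omega_{BC}^x(E_B^x)=\omega_B^x(E_B^x)$, and for the max-entropy you invoke monotonicity of the fidelity under restriction to a subalgebra (citing Ref.~\onlinecite{Alberti-1983}), which is exactly what the paper does. Your write-up is simply more explicit about the technical bookkeeping (measurability of the embedded POVM, normality of the restriction map), which the paper leaves implicit.
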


\begin{proof}
The inequality for the differential conditional min-entropy is obtained by using that any $E\in\rm{Meas}(X,\cM_B)$ can be embedded into $\rm{Meas}(X,\cM_{BC})$ such that $\omega^x_{B}(E_x)=\omega^x_{BC}(E_x)$. For the differential conditional max-entropy, one exploits the fact that the fidelity can only increase under restrictions to a subsystem, that is, $F(\omega_{BC},\sigma_{BC})\leq F(\omega_{B},\sigma_{B})$ (as shown in~\cite{Alberti-1983}).
\end{proof}

\begin{lemma} (Ref.~\onlinecite[Proposition 4.14]{Berta11})\label{lem:duality}
Let $\cM_{AB}=\cB(\cH_{A})\otimes\cM_{B}$ with $\cH_{A}$ a finite-dimensional Hilbert space, $\cM_{B}$ a von Neumann algebra, and $\omega_{AB}\in\cS_{\leq}(\cM_{AB})$. Then, we have that
\begin{align}
H_{\min}(A'|C)_{\omega}&=-H_{\max}(A|B)_{\omega}\ ,
\end{align}
where $\omega_{A'B'C}$ is a purification $(\pi,\cK,\ket{\xi})$ of $\omega_{AB}$ with $\cM_{A'B'}=\pi(\cM_{AB})$ the principal system, and $\cM_{C}=\pi(\cM_{A'B'})'$ the purifying system.
\end{lemma}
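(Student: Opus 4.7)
The plan is to reduce the duality $H_{\min}(A'|C)_\omega=-H_{\max}(A|B)_\omega$ to the algebraic form of Uhlmann's theorem together with a careful bookkeeping of purifications in the standard form of the von Neumann algebras involved. After expanding Definitions~\ref{lem:HminDmax} and~\ref{def:maxrelative}, the statement becomes an identity relating an infimum of the max-relative entropy $D_{\max}(\omega_{A'C}\|\tau_{A'}\otimes\sigma_C)$ over states $\sigma_C\in\cS(\cM_C)$ to (minus) a logarithm of the supremum of the fidelity $F(\omega_{AB},\tau_A\otimes\sigma_B)$ over states $\sigma_B\in\cS(\cM_B)$. It then suffices to show how a near-optimizer on one side produces a state on the other side that achieves the matching value.

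For one direction, I would fix a near-optimal $\sigma_B$ and purify $\tau_A\otimes\sigma_B$ in the same standard Hilbert space $\cK$ in which $\omega_{AB}$ is purified by $\ket{\xi}$. Because $\cM_{A'B'}=\pi(\cM_{AB})$ acts on $\cK$ in standard form together with its commutant $\cM_C$, the algebraic Uhlmann theorem provides a purifying vector $\ket{\eta}\in\cK$ of $\tau_A\otimes\sigma_B$ whose overlap with $\ket{\xi}$ realizes the fidelity. Restricting the vector state of $\ket{\eta}$ to $\cM_C$ defines a candidate $\sigma_C$, and a short computation exploiting the finite-dimensional tensor factor $\cB(\cH_A)$ in $\cM_{AB}$ turns the overlap identity into the operator inequality required to bound $D_{\max}(\omega_{A'C}\|\tau_{A'}\otimes\sigma_C)$ by the appropriate constant.

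For the reverse direction, I would invert the construction: given $\sigma_C$ and $\lambda$ realizing $\omega_{A'C}\leq 2^\lambda\tau_{A'}\otimes\sigma_C$, purify $\tau_{A'}\otimes\sigma_C$ in $\cK$ by a vector $\ket{\eta}$, using the finite dimension of $\cH_{A'}$ to realize the $\tau_{A'}$ factor concretely by a maximally entangled vector over $A'\bar A'$ tensored with a purification of $\sigma_C$. Polar-decomposing the operator that implements one purification from the other inside the appropriate commutant then converts the operator inequality back into the required fidelity lower bound, with $\sigma_B$ defined as the restriction of the vector state of $\ket{\eta}$ to $\cM_B$.

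The main obstacle will be making the Uhlmann-type exchange of purifications between the $B$-side and the $C$-side fully rigorous when $\cM_B$ is an arbitrary von Neumann algebra rather than $\cB(\cH_B)$: in finite dimensions this is just a Schmidt decomposition, but in the algebraic setting one has to work in the standard form of $\cM_{A'B'}$, parameterize purifications through vectors in the natural positive cone, and keep track of the polar part of the bounded operator $r_\sigma(\eta)$ from the spatial-derivative machinery introduced in Section~\ref{sec:vNentropy}. The hypothesis that $\cH_A$ is finite dimensional keeps the $A$-side purely matrix-algebraic, so the genuinely algebraic work is confined to $\cM_B$ and its commutant. This is precisely the content of Proposition~4.14 in Ref.~\onlinecite{Berta11}, of which the present lemma is a restatement.
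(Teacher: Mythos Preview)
The paper does not supply its own proof of this lemma: it is quoted verbatim from Ref.~\onlinecite{Berta11} (Proposition~4.14) and left unproven here, so there is no in-paper argument to compare against. Your sketch\,---\,reducing the duality to the algebraic Uhlmann theorem, purifying $\tau_A\otimes\sigma_B$ and $\tau_{A'}\otimes\sigma_C$ in the standard form of $\cM_{A'B'}$, and reading off the optimizer on the complementary side via restriction to the commutant\,---\,is the standard route and is precisely what the cited reference does; the identification of the technical crux (making the exchange of purifications rigorous for general $\cM_B$ via the standard form and the operators $r_\sigma(\eta)$) is accurate.
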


%%%%%%%%%%%%%%%%%%%%%%%%%%%%%%%%%%%%%%%%%%%%%%%%%%%%%%%%%%%%%%%

\section{Properties of the Conditional Von Neumann Entropy}

\begin{lemma}(Ref.~\onlinecite[Corollary 5.12 (iii)]{PetzBook})\label{lem:mono}
Let $\cM_{A},\cM_{B}$ be von Neumann algebras, $\omega_{A},\sigma_{A}\in\cP(\cM_{A})$, and let $\cE:\cM_{B}\rightarrow\cM_{A}$ be a normal, completely positive, and unital map. Then, we have that
\begin{align}
D(\omega_{A}\|\sigma_{A})\geq D(\omega_{A}\circ\cE\|\sigma_{A}\circ\cE)\ .
\end{align}
\end{lemma}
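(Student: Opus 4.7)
The plan is to reduce the general monotonicity to two primitive facts: invariance of the relative entropy under normal $*$-isomorphisms, and monotonicity under restriction to a von Neumann subalgebra. First, by Stinespring's dilation theorem for normal unital completely positive maps between von Neumann algebras (see, e.g., Takesaki \emph{Theory of Operator Algebras} II, Ch.~IV), there exist a Hilbert space $\cK$, a normal unital $*$-representation $\pi:\cM_{B}\to\cB(\cK)$, and an isometry $V:\cH_{A}\to\cK$ (with $\cM_{A}\subseteq\cB(\cH_{A})$) such that $\cE(b)=V^{*}\pi(b)V$ for all $b\in\cM_{B}$; unitality of $\cE$ gives $V^{*}V=\idty$, while normality of $\cE$ makes $\pi$ normal.

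Second, let $p=VV^{*}\in\cB(\cK)$, and observe that $y\mapsto VyV^{*}$ is a normal $*$-isomorphism of $\cB(\cH_{A})$ onto the corner $p\cB(\cK)p$, under which $\cM_{A}$ is identified with $V\cM_{A}V^{*}$. Define positive normal functionals $\Omega,\Sigma$ on $\cB(\cK)$ by $\Omega(x)=\omega_{A}(V^{*}xV)$ and $\Sigma(x)=\sigma_{A}(V^{*}xV)$; their GNS data are supported on $p\cK$, and the spatial unitary $V$ identifies them with $\omega_{A},\sigma_{A}$ on the corner, so
\begin{align}
D(\omega_{A}\|\sigma_{A})\;=\;D(\Omega\|\Sigma)_{\cB(\cK)}\ .
\end{align}
Moreover, since $\pi$ is a normal $*$-representation, it implements a spatial equivalence between the GNS triples of $\Omega|_{\pi(\cM_{B})},\Sigma|_{\pi(\cM_{B})}$ and those of $\omega_{A}\circ\cE,\sigma_{A}\circ\cE$, so the relative modular operators (and thus the relative entropies) match: $D(\Omega|_{\pi(\cM_{B})}\|\Sigma|_{\pi(\cM_{B})})=D(\omega_{A}\circ\cE\|\sigma_{A}\circ\cE)$. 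Combining these two identifications with monotonicity under restriction from $\cB(\cK)$ to the subalgebra $\pi(\cM_{B})$ yields the chain
\begin{align}
D(\omega_{A}\|\sigma_{A})\;=\;D(\Omega\|\Sigma)_{\cB(\cK)}\;\geq\;D(\Omega|_{\pi(\cM_{B})}\|\Sigma|_{\pi(\cM_{B})})\;=\;D(\omega_{A}\circ\cE\|\sigma_{A}\circ\cE)\ ,
\end{align}
which is the desired inequality.

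The main obstacle is therefore the middle step: monotonicity of Araki's relative entropy under restriction to a von Neumann subalgebra $\cN\subseteq\cM$, i.e.\ $D(\phi|_{\cN}\|\psi|_{\cN})\leq D(\phi\|\psi)$. Starting from the spatial-derivative definition $D(\phi\|\psi)=\langle\xi,\log\Delta(\phi'_{\xi}/\psi)\xi\rangle$ of Definition~\ref{def:relative}, the inequality follows by showing that restriction to $\cN$ effectively compresses the relative modular operator by the projection $e_{\cN}$ onto the closure of $\cN\xi_{\psi}$, and then applying the operator Jensen inequality for the operator concave function $\log$. Making this argument rigorous in the general $\sigma$-finite von Neumann algebra setting requires the full Tomita--Takesaki machinery; the essential analytic inputs are Uhlmann's interpolation theorem and Lieb's joint concavity of $(A,B)\mapsto A^{1-s}B^{s}$. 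This is precisely the content of Corollary 5.12(iii) in the Petz monograph cited alongside the lemma, which we take as a black box here.
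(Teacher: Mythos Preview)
The paper does not prove this lemma; it simply cites Ohya--Petz, Corollary~5.12(iii). Your Stinespring-plus-restriction strategy is the standard route when $\cM_A=\cB(\cH_A)$, but as written it has a genuine gap for general von Neumann algebras $\cM_A$.

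The problem is your functional $\Omega(x)=\omega_A(V^*xV)$ on $\cB(\cK)$. For arbitrary $x\in\cB(\cK)$ the compression $V^*xV$ lands in $\cB(\cH_A)$, not in $\cM_A$, so evaluating $\omega_A$ on it requires an extension of $\omega_A$ from $\cM_A$ to $\cB(\cH_A)$. Such extensions are non-unique, and your claimed equality $D(\omega_A\|\sigma_A)_{\cM_A}=D(\Omega\|\Sigma)_{\cB(\cK)}$ fails in general: take $\cM_A=\mathbb{C}\cdot\idty\subset M_2(\mathbb{C})$ with $\omega_A=\sigma_A$ the unique state, so $D(\omega_A\|\sigma_A)=0$, yet extensions $\tilde\omega,\tilde\sigma$ with orthogonal supports give $D(\tilde\omega\|\tilde\sigma)=\infty$. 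Restriction monotonicity applied to $V\cM_A V^*\subset\cB(\cK)$ only yields $D(\Omega\|\Sigma)_{\cB(\cK)}\geq D(\omega_A\|\sigma_A)_{\cM_A}$, which points the wrong way for your chain. The underlying obstruction is that $V\cM_A V^*$ and $\pi(\cM_B)$ are incomparable subalgebras of $\cB(\cK)$, so a single restriction step cannot connect them.

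A secondary issue: you invoke Corollary~5.12(iii) of Ohya--Petz for the restriction step, but that corollary \emph{is} the full CP-map monotonicity being proved, so the citation is circular. The restriction case is indeed easier and is established earlier in Ohya--Petz, but their route to the full statement is not via Stinespring; it is a direct operator-inequality argument comparing the relative modular operators on $\cM_A$ and $\cM_B$ through the Schwarz inequality for $\cE$, followed by operator monotonicity of $\log$. If you want to salvage the dilation approach in full generality, you need a factorization that stays inside the von Neumann category (e.g.\ realizing $\cE$ as a normal $*$-homomorphism into a larger von Neumann algebra followed by a normal conditional expectation back onto $\cM_A$), rather than passing through $\cB(\cK)$.
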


We further need a slight extension of the above lemma to sub-unital maps.

\begin{lemma}\label{lem:mono:subunital}
Let $\cM_{A},\cM_{B}$ be von Neumann algebras, $\omega_{A},\sigma_{A}\in\cP(\cM_{A})$, and let $\cE:\cM_{B}\rightarrow\cM_{A}$ be a normal, completely positive, and sub-unital map. Then, we have that
%with $\cE(\idty) = b$, $b^2 = b$ a projector
\begin{align}
D(\omega_{A}\|\sigma_{A})\geq D(\omega_{A}\circ\cE\|\sigma_{A}\circ\cE) +  D(\omega_A(\idty-\cE(\idty))\|\sigma_A(\idty-\cE(\idty)) )\ . 
\end{align}
Furthermore, if $\sigma(\idty)\leq 1$, we have that 
\begin{align}
D(\omega_{A}\|\sigma_{A})\geq D(\omega_{A}\circ\cE\|\sigma_{A}\circ\cE) + \omega_A(\idty-\cE(\idty))\log \omega_A(\idty-\cE(\idty))\ . 
\end{align}
\end{lemma}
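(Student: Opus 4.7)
The plan is to reduce the sub-unital case to the unital one by extending $\cE$ to a unital completely positive map on a larger von Neumann algebra, and then invoke Lemma~\ref{lem:mono}. First I would form the direct sum $\tilde\cM = \cM_B \oplus \mathbb{C}$ and define $\tilde\cE \colon \tilde\cM \to \cM_A$ by
\begin{align}
\tilde\cE(b,\lambda) = \cE(b) + \lambda\,(\idty - \cE(\idty))\ .
\end{align}
Since $\cE$ is sub-unital we have $\idty - \cE(\idty) \geq 0$, so $\tilde\cE$ inherits normality and complete positivity from $\cE$; and at $(\idty_B, 1)$ it returns $\idty_A$, so $\tilde\cE$ is unital.

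Second, I would decompose $\omega_A \circ \tilde\cE$ and $\sigma_A \circ \tilde\cE$ in the direct-sum structure of $\tilde\cM$. A normal positive functional on $\cM_B \oplus \mathbb{C}$ is a pair, and here we obtain $(\omega_A \circ \cE,\; \omega_A(\idty - \cE(\idty)))$ and $(\sigma_A \circ \cE,\; \sigma_A(\idty - \cE(\idty)))$ respectively. The Araki relative entropy is additive under direct sums (the GNS Hilbert space of $\tilde\cM$ splits orthogonally and the spatial derivative is block diagonal), so
\begin{align}
D(\omega_A \circ \tilde\cE \| \sigma_A \circ \tilde\cE) = D(\omega_A \circ \cE \| \sigma_A \circ \cE) + D\bigl(\omega_A(\idty - \cE(\idty)) \,\big\|\, \sigma_A(\idty - \cE(\idty))\bigr)\ .
\end{align}
Applying Lemma~\ref{lem:mono} to the unital map $\tilde\cE$ gives $D(\omega_A \| \sigma_A) \geq D(\omega_A \circ \tilde\cE \| \sigma_A \circ \tilde\cE)$, which combined with the above identity yields the first claimed inequality.

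For the second inequality, I would use that on $\cM = \mathbb{C}$ the Araki relative entropy reduces to $D(c_1 \| c_2) = c_1 \log(c_1/c_2)$ for non-negative numbers $c_1, c_2$ (with the convention $0 \log 0 = 0$). If $\sigma_A(\idty) \leq 1$, then $\sigma_A(\idty - \cE(\idty)) \leq \sigma_A(\idty) \leq 1$ since $\cE(\idty) \geq 0$, and therefore $-\log \sigma_A(\idty - \cE(\idty)) \geq 0$. Consequently,
\begin{align}
D\bigl(\omega_A(\idty-\cE(\idty)) \,\big\|\, \sigma_A(\idty-\cE(\idty))\bigr) \geq \omega_A(\idty-\cE(\idty)) \log \omega_A(\idty-\cE(\idty))\ ,
\end{align}
and substituting this estimate into the first inequality produces the second.

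The main obstacle I anticipate is the precise justification of additivity of Araki's relative entropy under the direct sum decomposition $\cM_B \oplus \mathbb{C}$ in the general von Neumann algebra setting; one has to verify that the GNS representation of $\tilde\cM$ arising from a pair of functionals is the orthogonal sum of the components' GNS representations, and that the spatial derivative entering Definition~\ref{def:relative} is block diagonal with the expected scalar block on the $\mathbb{C}$ summand. Apart from this bookkeeping, every other step is either a direct appeal to the unital monotonicity lemma or an elementary scalar computation.
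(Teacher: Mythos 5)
Your proof is correct, but it follows a genuinely different route from the paper. The paper proves the lemma by dilating $\cE$ \`a la Stinespring, $\cE(a)=V^{*}\pi(a)V$, enlarging $V$ to an isometry $\tilde V=V\otimes\ket{0}+\sqrt{\idty-V^{*}V}\otimes\ket{1}$ into $\cK\otimes\mathbb{C}^{2}$, pinching with the orthogonal projections $P=\idty\otimes\kettbra{0}$ and $P_{\bot}=\idty\otimes\kettbra{1}$, and then applying the unital monotonicity (Lemma~\ref{lem:mono}) repeatedly; the two blocks produced by the pinching give exactly the two terms of the first inequality. You instead unitize the \emph{domain}: you pass to $\cM_{B}\oplus\mathbb{C}$ with $\tilde\cE(b,\lambda)=\cE(b)+\lambda(\idty-\cE(\idty))$, apply Lemma~\ref{lem:mono} once, and then use additivity of Araki's relative entropy over the direct sum, whose $\mathbb{C}$-block is the scalar relative entropy $c_{1}\log(c_{1}/c_{2})$. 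The two arguments rest on the same underlying fact\,---\,that when states and algebra split over orthogonal blocks the spatial derivative is block diagonal and the relative entropy is additive\,---\,and the paper invokes this just as informally as you do ("the spatial derivative decays in a direct sum"), so your flagged obstacle is no worse than what the published proof assumes; it is a standard property found in Ohya--Petz. Your route is arguably cleaner, since it avoids the Stinespring dilation and the identification $\cK\cong\cH_{A}$, at the cost of the routine verification (which you only assert) that $\tilde\cE$ is normal and completely positive: this follows because $(b,\lambda)\mapsto\cE(b)$ is CP as a composition of the projection homomorphism with $\cE$, and $(b,\lambda)\mapsto\lambda(\idty-\cE(\idty))$ is CP since $\idty-\cE(\idty)\geq0$ by sub-unitality. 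Your derivation of the second inequality from the first, using $\sigma_{A}(\idty-\cE(\idty))\leq\sigma_{A}(\idty)\leq1$ so that $-\omega_{A}(\idty-\cE(\idty))\log\sigma_{A}(\idty-\cE(\idty))\geq0$, is exactly the paper's argument.
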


\begin{proof}
Let $\cM_A \subset \cB(\cH_A)$. According to Stinespring's dilation theorem, Ref.~\onlinecite[Theorem 4.1]{Paulsen}, there exists a Hilbert space $\cK$ together with a representation $\pi:\cM_B \rightarrow \cB(\cK)$ and a bounded operator $V:\cH_A \rightarrow \cK$ such that $\cE(a) = V^* \pi(a) V$ for all $a\in \cM_B$. Without loss of generality, we can assume that $\cK \cong \cH_A$ by taking $\cH_A$ large enough. We define now the isometry $\tilde V:\cH_A \rightarrow \cK \otimes \mathbb C^2$ by $\tilde V = V\otimes \ket 0 + \sqrt{\idty -V^*V} \otimes \ket 1$, where $\ket 0 , \ket 1$ denotes an orthonormal basis of $\mathbb  C^2$. Let us further define the projectors $P = \idty \otimes \ketbra 00$ and $P_\bot = \idty \otimes \ketbra 11$ in $\cK\otimes \mathbb C^2$. It is then easy to see that $\cE= \text{T}_{V} \circ \text{T}_P \circ \tilde\pi_A$, where $\tilde \pi_A = \pi_A \otimes \idty$ and $\text{T}_{M}$ denotes the map $a \mapsto M^* a M$ for every $M$ with suitable range.  Applying the monotonicity of the quantum relative entropy under unital completely positive maps, Lemma~\ref{lem:mono}, we find that 
\begin{align}
D(\omega_{A}\|\sigma_{A}) &\geq D(\omega_A \circ \text{T}_V \circ (\text{T}_P + \text{T}_{P_\bot})\| \sigma_A  \circ \text{T}_V \circ (\text{T}_P + \text{T}_{P_\bot}) )
\\ & = D(\omega_A \circ \text{T}_V \circ \text{T}_P \| \sigma_A  \circ \text{T}_V \circ \text{T}_P ) + D(\omega_A \circ \text{T}_V \circ \text{T}_{P_\bot} \| \sigma_A  \circ \text{T}_V \circ \text{T}_{P_\bot} )  \, ,
\end{align}
where we used in the second step that the map $\text{T}_P + \text{T}_{P_\bot}$ divides the range into two orthogonal subspaces for which the spatial derivative decays in a direct sum with respect to these orthogonal subspaces. Applying $\pi_A$ to the first term, we obtain $D(\omega_A \circ \text{T}_V \circ \text{T}_P \| \sigma_A  \circ \text{T}_V \circ \text{T}_P ) \geq  D(\omega_A \circ \cE \| \sigma_A  \circ \cE ) $ due to $\cE= \text{T}_{V} \circ \text{T}_P \circ \tilde\pi_A$ and Lemma~\ref{lem:mono}. To the second term we apply the restriction onto the subalgebra generated by $P_\bot$, which is isomorphic to $\mathbb C$, to find $D(\omega_A \circ \text{T}_V \circ \text{T}_{P_\bot} \| \sigma_A  \circ \text{T}_V \circ \text{T}_{P_\bot} )\geq D(\omega_A(\idty-\cE(\idty))\|\sigma_A(\idty-\cE(\idty)) )$ with Lemma~\ref{lem:mono}. This proves the first assertion. The second one follows from the first by using that the term $ - \omega_A(\idty-\cE(\idty))\log \sigma_A(\idty-\cE(\idty))$ is positive whenever $\sigma(\idty)\leq 1$. 
\end{proof}

\begin{lemma} (Ref.~\onlinecite[Corollary 5.12 (ii)]{PetzBook})\label{lem:petz1}
Let $\cM$ be a von Neumann algebra, and $\omega,\sigma\in\cP(\cM)$ with $\sigma\geq\gamma$. Then, we have that
\begin{align}
D(\omega\|\gamma)\geq D(\omega\|\sigma)\ .
\end{align}
\end{lemma}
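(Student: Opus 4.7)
The plan is to exploit the anti-monotonicity of the spatial derivative in its second argument combined with the operator monotonicity of the logarithm, mirroring the finite-dimensional argument where $\sigma \geq \gamma$ implies $\log\sigma \geq \log\gamma$ and hence $-\tr[\omega\log\sigma] \leq -\tr[\omega\log\gamma]$. First I would dispose of the trivial cases: if $\xi$ (a fixed purifying vector of $\omega$) is not in the support of $\gamma$, then $D(\omega\|\gamma)=\infty$ and there is nothing to prove; otherwise, since $\sigma \geq \gamma$ implies that the support of $\sigma$ contains that of $\gamma$, the vector $\xi$ is also in the support of $\sigma$, and both relative entropies are finite real numbers (or $+\infty$) well-defined by Definition~\ref{def:relative} as $\langle\xi,\log\Delta(\omega'_\xi/\sigma)\xi\rangle$ and $\langle\xi,\log\Delta(\omega'_\xi/\gamma)\xi\rangle$ respectively.

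The crucial intermediate step is the quadratic-form inequality $\Delta(\omega'_\xi/\sigma) \leq \Delta(\omega'_\xi/\gamma)$. From the definition in~\eqref{eq:16}, this operator is determined by the form $\eta \mapsto \omega'_\xi\bigl(r_\sigma(\eta)r_\sigma(\eta)^*\bigr)$ on the dense domain $\{\eta \in \cH : \|a\eta\| \leq c_\eta\,\sigma(a^*a)^{1/2},\ a\in\cM\}$. Because $\sigma\geq\gamma$, the bound $\|a\eta\|\leq c_\eta\,\gamma(a^*a)^{1/2}$ is stronger than $\|a\eta\|\leq c_\eta\,\sigma(a^*a)^{1/2}$, so the domain associated with $\gamma$ embeds into the one associated with $\sigma$. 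On this common core, a direct manipulation of the defining inequalities for $r_\sigma$ and $r_\gamma$ together with positivity of $\omega'_\xi$ yields the form bound $\omega'_\xi(r_\sigma(\eta)r_\sigma(\eta)^*) \leq \omega'_\xi(r_\gamma(\eta)r_\gamma(\eta)^*)$. This anti-monotonicity of the spatial derivative in its second argument is the standard one established in Ref.~\onlinecite[Chapter 9.3]{Takesaki2} and Ref.~\onlinecite[Chapter 5]{PetzBook}, which I would invoke directly rather than reprove.

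Third, I apply operator monotonicity of the logarithm. To avoid delicate domain issues with the possibly unbounded operators $\Delta(\omega'_\xi/\sigma)$ and $\Delta(\omega'_\xi/\gamma)$, I would invoke the integral representation
\begin{equation*}
\log t = \int_0^\infty \left(\frac{1}{1+s} - \frac{1}{t+s}\right) ds, \qquad t>0,
\end{equation*}
together with the fact that $t\mapsto -(t+s)^{-1}$ is operator-monotone. The form inequality $\Delta(\omega'_\xi/\sigma) \leq \Delta(\omega'_\xi/\gamma)$ then implies the resolvent comparison $(\Delta(\omega'_\xi/\sigma)+s)^{-1} \geq (\Delta(\omega'_\xi/\gamma)+s)^{-1}$ as bounded operators for every $s>0$. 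Evaluating both sides on $\xi$, integrating in $s$, and using Fubini to interchange the integral with the expectation gives $\langle\xi,\log\Delta(\omega'_\xi/\gamma)\xi\rangle \geq \langle\xi,\log\Delta(\omega'_\xi/\sigma)\xi\rangle$, which is exactly $D(\omega\|\gamma)\geq D(\omega\|\sigma)$.

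The main obstacle is technical rather than conceptual: making rigorous the quadratic-form inequality for the possibly unbounded spatial derivatives on a von Neumann algebra, in particular ensuring that one is working on a common core where both forms are defined. Since this is exactly the setting developed by Connes, Araki and Petz, this technical work is available off-the-shelf, and the proof reduces to combining anti-monotonicity of $\Delta(\omega'_\xi/\cdot)$ with operator monotonicity of $\log$ via the resolvent representation above.
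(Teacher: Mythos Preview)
The paper does not actually prove this lemma; it is stated in the appendix with a bare citation to Ref.~\onlinecite[Corollary 5.12 (ii)]{PetzBook} and used as a black box. Your sketch is essentially the standard argument underlying that reference: anti-monotonicity of the spatial derivative $\Delta(\omega'_\xi/\cdot)$ in its second argument, followed by operator monotonicity of $\log$ applied via the resolvent integral, is exactly how this inequality is established in the Araki--Petz framework. So there is nothing to compare against, and your outline is correct in spirit and in its identification of where the real work lies.

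One small caution on the final step: the passage from the form inequality $\Delta(\omega'_\xi/\sigma) \leq \Delta(\omega'_\xi/\gamma)$ to the resolvent comparison, and then to the integrated statement $\langle\xi,\log\Delta(\omega'_\xi/\sigma)\xi\rangle \leq \langle\xi,\log\Delta(\omega'_\xi/\gamma)\xi\rangle$, is not quite as automatic as ``Fubini'' suggests when the spectral integrals defining $\log\Delta$ may diverge to $-\infty$ (for instance when $\xi$ is merely in the support of $\sigma$ but the relative entropy is $+\infty$). The clean way to handle this, as in Ref.~\onlinecite{PetzBook}, is to work with the function $t\mapsto \log(t+\varepsilon)$ for $\varepsilon>0$, which is bounded below on the spectrum, obtain the inequality for each $\varepsilon$, and then let $\varepsilon\downarrow 0$ using monotone convergence on the spectral side. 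You correctly flag this as the ``technical rather than conceptual'' obstacle, and the literature you cite does supply the missing details.
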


\begin{lemma}(Ref.~\onlinecite[Proposition 5.1]{PetzBook})\label{lem:petz2}
Let $\cM$ be a von Neumann algebra, $\omega,\sigma\in\cP(\cM)$, and $c>0$. Then, we have that
\begin{align}
D(\omega\|c\cdot\sigma)=D(\omega\|\sigma)+ \omega(\idty)\log\frac{1}{c}\ .
\end{align}
\end{lemma}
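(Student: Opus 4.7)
The plan is to derive the identity directly from the definition of the quantum relative entropy via the spatial derivative (Definition~\ref{def:relative}) by tracking how the spatial derivative $\Delta(\omega'_\xi/\sigma)$ transforms under the rescaling $\sigma \mapsto c\sigma$. Concretely, I expect to show that $\Delta(\omega'_\xi/c\sigma) = c^{-1}\Delta(\omega'_\xi/\sigma)$, so that upon taking the logarithm via functional calculus and pairing with $\xi$ one picks up exactly the extra term $\omega(\idty)\log(1/c)$.

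First I would examine how the GNS construction depends on the scale of $\sigma$. For $\sigma \in \cP(\cM)$, let $(\cH_\sigma,\pi_\sigma,\xi_\sigma)$ be its GNS triple; for the rescaled functional $c\sigma$ one may take the same Hilbert space and representation but the cyclic vector $\xi_{c\sigma} = \sqrt{c}\,\xi_\sigma$, since then $(c\sigma)(a) = \langle \xi_{c\sigma},\pi_\sigma(a)\xi_{c\sigma}\rangle = c\,\sigma(a)$. Next I would examine the bounded operator $r_\sigma(\eta)$ defined in~\eqref{eq:14}: by definition $r_{c\sigma}(\eta)$ sends $x\xi_{c\sigma} = \sqrt{c}\,x\xi_\sigma$ to $x\eta$, so
\begin{align}
r_{c\sigma}(\eta) \;=\; c^{-1/2}\, r_\sigma(\eta)\,.
\end{align}
Consequently $r_{c\sigma}(\eta)r_{c\sigma}(\eta)^{*} = c^{-1}\,r_\sigma(\eta)r_\sigma(\eta)^{*}$, so the quadratic form~\eqref{eq:16} defining $\Delta(\omega'_\xi/c\sigma)$ is exactly $c^{-1}$ times the one defining $\Delta(\omega'_\xi/\sigma)$. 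Since the self-adjoint operator associated with a quadratic form depends linearly on the form, this gives the operator identity $\Delta(\omega'_\xi/c\sigma) = c^{-1}\Delta(\omega'_\xi/\sigma)$ on their common domain.

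Finally, I would apply the functional calculus to conclude $\log \Delta(\omega'_\xi/c\sigma) = \log \Delta(\omega'_\xi/\sigma) + (\log c^{-1})\idty$ on the support of $\sigma$, and pair both sides with the purifying vector $\xi$ of $\omega$:
\begin{align}
D(\omega\|c\sigma) \;=\; \langle\xi,\log\Delta(\omega'_\xi/c\sigma)\,\xi\rangle \;=\; \langle\xi,\log\Delta(\omega'_\xi/\sigma)\,\xi\rangle + \log(1/c)\,\langle\xi,\xi\rangle\,.
\end{align}
Since $\langle\xi,\xi\rangle = \omega(\idty)$ and the first term is $D(\omega\|\sigma)$ by Definition~\ref{def:relative}, the claimed identity follows. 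The case when $\xi$ is not in the support of $\sigma$ (hence not in the support of $c\sigma$ either, as the two supports coincide) is handled by the convention $D(\omega\|\sigma) = D(\omega\|c\sigma) = \infty$.

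The only delicate point is checking that the choice $\xi_{c\sigma} = \sqrt{c}\,\xi_\sigma$ yields a genuine GNS representative for $c\sigma$ and that the domain of the quadratic form~\eqref{eq:16}, namely the set of $\eta$ with $\|a\eta\|\leq c_\eta (c\sigma)(a^*a)^{1/2}$, coincides with the corresponding set for $\sigma$ up to an inconsequential constant. This is routine but is the main technical step; everything else reduces to linearity of the quadratic form and functional calculus.
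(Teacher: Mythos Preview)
Your proof is correct. The paper does not actually supply its own proof of this lemma; it is stated in the appendix with only a citation to Ref.~\onlinecite[Proposition 5.1]{PetzBook}. Your direct verification via the scaling behavior of the spatial derivative is the standard argument and matches what one finds in Petz's treatment: the key observation $\Delta(\omega'_\xi/c\sigma) = c^{-1}\Delta(\omega'_\xi/\sigma)$ follows exactly as you say from $r_{c\sigma}(\eta) = c^{-1/2}r_\sigma(\eta)$, and the rest is functional calculus. The support case is handled correctly as well, since the supports of $\sigma$ and $c\sigma$ coincide.
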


\begin{lemma} (Ref.~\onlinecite[Corollary 5.12 (iv)]{PetzBook})\label{lem:vNconv}
Let $\cM$ be a von Neumann algebra, and let $\{\cM_{i}\}_{i\in\mathbb{N}}$ be a sequence of von Neumann subalgebras of $\cM$ such that their union is weakly dense in $\cM$. If $\omega,\sigma\in\cP(\cM)$, then the increasing sequence $D(\omega_{\cM_{i}}\|\sigma_{\cM_{i}})$ converges to $D(\omega\|\sigma)$, where $\omega_{\cM_{i}}$ denotes the restriction of $\omega$ onto the subalgebra $\cM_{i}$.
\end{lemma}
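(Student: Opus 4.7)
The plan is to combine the monotonicity of the quantum relative entropy under restrictions (Lemma~\ref{lem:mono}) with a variational characterization of $D(\omega\|\sigma)$ that allows us to approximate elements of $\cM$ by elements of the dense union $\bigcup_i\cM_i$. First, for each $i\in\mathbb{N}$ the inclusion $\iota_i:\cM_i\hookrightarrow\cM$ is normal, completely positive and unital, and since $\cM_i\subseteq\cM_{i+1}\subseteq\cM$, Lemma~\ref{lem:mono} gives
\begin{align*}
D(\omega_{\cM_i}\|\sigma_{\cM_i})\,\leq\,D(\omega_{\cM_{i+1}}\|\sigma_{\cM_{i+1}})\,\leq\,D(\omega\|\sigma).
\end{align*}
Thus $L_i:=D(\omega_{\cM_i}\|\sigma_{\cM_i})$ is monotonically increasing and bounded above, so $L:=\lim_i L_i$ exists with $L\leq D(\omega\|\sigma)$. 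The remaining task is to establish the matching lower bound $L\geq D(\omega\|\sigma)$.

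For this I would invoke Kosaki's variational formula for Araki's relative entropy: for $\omega,\sigma\in\cP(\cM)$,
\begin{align*}
D(\omega\|\sigma)=\sup_{n,\,x(\cdot)}\Bigl\{\log n-\int_{1/n}^{n}\omega\bigl(y(t)^{*}y(t)\bigr)\frac{dt}{t}-\int_{1/n}^{n}\sigma\bigl(x(t)x(t)^{*}\bigr)\frac{dt}{t}\Bigr\},
\end{align*}
where $n\in\mathbb{N}$ and $x:[1/n,n]\to\cM$ ranges over step functions with finite range and $y(t)=\idty-x(t)$. The decisive property of this formula is that, thanks to the normality of $\omega$ and $\sigma$, the expression inside the supremum depends continuously on each operator value $x_k$ of the step function with respect to the strong$^{*}$ topology on bounded subsets of $\cM$.

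Given $\varepsilon>0$, fix $n$ and a step function $x$ in $\cM$ whose Kosaki expression exceeds $D(\omega\|\sigma)-\varepsilon$. By Kaplansky's density theorem, the $\sigma$-weak density of $\bigcup_i\cM_i$ in $\cM$ passes to the unit balls, so each of the finitely many values $x_k$ can be approximated in the strong$^{*}$ topology by operators in $\bigcup_i\cM_i$ of norm at most $\|x_k\|$. Since the family $\{\cM_i\}$ is increasing, a single index $i_0$ suffices to contain all these approximants, producing a step function $\tilde x$ taking values in $\cM_{i_0}$. The uniform norm bound on $\tilde x$ together with normality of $\omega,\sigma$ (and dominated convergence in $t$) ensures that the Kosaki expression for $\tilde x$ approximates the one for $x$ within another $\varepsilon$. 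Applying Kosaki's formula to the pair $(\omega_{\cM_{i_0}},\sigma_{\cM_{i_0}})$ then yields $L_{i_0}\geq D(\omega\|\sigma)-2\varepsilon$, and sending $\varepsilon\to 0$ gives $L=D(\omega\|\sigma)$.

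The main obstacle is packaging the bare $\sigma$-weak density of $\bigcup_i\cM_i$ into genuine convergence of the two integral terms in Kosaki's formula: $\sigma$-weak convergence of $\tilde x_k$ to $x_k$ does not by itself imply convergence of $\omega(y^{*}y)$ or $\sigma(xx^{*})$, since these are quadratic in the operators. This is precisely the role of Kaplansky's density theorem, which upgrades the approximation to the strong$^{*}$ topology on norm-bounded sets, under which multiplication is jointly continuous and the normal functionals behave well. An alternative, more spectral-theoretic route would be to prove strong resolvent convergence $\Delta(\omega'_\xi/\sigma_{\cM_i})\to\Delta(\omega'_\xi/\sigma)$ and justify passing to the limit inside $\langle\xi,\log(\cdot)\xi\rangle$, but the variational route outlined above cleanly avoids the technicalities of unbounded functional calculus on the spatial derivative.
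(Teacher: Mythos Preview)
The paper does not supply its own proof of this lemma; it is cited from Ohya--Petz~\cite{PetzBook}, Corollary~5.12(iv). Your argument is correct and is in fact the proof given there: monotonicity (Lemma~\ref{lem:mono}) yields both the upper bound and the monotone increase of the sequence, while Kosaki's variational formula (Theorem~5.11 in~\cite{PetzBook}) together with Kaplansky's density theorem gives the matching lower bound, exactly via strong$^{*}$ approximation of the finitely many values of a near-optimal step function. Two cosmetic points: the precise form of Kosaki's formula differs slightly from what you wrote (in the weights on the two integrands and the upper limit of integration), and the leading term should be $\omega(\idty)\log n$, which matters here since $\omega\in\cP(\cM)$ need not be normalized; also, you are tacitly using that the $\cM_{i}$ are nested, which is what lets you collect all approximants into a single $\cM_{i_{0}}$ --- this is assumed in the cited source and implicit in the lemma's phrase ``the increasing sequence''.
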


\begin{lemma}(Ref.~\onlinecite[Corollary 5.20]{PetzBook})\label{lem:petz3}
Let $\cM_{AB}=\cM_{A}\otimes\cM_{B}$ be a tensor product of von Neumann algebras, and let $\omega_{AB}\in\cS(\cM_{AB})$, $\sigma_{A}\in\cS(\cM_{A})$, as well as $\sigma_{B}\in\cS(\cM_{B})$. Then, we have that
\begin{align}
D(\omega_{AB}\|\sigma_{A}\otimes\sigma_{B})=D(\omega_{A}\|\sigma_{A})+D(\omega_{AB}\|\omega_{A}\otimes\sigma_{B})\ .
\end{align}
\end{lemma}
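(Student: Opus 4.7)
The plan is to prove the chain rule through the spatial derivative formulation of the Araki relative entropy given in Definition~\ref{def:relative}. Write $\omega_{AB}=\omega$, fix a common purifying vector $\xi$ for $\omega$, and note that $\xi$ simultaneously implements the marginal $\omega_A$ on $\cM_A$. By the definition of Araki's relative entropy, all three quantities in the identity are expressible as expectation values in $\xi$ of the self-adjoint operators $\log \Delta(\omega'_\xi / \sigma_A\otimes\sigma_B)$, $\log \Delta(\omega'_\xi / \omega_A\otimes\sigma_B)$, and $\log \Delta(\omega'_\xi / \sigma_A)$, defined on the appropriate GNS Hilbert spaces.

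The key algebraic step is multiplicativity of the spatial derivative with respect to tensor-product denominators: since $\sigma_A\otimes\sigma_B$ lives on the commuting subalgebras $\cM_A\otimes 1_B$ and $1_A\otimes\cM_B$, the operator $\Delta(\omega'_\xi/\sigma_A\otimes\sigma_B)$ factorizes as a commuting product of the partial spatial derivatives with respect to $\sigma_A$ and $\sigma_B$ individually, and likewise for $\omega_A\otimes\sigma_B$. Taking logarithms (which is legitimate because the factors commute on their common natural domain) yields the operator identity
\begin{equation}
\log\Delta\!\left(\omega'_\xi/\sigma_A\otimes\sigma_B\right) - \log\Delta\!\left(\omega'_\xi/\omega_A\otimes\sigma_B\right) = \log\Delta\!\left(\omega'_\xi\!\restriction_{\cM_A}\!\!/\sigma_A\right) - \log\Delta\!\left(\omega'_\xi\!\restriction_{\cM_A}\!\!/\omega_A\right),
\end{equation}
where the right-hand side acts only on the $A$-factor. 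The second term on the right vanishes when paired with $\xi$ since $D(\omega_A\|\omega_A)=0$. Taking expectation in $\xi$ on both sides and reading off Definition~\ref{def:relative} produces exactly
\begin{equation}
D(\omega_{AB}\|\sigma_A\otimes\sigma_B) - D(\omega_{AB}\|\omega_A\otimes\sigma_B) = D(\omega_A\|\sigma_A),
\end{equation}
which is the claimed chain rule.

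The main technical obstacle is justifying the multiplicativity/commutation of the spatial derivatives with unbounded logarithms in the fully general von Neumann algebra setting, since the operators involved need not be bounded and their domains require checking. A clean way to circumvent this is the approximation strategy enabled by Lemma~\ref{lem:vNconv}: choose increasing nets of finite-dimensional subalgebras $\cM_{A,i}\nearrow\cM_A$ and $\cM_{B,j}\nearrow\cM_B$ whose tensor products are weakly dense in $\cM_{AB}$, verify the chain rule on each $\cM_{A,i}\otimes\cM_{B,j}$ by the elementary finite-dimensional computation (expanding $-\tr[\rho\log\sigma]$ for product $\sigma$ into marginal contributions, so that the $\rho_A$-terms cancel between $D(\omega_A\|\sigma_A)$ and $D(\omega_{AB}\|\omega_A\otimes\sigma_B)$), and then pass to the limit. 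One must verify that the restrictions of $\omega_A\otimes\sigma_B$ behave consistently with the limiting procedure; the support condition under which $D(\omega_{AB}\|\sigma_A\otimes\sigma_B)<\infty$ ensures $D(\omega_A\|\sigma_A)<\infty$ (as $\omega_A$ is then dominated by a multiple of $\sigma_A$), so all three terms remain finite along the net and the limits exchange. The degenerate case where $D(\omega_A\|\sigma_A)=\infty$ is handled separately by observing that then $\omega_{AB}$ cannot be dominated by any multiple of $\sigma_A\otimes\sigma_B$ either, so both sides of the identity are $+\infty$.
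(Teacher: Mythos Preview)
The paper does not supply its own proof of this lemma; it simply cites Corollary~5.20 of Ohya--Petz. So your proposal is being compared against a reference, not against an argument in the text.

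Your spatial-derivative sketch is in the right spirit and is indeed close to how Ohya--Petz obtain the result (via the chain-rule properties of the spatial derivative and Connes' Radon--Nikodym cocycle), but as you yourself note, the unbounded operator domain issues are the entire difficulty and you have not addressed them.

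The fallback approximation argument, however, has a genuine gap: you assume one can choose increasing nets of \emph{finite-dimensional} subalgebras $\cM_{A,i}\nearrow\cM_A$ and $\cM_{B,j}\nearrow\cM_B$ whose union is weakly dense. This is precisely the hyperfiniteness (AFD) property and fails for general von Neumann algebras (e.g.\ free group factors). Lemma~\ref{lem:vNconv} allows approximation along \emph{any} weakly dense increasing net of subalgebras, but it does not guarantee that such a net can be taken finite-dimensional, so your reduction to the elementary density-matrix computation is unavailable in general.

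A smaller error: you write that $D(\omega_{AB}\|\sigma_A\otimes\sigma_B)<\infty$ implies $\omega_A$ is dominated by a multiple of $\sigma_A$. Finite relative entropy does not imply operator domination (that would be finiteness of $D_{\max}$). The conclusion $D(\omega_A\|\sigma_A)<\infty$ you want does hold, but by monotonicity (Lemma~\ref{lem:mono}) under the restriction map $\cM_{AB}\to\cM_A$, not by domination.
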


\begin{lemma}\label{lem:duality_vN}
Let $\cM_{AB}=\cB(\cH_{A})\otimes\cM_{B}$ with $\cH_{A}$ a finite-dimensional Hilbert space, $\cM_{B}$ a von Neumann algebra, and $\omega_{AB}\in\cS_{\leq}(\cM_{AB})$. Then, we have that
\begin{align}
H(A'|C)_{\omega}&=-H(A|B)_{\omega}\ ,
\end{align}
where $\omega_{A'B'C}$ is a purification $(\pi,\cK,\ket{\xi})$ of $\omega_{AB}$ with $\cM_{A'B'}=\pi(\cM_{AB})$ the principal system, and $\cM_{C}=\pi(\cM_{A'B'})'$ the purifying system.
\end{lemma}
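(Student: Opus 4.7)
The approach is to unfold both sides with Definition~\ref{def:condneumann}: writing $H(A|B)_\omega = -D(\omega_{AB}\|\tr_A\otimes\omega_B)$ and $H(A'|C)_\omega = -D(\omega_{A'C}\|\tr_{A'}\otimes\omega_C)$ reduces the claim to the zero-sum identity
\begin{equation}
D(\omega_{AB}\|\tr_A\otimes\omega_B) + D(\omega_{A'C}\|\tr_{A'}\otimes\omega_C) = 0\ .
\end{equation}
In finite dimensions this is the familiar relation $H(A|B)+H(A|C)=0$ for pure tripartite states, and the task is to lift it to arbitrary $\cM_B$ given that $\cH_A$ is finite-dimensional.

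Using Definition~\ref{def:relative} I would express both relative entropies via Araki's spatial-derivative formula with the common purifying vector $\xi$, exploiting the identifications $\cM_{AB}'=\cM_C$ and $\cM_{A'C}'\cong\pi(\cM_B)$ (which hold because $\pi(\cM_{AB})$ and $\cM_C$ are each other's commutants on $\cK$, and because the finite-dimensional factor $\cB(\cH_A)$ allows a splitting $\cK\cong\cH_{A'}\otimes\cK'$ in which $\pi(\cM_B)$ and $\cM_C$ sit as a mutually commuting pair inside $\idty\otimes\cB(\cK')$). This yields $D(\omega_{AB}\|\tr_A\otimes\omega_B)=\langle\xi,\log\Delta(\omega_C/(\tr_A\otimes\omega_B))\xi\rangle$ and an analogous expression for the $A'C$-term, so the desired cancellation becomes the operator identity
\begin{equation}
\log\Delta(\omega_C/(\tr_A\otimes\omega_B)) + \log\Delta(\omega_B/(\tr_{A'}\otimes\omega_C)) = 0
\end{equation}
on the dense domain of $\xi$. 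This should follow from Tomita--Takesaki duality for the commuting pair $(\pi(\cM_{AB}),\cM_C)$ with cyclic separating vector $\xi$ --- the modular operator of a von Neumann algebra is the inverse of that of its commutant --- combined with the self-duality of the trace on the finite-dimensional factor $\cB(\cH_A)$: writing $\tr_A=d\cdot\rho_A^{\max}$ with $d=\dim\cH_A$ and applying Lemma~\ref{lem:petz2} to absorb the scaling, the relative modular structure on $\cM_{AB}$ pairs with that on $\cM_{A'C}$ in exactly the way that makes the two logarithms cancel in expectation on $\xi$.

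The main obstacle is the rigorous manipulation of the possibly unbounded logarithms of the relative modular operators on the dense domain of $\xi$, and in particular the non-faithful case: if $\omega_B$ fails to be faithful on $\cM_B$ one first restricts to its support projection so that $\xi$ becomes cyclic and separating for the resulting pair of commuting algebras. A more robust alternative is to approximate $\omega_{AB}$ by its restrictions to $\cB(\cH_A)\otimes\cN$ for an increasing net of finite-dimensional subalgebras $\cN\subseteq\cM_B$: Lemma~\ref{lem:vNconv} then yields convergence of the relative entropies, and lower semicontinuity of $D$ together with the corresponding approximation of the commutant reduces the claim to the finite-dimensional case, where the duality $H(A|B)+H(A|C)=0$ follows immediately from $H(A|B)=H(AB)-H(B)$ and the pure-state identities $H(AB)=H(C)$, $H(B)=H(AC)$.
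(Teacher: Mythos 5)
Your reduction of the statement to a relative-entropy identity and your general strategy (express both terms via Definition~\ref{def:relative} and exploit that $\cM_{AB}'=\cM_C$ and $\cM_{A'C}'=\pi(\cM_B)$) is the same route the paper takes, but the step you declare to ``follow from Tomita--Takesaki duality combined with the self-duality of the trace'' is precisely the nontrivial content of the lemma, and your tools do not deliver it. Commutant inversion gives $\Delta\big(\omega_C/(\tr_A\otimes\omega_B)\big)^{-1}=\Delta\big((\tr_A\otimes\omega_B)/\omega_C\big)$, i.e.\ a spatial derivative in which the weight $\tr_A\otimes\omega_B$ still sits on the $\cM_{AB}$ side and $\omega_C$ on the commutant side. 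What you need is $\Delta\big(\omega_B/(\tr_{A'}\otimes\omega_C)\big)$, in which the trace over $\cH_A$ has migrated to the other side of the pairing; these are \emph{a priori} different operators, and equating them is where the work lies. The paper proves this by writing the Stinespring dilation $V$ of $\tr_A$, reducing the equality of the two spatial derivatives to an equality of the defining quadratic forms built from the operators $r_\sigma(\eta)$ of~\eqref{eq:14}, and then verifying that equality by an explicit computation on the GNS space $\mathrm{HS}(\cH_A)\otimes\cH_B$ (showing $r_{V\omega_BV^*}(cV\xi')=c$, $r_{\omega_B}(cV\xi')^*=V^*c$, and that the partial trace over $A$ collapses $r_{\omega_B}(cV\xi')\,r_{\omega_B}(cV\xi')^*$ to $cc^*$). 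Your appeal to Lemma~\ref{lem:petz2} only handles scalar rescalings of the second argument, and ``self-duality of the trace on $\cB(\cH_A)$'' is a heuristic, not an argument, for this transfer; as written, the cancellation of the two logarithms is asserted rather than proved.

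Your fallback route also has genuine gaps. First, a general von Neumann algebra $\cM_B$ need not admit an increasing net of finite-dimensional subalgebras with weakly dense union (this requires hyperfiniteness), so the reduction to the finite-dimensional identity is not available in the stated generality. Second, even when it is, restricting $\omega_{AB}$ to $\cB(\cH_A)\otimes\cN$ changes the purifying algebra: with the fixed purification $(\pi,\cK,\ket\xi)$ the complementary system becomes $\pi(\cB(\cH_A)\otimes\cN)'$, a \emph{decreasing} net shrinking to $\cM_C$ as $\cN$ increases. Lemma~\ref{lem:vNconv} only controls increasing nets (and handles $H(A|\cN)\to H(A|B)$), while lower semicontinuity of $D$ gives an inequality in the wrong direction for the complementary term; a martingale-type convergence statement for decreasing conditioning algebras would be needed and is neither among the quoted lemmas nor automatic. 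So neither branch of the proposal closes the argument without the kind of explicit spatial-derivative computation the paper carries out.
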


\begin{proof}
Let $V$ be the Stinespring dilation of the trace map, i.e., $\tr_A (x) = V^* x \otimes \idty \,V$ for $x \in \cM_A$. By the definition of the quantum relative entropy (Definition~\ref{def:relative}), the assertion would follow from the identity
\begin{align}\label{eq:5}
\Delta( \omega_{AB}' / V \omega_B V^* ) = \Delta( V \omega_B V^* / \omega_{AB}')^{-1} = \Delta( V \omega_C V^* / \omega_{AC}' )\ ,
\end{align}
where the first equality can be found in Ref.~\onlinecite[Chapter 4]{PetzBook}. However, the commutant of $\cM_{AC}$ is exactly $\cM_B$, hence $\omega_{AC}' = \omega_B$, and similarly $\omega_{AB}' = \omega_C$. We are left to show that for $c \in \cM_C$
\begin{align}\label{eq:6}
\omega_C(r_{V \omega_B V^*}(c V(\xi') ) r_{V \omega_B V^*}(c V(\xi') )^*) = \omega_C( \tr_A [ r_{\omega_B}(c V(\xi')) r_{\omega_B}(c V(\xi'))^*] )\ ,
\end{align}
for $\xi'$ the GNS vector associated with $\omega_B$. Indeed, the state $\tr_A \otimes \omega_B$ may assumed to be faithful for $\cM_{AB}$ (otherwise restrict everything to the support of $\omega_{AB}$) from which it follows that both $\cM_{AB}$ as well as $\cM_C = \cM_{AB}'$ are faithfully represented on the associated GNS Hilbert space $\cH$. It also follows that $\cH = HS(\cH_{A})\otimes \cH_B$, where $|A|$ denotes the dimension of $A$ and $\cH_B$ is the GNS Hilbert space associated to $\cM_B$ with respect to $\omega_B$. The discussion so far implies that the linear span of vectors of the form $c V(\xi')$ is dense in $\cH$ and $r_{V \omega_B V^*}(c V(\xi') ) = c$ for $c \in \cM_C$, and with this, the left hand side of~\eqref{eq:6} becomes
\begin{align}
\omega_C(r_{V \omega_B V^*}(c V(\xi') ) r_{V \omega_B V^*}(c V(\xi') )^*) = \omega_{C}(cc^{*})\ .
\end{align}
Moreover, the linear span of vectors $b \xi'$, $b \in \cM_B$ is dense in $\cH_B$, and we have
\begin{align}\label{eq:18}
r_{\omega_B}(c V(\xi')) ( b \xi' ) = b c V(\xi') =  c V(b \xi')\,
\end{align}
since the isometry $V$ just acts as tensoring with the identity in $\cM_A$ (in the Hilbert-Schmidt-picture), from which it follows that $r_{\omega_B}(c V(\xi'))^* = V^* c$. It may be checked that the action of $V^*$ on $HS(\cH_{A})$ is given by
\begin{align}\label{eq:8}
V^* = \sum_{i = 1}^{|A|} \pi_L(\bra{i}) \cdot \pi_R(\ket{i})\ ,
\end{align}
where $\pi_L$ respectively $\pi_R$ again denote the left respectively right action of some matrix on $HS_{|A|}$. The operator $\pi_{L}(\bra{i})$ is by definition an element of $\cM_C$, whereas $\pi_R(\ket{i})$ is an element of $\cM_A$. The partial trace taken on $\cM_A$ then reduces the operator $r_{\omega_B}(c V(\xi')) r_{\omega_B}(c V(\xi'))^*$ to $c c^*$, which proves the assertion.
\end{proof}

%%%%%%%%%%%%%%%%%%%%%%%%%%%%%%%%%%%%%%%%%%%%%%%%%%%%%%%%%%%%%%%%%%%%%%%%%%%%%%%%%

%\bibliography{library}
%\bibliographystyle{unsrt}

\end{document}